\newtheorem{thm}{Theorem}[section]
\newtheorem{remark}[thm]{Remark}
\newtheorem{definition}[thm]{Definition}
\newtheorem{lem}[thm]{Lemma}
\newtheorem{prop}[thm]{Proposition}
\newtheorem{claim}[thm]{Claim}
\newcommand{\mb}{\mathbb}
\newcommand{\mc}{\mathcal}
\newcommand{\lc}{\lceil}
\newcommand{\rc}{\rceil}
\def\q#1{{\mathbb{Q}}_{#1}}
\begin{document}
\title{Optimal Quaternary Locally Repairable Codes Attaining the Singleton-like Bound}

\author{Yuanxiao Xi$^1$,~ Xiangliang Kong$^2$ and Gennian Ge$^2$}

\date{}
\maketitle
\footnotetext[1]{Y. Xi ({\tt yuanxiao\_xi@zju.edu.cn}) is with the School of Mathematical Sciences, Zhejiang University, Hangzhou 310027, Zhejiang, China.}
\footnotetext[2]{X. Kong ({\tt 2160501011@cnu.edu.cn}) and G. Ge ({\tt gnge@zju.edu.cn}) are with the School of Mathematical Sciences, Capital Normal University, Beijing 100048, China. The research of G. Ge was supported by the National Natural Science Foundation of China under Grant Grant No. 11971325 and Beijing Scholars Program.}

\begin{abstract}
  \indent Recent years, several new types of codes were introduced to provide fault-tolerance and guarantee system reliability in distributed storage systems, among which locally repairable codes (LRCs for short) have played an important role.

  A linear code is said to have locality $r$ if each of its code symbols can be repaired by accessing at most $r$ other code symbols. For an LRC with length $n$, dimension $k$ and locality $r$, its minimum distance $d$ was proved to satisfy the Singleton-like bound $d\leq n-k-\lc k/r\rc+2$. Since then, many works have been done for constructing LRCs meeting the Singleton-like bound over small fields.

  In this paper, we study quaternary LRCs meeting Singleton-like bound through a parity-check matrix approach. Using tools from finite geometry, we provide some new necessary conditions for LRCs being optimal. From this, we prove that there are $27$ different classes of parameters for optimal quaternary LRCs. Moreover, for each class, explicit constructions of corresponding optimal quaternary LRCs are presented.\\

\medskip
\noindent{\it Keywords:} Locally repairable codes, parity-check matrix approach, finite geometry.\\

\smallskip
\noindent {{\it AMS subject classifications\/}:  94B60, 51E20, 94B05, 68P20.}

\end{abstract}

\section{Introduction}\label{section_1}

Modern distributed storage systems have been transitioning to erasure coding based schemes with good storage efficiency in order to cope with the explosion in the amount of data stored online. Locally Repairable Codes (LRCs) have emerged as the codes of choice for many such scenarios and have been implemented in a number of large scale systems, for examples, Windows Azure \cite{HSX2012}, Facebook's Hadoop cluster \cite{SAP2013}.

The concept of codes with locality was introduced by Gopalan et al. \cite{GHS2012}, Oggier and Datta \cite{OFD2011}, and Papailiopoulos et al. \cite{PLD2012}. The $i_{th}$ coordinate of a code is said to have locality $r$ if it can be recovered by accessing at most $r$ other coordinates. LRCs are capable of very efficient erasure recovery for the typical case in distributed storage systems where a single node fails, while still allowing the recovery of data from a larger number of erasures.

A Singleton-type bound for locally repairable codes relating its length $n$, dimension $k$, minimum distance $d$ and locality $r$ was first shown in the highly influential work \cite{GHS2012}. It states that a linear locally repairable code $\mathcal{C}$ must obey
\begin{flalign}\label{singletonbound}
d(\mathcal{C})\leq n-k-\lc\frac{k}{r}\rc+2,
\end{flalign}
which reduces to the classical Singleton bound when $r=k$. Later, the bound was generalized to vector codes and nonlinear codes in \cite{FMY2014},~\cite{PD2014}. Although it certainly holds for all LRCs, it is not tight in many cases. The tightness of bound $(\ref{singletonbound})$ was studied in \cite{SWS2014}, \cite{AZ2015}.


We say an LRC is optimal if it satisfies bound $(\ref{singletonbound})$ with equality for given parameters $n$, $k$, $d$ and $r$. Many works have been done for the constructions of optimal LRCs. For the case $(r + 1)\mid n$, LRCs are constructed explicitly in \cite{TP2016} and \cite{SR2013} by using Reed-Solomon codes and Gabidulin codes respectively. However, both constructions are built over a finite field whose size is an exponential function of the code length $n$. In \cite{TB2014}, for the same case $(r + 1)\mid n$, the authors constructed an optimal code over a finite field of size sightly greater than $n$ by using ``good'' polynomials. This construction can be extended to the case $(r+1) \nmid n$ with the minimum distance $d \ge n-k-\lc k/r\rc+1$ which is at most one less than the bound (1). In \cite{TB2016} and \cite{BT2017}, the authors generalized this idea to cyclic codes and algebraic geometry codes.

For the convenience of computer hardware implementation, LRCs over small alphabets are of particular interest. In 2016, based on a construction of quasi-random codes, Ernvall et al. \cite{EW2016} constructed optimal LRCs over a small alphabet. By studying the properties of the corresponding parity-check matrices, Hao and Xia \cite{HXP2016} gave high rate optimal LRCs with $q\ge r+2$ and minimum distances 3 and 4. Then, with the same parity-check matrix approach, Hao et al. \cite{HX2016,HXC2017} determined all possible parameters of optimal binary and ternary $r$-LRCs.


In this paper, we employ the parity-check matrix approach to study the classification for parameters of optimal $(n,k,r)$-LRCs over the quaternary field (finite field of order 4) and obtain the following main result.

\begin{thm}\label{main}
Let $r\ge 1$, $k>r$ and $d\ge 2$. There are 27 classes of optimal quaternary $(n,k,r)$ LRCs with minimum distance $d$ meeting the Singleton-like bound, whose parameters and parity-check matrices are listed as follows respectively
\begin{itemize}
\setlength{\itemsep}{2pt}
\setlength{\parsep}{0pt}
\setlength{\parskip}{0pt}

  \item $(n,k,r)=(k+\lc k/r\rc,k,r)$ with $k>r\ge 1$, $d=2$, $H$ in $(\ref{pcm_d2})$, $(\ref{H_notlow})$, and $\underline{H}$ in $(\ref{H_low})$;
  \item $(n,k,r)=(3s+3,2s+1,2)$ with $s\ge 2$, $d=3$, $H$ in $(\ref{3s+3})$;
  \item $(n,k,r)=(4s+3,3s+1,3)$ with  $s\ge 2$, $d=3$, $H$ in $(\ref{4s+3_2})$;
  \item $(n,k,r)=(5s+3,4s+1,4)$ with $2\leq s\leq 3$, $d=3$, $H$ in $(\ref{5s+3})$;
  \item $(n,k,r)=(6s+3,5s+1,5)$ with $s=2$, $d=3$, $H$ in $(\ref{6s+3})$;
  \item $(n,k,r)=(4s+4,3s+2,3)$ with  $s\ge 2$, $d=3$, $H$ in $(\ref{4s+4})$;
  \item $(n,k,r)=(r+4,r+1,r)$ with  $2\leq r\le 15$, $d=3$, $H$ in $(\ref{pcm_hamming_code_1_3})$;
  \item $(n,k,r)=(r+5,r+2,r)$ with  $3\leq r\leq 15$, $d=3$, $H$ in $(\ref{pcm_hamming_code_1_4})$;
  \item $(n,k,r)=(r+6,r+3,r)$ with  $6\leq r\leq 15$, $d=3$, $H$ in $(\ref{pcm_hamming_5})$;
  \item $(n,k,r)=(4s+4,3s+1,3)$ with  $s\ge 2$, $d=4$, $H$ in $(\ref{4s+4_3s+1})$;
  \item $(n,k,r)=(5s+4,4s+1,4)$ with  $s\ge 2$, $d=4$, $H$ in $(\ref{5s+4_4s+1})$;
  \item $(n,k,r)=(6s+4,5s+1,5)$ with  $s\ge 2$, $d=4$, $H$ in $(\ref{6s+4_5s+1_1})$, $(\ref{6s+4_5s+1_2})$, $(\ref{6s+4_5s+1_3})$;
  \item $(n,k,r)=(7s+4,6s+1,6)$ with $2\leq s\leq 3$, $d=4$, $H$ in $(\ref{7s+4})$;
  \item $(n,k,r)=(8s+4,7s+1,7)$ with $s=2$, $d=4$, $H$ in $(\ref{8s+4})$;
  \item $(n,k,r)=(5s+5,4s+2,4)$ with  $s\ge 2$, $d=4$, $H$ in $(\ref{5s+5_4s+2})$;
  \item $(n,k,r)=(6s+5,5s+2,5)$ with  $s\ge 2$, $d=4$, $H$ in $(\ref{6s+5_5s+2})$;
  \item $(n,k,r)=(7s+5,6s+2,6)$ with $s=2$, $d=4$, $H$ in $(\ref{7s+5})$;
  \item $(n,k,r)=(6s+6,5s+3,5)$ with  $s\ge 2$, $d=4$, $H$ in $(\ref{6s+6_5s+3})$;
  \item $(n,k,r)=(2s+2,s,1)$ with  $s\ge 2$, $d=4$, $H$ in $(\ref{2s+2_s})$;
  \item $(n,k,r)=(r+5,r+1,r)$ with  $2\leq r\leq 11$, $d=4$, $H$ in $(\ref{r+5_r+1})$;
  \item $(n,k,r)=(r+6,r+2,r)$ with  $3\leq r\leq 11$, $d=4$, $H$ in $(\ref{r+6_r+2})$;
  \item $(n,k,r)=(r+7,r+3,r)$ with  $5\leq r\leq 7$, $d=4$, $H$ in $(\ref{r+7_r+3})$;
  \item $(n,k,r)=(2r+6,2r+1,r)$ with $2\leq r\leq 7$, $d=4$, $H_i~(2\leq i\leq 7)$ in $(\ref{2r+6_2r+1_1})$, $(\ref{2r+6_2r+1_2})$, $(\ref{2r+6_2r+1_3})$, $(\ref{2r+6_2r+1_4})$;
  \item $(n,k,r)=(2k+4,k,1)$ with  $k=2,3$, $d=6$, $H$ in $(\ref{2k+4})$;
  \item $(n,k,r)=(2k+6,k,1)$ with  $k=2,3$, $d=8$, $H$ in $(\ref{2k+6})$;
  \item $(n,k,r)=(n,k,k-1)$ with  $3\le k\le 6$, $5\le n-k\le 6$, $d=n-k$, $H$ in $(\ref{near_mds})$;
  \item $(n,k,r)=(3s+6,2s+1,2)$ with  $s=2,3$, $d=6$, $H$ in $(\ref{3s+6_1})$, $(\ref{3s+6_2})$;
\end{itemize}

\end{thm}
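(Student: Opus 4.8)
The plan is to prove Theorem \ref{main} by translating the optimality condition for an $(n,k,r)$-LRC meeting the Singleton-like bound into a combinatorial-geometric condition on its parity-check matrix, and then to carry out an exhaustive case analysis over the quaternary field. First I would set up the parity-check matrix framework: for a linear code with an $[n-k]\times n$ parity-check matrix $H$, the minimum distance $d$ equals the smallest number of linearly dependent columns, and the locality-$r$ property forces $H$ to contain, after row operations, a block structure in which each coordinate lies in a ``local group'' of size at most $r+1$ supporting a codeword of a local code of minimum distance $2$ (i.e. each local group contributes at least one row of $H$ whose support lies inside that group). Optimality $d = n-k-\lceil k/r\rceil+2$ then means $n-k = d-2+\lceil k/r\rceil$, which tightly constrains how many ``global'' parity rows remain once the $\lceil k/r\rceil$ local rows are accounted for; I would make this precise following the approach of Hao et al.\ \cite{HX2016,HXC2017}, extracting necessary conditions on the multiset of columns of $H$ viewed as points in the projective space $\mathrm{PG}(n-k-1,4)$.

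Next I would develop the finite-geometry necessary conditions. Grouping the columns of $H$ by local group, each local group of size $r+1$ spans a subspace and, after reducing modulo the local parities, collapses to a small configuration of points; the requirement that no $d-1$ columns of $H$ are linearly dependent becomes a statement that certain unions of these point-configurations are in ``general enough'' position. The key quantitative input is that over $\mathbb{F}_4$ the number of points in $\mathrm{PG}(m,4)$ is $(4^{m+1}-1)/3$, arcs and caps have bounded size, and in low dimension the possible configurations are completely classifiable. I would prove lemmas bounding, for each target distance $d\in\{2,3,4,6,8\}$ and each codimension $n-k$, how large $n$ and $k$ can be — these bounds are exactly what cuts the a priori infinite parameter space down to the finitely many families and finitely many sporadic parameters listed. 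This is the step where the constraints $r\le 15$, $r\le 11$, $s\le 3$, $k\in\{2,3\}$, etc., all emerge: each ceiling comes from an extremal result in $\mathrm{PG}(n-k-1,4)$ (maximal arcs, the size of a cap, Maximum Distance Separable / near-MDS constraints, ovoids).

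Having pinned down the complete list of candidate parameter tuples, the final part is the converse: for each of the $27$ classes I would exhibit an explicit parity-check matrix $H$ (the displayed matrices $(\ref{pcm_d2})$ through $(\ref{3s+6_2})$ referenced in the statement) and verify (a) it has the block structure witnessing locality $r$, and (b) its columns satisfy the appropriate general-position property so the code achieves minimum distance exactly $d$, hence meets \eqref{singletonbound} with equality. Most of these verifications are routine linear algebra over $\mathbb{F}_4$ once the matrices are written down — checking that specified sets of columns are independent — and I would organize them by distance and by the structural ``type'' of the construction (repeated MDS blocks, Hamming-type global parts, near-MDS tails) to keep the bookkeeping manageable.

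The main obstacle I expect is the tightness of the geometric bounds: proving the negative results (that no optimal code exists outside the listed families) requires, for each $d$ and each small value of $n-k$, an essentially tight classification of point configurations in $\mathrm{PG}(n-k-1,4)$ under the combined constraints coming from all local groups simultaneously, and the interaction between several local groups of possibly unequal sizes is delicate — a naive union bound is too weak to get the exact thresholds, so one must exploit fine structure (e.g.\ which short dependencies are forced, divisibility of column multiplicities, and the precise list of $(r+1,k_{\mathrm{loc}})$ MDS-type local codes available over $\mathbb{F}_4$). Getting these thresholds exactly right, rather than up to an additive constant, is what makes the classification complete, and it is the technical heart of the paper.
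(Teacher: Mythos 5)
Your plan matches the paper's proof in all essentials: both rest on the parity-check matrix decomposition into locality-rows plus global rows, both derive the finite list of parameters from finite-geometry and MDS-type constraints over $\mathbb{F}_4$, and both finish with explicit matrices verified by linear algebra. The one point worth flagging is that the paper's concrete engine for the nonexistence thresholds is less the cap/arc bounds in $\mathrm{PG}(n-k-1,4)$ (those, via Lemma~\ref{keylemma}, carry only the $d\ge 5$ and a few $d=4$ subcases) and more the reduction of Lemma~\ref{abount_H'}/Proposition~\ref{prop01} — deleting \emph{any} $\lceil k/r\rceil-1$ locality-rows must leave an MDS code of the same distance $d$, and exploiting the arbitrariness of that deletion together with the MDS weight distribution is what forces the exact intersection patterns of the repair groups and hence the precise ceilings on $r$ and $s$ — which is the sharpened form of the "MDS/near-MDS constraints" you gesture at.
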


As a by-product, we also derive some necessary conditions to determine whether the optimal LRCs satisfying the parameter requirements exist by the finite geometry approach and classical coding theory.

The rest of the paper is organized as follows. In Section \ref{section_2}, we present some notations, definitions about LRCs and some known results on parity-check matrix approaches. We also include some notations and results from finite geometry. In Section \ref{section_3}, we give some new necessary conditions for the existence of optimal LRCs via the finite geometry approach. In Section \ref{section_4}, we prove Theorem \ref{main} by determining all the possible parameters that optimal quaternary LRCs can have. Some concluding remarks are included in Section \ref{section_5}.

\section{Preliminary}\label{section_2}
\subsection{Coding theory and locally repairable codes}
Firstly, we introduce some notations and terminologies which will be frequently used throughout the paper.
\begin{itemize}
  \item [1.] Let $q$ be a prime power, and $\mathbb{F_q}$ be a finite field with $q$ elements and $\mathbb{F}_q^\ast$ be the corresponding multiplicative group.
  \item [2.] Given a positive integer $n$, denote $[n]=\{1,2,\ldots,n\}$. For any vector $\mathbf{a}=(a_1,a_2,\ldots,a_n)\in \mathbb{F}_q^{n}$, let ${\rm supp}(\mathbf{a})=\{i\in [n] : a_i\neq 0\}$ and $wt(\mathbf{a})=|{\rm supp}(\mathbf{a})|$. Given $i\in[n]$, the $i_{th}$ column is said to \emph{be covered by} the vector $\mathbf{a}$ if $i\in {\rm supp}(\mathbf{a})$. Given a set $A\subseteq\mathbb{F}_q^n$, the $i_{th}$ column is said to \emph{be covered by} $A$, if there exists $\mathbf{a}\in A$, such that the $i_{th}$ is covered by $\mathbf{a}$.
  \item[3.] Consider two different vectors $\mathbf{a}$ and $\mathbf{b}$, the \emph{Hamming distance} $d_{H}(\mathbf{a},\mathbf{b})$ is the number of coordinates at which they differ. For a code $\mathcal{C}\subseteq \mathbb{F}_q^{n}$, the minimum distance $d$ is the minimum value of distance between any two different codewords in $\mathcal{C}$.
  \item[4.] $\mathcal{C}$ is said to be an $[n,k,d]_q$ code if $\mathcal{C}$ is a linear code over $\mathbb{F}_q$ that has length $n$, dimension $k$, and minimum distance $d$. Usually we omit $q$ if $q$ is known. Given an $[n,k,d]$ linear code $\mathcal{C}$, its generator matrix is a $k\times n$ matrix $G$ whose rows form a basis for $\mathcal{C}$ and its parity-check matrix $H$ is an $(n-k)\times n$ matrix satisfying $GH^T=0$. Denote $\mathcal{C}^\bot$ as the dual code of $\mathcal{C}$, then the rows of $H$ form a basis for $\mathcal{C}^\bot$.
  \item[5.] Given matrices $A$ and $B$, denote $A\otimes B$ as the Kronecker product of $A$ and $B$.
\end{itemize}


To present the formal definition of locally repairable codes, first, we need the concept of \emph{locality} for the symbol of codewords.

\begin{definition}\cite{GHS2012}\label{defn}
For $i\in[n]$, a code symbol $c_i$ of an $[n,k,d]$ linear code $\mathcal{C}$ is said to have locality $r$ if there exists a subset $R_i\subset[n]\setminus\{i\}$, $|R_i|\leq r$ such that $c_i$ can be recovered from the code symbols indexed by $R_i$, i.e.
$$c_i=\sum\limits_{i\in R_i}\lambda_{i,j}c_j$$
for some non-zeros $\lambda_{i,j}$ from the underlying field.
Equivalently, there exists a codeword $\textbf{h}_i$ in the dual code $\mathcal{C}^\bot$ such that $i\in supp(\textbf{h}_i)$ and $wt(\textbf{h}_i)\leq r+1$.
\end{definition}

Now, we give the definition of locally repairable codes and optimal locally repairable codes we consider in this paper.

\begin{definition}\cite{GHS2012}
For positive integers $n$, $k$ and $r$, an $(n,k,r)$ locally repairable code (LRC) is an $[n,k]$ linear code with locality $r$ for all symbols, i.e., each code symbol can be repaired by accessing at most $r$ other code symbols. Besides, the LRCs are called optimal if they meet the Singleton-like bound (\ref{singletonbound}).
\end{definition}



For a linear code $\mathcal{C}$, there is a simple equivalency between the minimum distance $d$ and the linear dependency of columns of the parity-check matrix $H$.

\begin{lem}\cite[Corollary 1.4.14]{H2010}
A linear code has minimum distance $d$ if and only if its parity-check matrix $H$ has a set of $d$ linearly dependent columns but no set of $d-1$ linearly dependent columns.
\end{lem}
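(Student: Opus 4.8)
The final statement is the well-known characterization of minimum distance via the parity-check matrix. I will sketch a proof of exactly this lemma.

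\medskip

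\noindent\textbf{Proof proposal.} The plan is to prove the equivalence through the fundamental relation between codewords of $\mathcal{C}$ and linear dependencies among the columns of its parity-check matrix $H$. Writing $H=(\mathbf{h}_1,\mathbf{h}_2,\ldots,\mathbf{h}_n)$ in terms of its columns, the defining property $\mathbf{c}\in\mathcal{C}\iff H\mathbf{c}^T=\mathbf{0}$ can be rewritten as $\sum_{i=1}^{n}c_i\mathbf{h}_i=\mathbf{0}$. Thus a codeword $\mathbf{c}$ of weight $w$ with support $S=\{i_1,\ldots,i_w\}$ corresponds exactly to a nontrivial linear relation $\sum_{j=1}^{w}c_{i_j}\mathbf{h}_{i_j}=\mathbf{0}$ among the $w$ columns indexed by $S$, with all scalars $c_{i_j}$ nonzero. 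Conversely, any nontrivial dependence among a set of columns yields a codeword supported on (a subset of) those column indices. The first step is to record this correspondence carefully, noting that the number of linearly dependent columns involved equals the Hamming weight of the corresponding codeword.

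\medskip

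\noindent Next I would translate the two directions of the biconditional. For one direction, suppose $\mathcal{C}$ has minimum distance $d$. Since $\mathcal{C}$ is linear, the minimum distance equals the minimum nonzero weight, so there exists a nonzero codeword $\mathbf{c}$ with $wt(\mathbf{c})=d$; by the correspondence above this produces a set of $d$ linearly dependent columns of $H$. Moreover, if some set of $d-1$ columns were linearly dependent, the associated relation would give a nonzero codeword of weight at most $d-1$, contradicting minimality of $d$. Hence $H$ has $d$ linearly dependent columns but no set of $d-1$ linearly dependent columns. For the converse direction, assume $H$ possesses a set of $d$ linearly dependent columns but no $d-1$ such columns; running the correspondence backward shows there is a nonzero codeword of weight at most $d$, while the absence of $d-1$ dependent columns forbids any nonzero codeword of weight $d-1$ or less, so the minimum nonzero weight is exactly $d$, and again using linearity this equals the minimum distance.

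\medskip

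\noindent The argument is elementary and largely bookkeeping; the only point requiring care is the repeated use of the identity \emph{minimum distance $=$ minimum nonzero weight} for linear codes, which follows from $d_{H}(\mathbf{a},\mathbf{b})=wt(\mathbf{a}-\mathbf{b})$ together with the closure of $\mathcal{C}$ under subtraction. I expect the main (though still minor) obstacle to be stating the correspondence with the correct quantifiers, namely ensuring that ``a set of $d$ linearly dependent columns'' is matched to a codeword whose weight is \emph{exactly} $d$ rather than merely at most $d$, and handling the fact that a dependence on $d$ columns might have some zero coefficients and thus actually witness a smaller support. This is resolved by always passing to a minimal dependent set, so that all coefficients in the relation are nonzero and the weight coincides exactly with the number of columns involved.
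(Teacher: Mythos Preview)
Your proof is correct and is the standard argument for this classical result. Note that the paper itself does not give a proof: the lemma is simply quoted from \cite[Corollary 1.4.14]{H2010} without argument, so there is no ``paper's own proof'' to compare against. Your write-up supplies exactly the elementary correspondence between nonzero codewords and column dependencies that underlies the cited reference.
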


In \cite{HXP2016}, Hao and Xia use a parity-check approach for constructing LRCs: By Definition \ref{defn}, one can select $n-k$ specific codewords from $\mathcal{C}^\bot$ to form the parity-check matrix $H$. $H$ is divided into two parts:
\begin{flalign}\label{WLOG}
H=\left(\begin{array}{c} H_1\\H_2\end{array}\right),
\end{flalign}
where $H_1$ is an $l\times n$ matrix, $H_2$ is an $(n-k-l)\times n$ matrix. Rows in $H_1$, or $locality$-$rows$, cover all the columns to ensure the locality. As the assistant part, rows in the lower part $H_2$ ensure the minimum distance. The procedures for constructing such $H$ are described in Algorithm 1 in \cite{HXP2016}. Moreover, they also show that
\begin{flalign}\label{inequality}
\lc k/r\rc\leq\lc n/(r+1)\rc\leq l\leq n-k.
\end{flalign}

 In this paper, we also use the parity-check matrix approach for our construction. For optimal LRCs, we have the following property for their parity-check matrices.

\begin{lem}\cite{HXP2016}\label{r_mid_k}
For an $(n,k,r)$ LRC with $d=n-k-\lc k/r\rc+2$, suppose $r\mid k$, then $(r+1)\mid n$ and the supports of the locality-rows in the parity-check matrix must be pairwise disjoint, and each has weight exactly $r+1$.
\end{lem}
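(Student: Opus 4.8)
The plan is to establish the three assertions---$(r+1)\mid n$, pairwise disjoint supports of the locality-rows, and each locality-row having weight exactly $r+1$---by a counting argument that plays the Singleton-like bound $d=n-k-\lceil k/r\rceil+2$ against the structural inequality $(\ref{inequality})$. Since $r\mid k$, we have $\lceil k/r\rceil = k/r$, so optimality reads $d = n-k-k/r+2$, hence $n-k = d-2+k/r$. First I would invoke $(\ref{inequality})$: the number $l$ of locality-rows satisfies $\lceil k/r\rceil \le \lceil n/(r+1)\rceil \le l \le n-k$. The strategy is to show that in fact $l = k/r = \lceil n/(r+1)\rceil$, which already forces $(r+1)\mid n$ once we also know $\lceil n/(r+1)\rceil = n/(r+1)$; and then to show that equality in the covering/weight count is only possible when the locality-rows are disjoint and each has full weight $r+1$.

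Next I would run the covering count on $H_1$. The $l$ locality-rows each have weight at most $r+1$ (by Definition \ref{defn}) and together they must cover all $n$ columns, so $n \le l(r+1)$, i.e. $l \ge n/(r+1)$, giving $l \ge \lceil n/(r+1)\rceil$ (recovering part of $(\ref{inequality})$). To get the reverse bound I would argue that $l$ cannot be too large: if $l > k/r$, then the lower part $H_2$ has only $n-k-l < n-k-k/r = d-2$ rows, and I would derive a contradiction with the minimum distance by exhibiting $d-1$ linearly dependent columns of $H$ (equivalently, a codeword of $\mathcal{C}$ of weight $\le d-1$). Concretely, pick any one locality-row $\mathbf{h}$; its support has size $\le r+1 \le $ something controllable, and one can extend a suitable small set of columns supported inside the rows already "used up" so that the total rank of the relevant submatrix of $H$ drops---the standard way to see this is that a parity-check matrix with $n-k-l$ "free" rows beneath $l$ locality-rows of bounded weight cannot simultaneously have all $(d-1)$-subsets of columns independent unless $n-k-l \ge d-2$. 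This pins down $l = n-k-d+2 = k/r$.

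With $l = k/r$ established, I would combine $l = k/r$ with $l \ge \lceil n/(r+1)\rceil \ge n/(r+1)$ to get $n \le l(r+1) = (k/r)(r+1)$, but also---and this is the crux---I would show $n \ge l(r+1)$ as well, so that $n = l(r+1)$, which immediately yields $(r+1)\mid n$ and $l = n/(r+1)$. The inequality $n \ge l(r+1)$ should come again from the minimum-distance requirement: if the $l$ locality-rows covered the $n$ columns with any "slack"---either because two of them share a column or because one of them has weight $< r+1$---then the total number of columns they can cover is strictly less than $l(r+1)$, forcing $n < l(r+1)$, i.e. $n \le l(r+1)-1$; I would then show this makes $n-k = d-2+k/r$ incompatible with the dependency structure needed for distance exactly $d$, again by producing $\le d-1$ dependent columns using the "wasted" capacity. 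The disjointness and the exact weight $r+1$ then fall out simultaneously: equality $n = l(r+1) = \sum_{i} wt(\mathbf{h}_i^{(1)})$ where each summand is $\le r+1$ and the supports cover a set of size $n = \sum (r+1)$ forces every support to have size exactly $r+1$ and forces the union to be disjoint (no column covered twice). The main obstacle I anticipate is the minimum-distance contradiction step---making precise exactly which $d-1$ columns become dependent when there is covering slack or when $l$ is too large; this requires carefully bookkeeping the ranks of the $H_1$- and $H_2$-blocks and may need a short case analysis, but it is the standard "generalized Singleton" pigeonholing and I expect the argument already appears in \cite{HXP2016}.
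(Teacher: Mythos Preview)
The paper does not prove this lemma; it is cited from \cite{HXP2016} without proof, so there is no in-paper argument to compare against. I will therefore assess the proposal on its own terms.

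Your strategy hinges on proving that $l = k/r$, and this is where the argument breaks. It is simply not true that an optimal LRC with $r\mid k$ must have exactly $k/r$ locality-rows. Look at the paper's own construction in $(\ref{2s+2_s})$: there $r=1$, $k=s$, $d=4$, $n=2s+2$, and the parity-check matrix has $l=s+1$ locality-rows (the block $I_{s+1}\otimes(1\ 1)$), so $l=k/r+1$. The lower part $H_2$ has only one row, which is $<d-2=2$, yet the code has minimum distance $4$. Your proposed contradiction---``if $l>k/r$ then $H_2$ has fewer than $d-2$ rows, so we can find $d-1$ dependent columns''---therefore cannot work: the locality-rows themselves contribute to the column rank, and there is no reason the distance should be governed solely by the number of rows in $H_2$.

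Because $l=k/r$ is false, the subsequent chain $n\le l(r+1)=(k/r)(r+1)$ and $n\ge l(r+1)$ collapses. The correct route does not pin down $l$; instead it works with the deletion lemma (Lemma~\ref{abount_H'} in this paper, or its analogue in \cite{HXP2016}). Delete any $\lceil k/r\rceil-1=k/r-1$ locality-rows together with the columns they cover; the resulting $H'$ is the parity-check matrix of an MDS code with $m'=d-1$ rows and $d'=d$. On one hand the deleted rows cover at most $(k/r-1)(r+1)$ columns, so $n'\ge n-(k/r-1)(r+1)=r+d-1$ (using $n=k+k/r+d-2$). On the other hand the remaining locality-row(s) in $H'$ are nonzero codewords of the dual MDS code, hence have weight $\ge n'-d+2$, while also having weight $\le r+1$; this forces $n'\le r+d-1$. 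Equality throughout means the deleted $k/r-1$ rows had pairwise disjoint supports of weight exactly $r+1$; since the choice of which $k/r-1$ rows to delete was arbitrary, every locality-row has this property. Then $(r+1)\mid n$ follows because the disjoint supports of weight $r+1$ partition $[n]$.
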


The classical Singleton bound \cite{H2010} says that $d\leq n-k+1$ and $[n,k,d]$ codes $\mathcal{C}$ meeting this bound are called \emph{maximum distance separable (MDS) codes}. Besides, codes with parameters $[n,k,n-k]$ are called \emph{almost maximum distance separable (AMDS) codes}.

The following two lemmas describe some substructures of the parity-check matrix of an optimal LRC, and build connections between the optimal LRC and the (almost) MDS code.
\begin{lem}\cite{HX2016}\label{abount_H'}
Let $\mathcal{C}$ be an optimal $(n,k,r)$ LRC with minimum distance $d$ over $\mathbb{F}_q$. Let $H'$ be the $m'\times n'$ matrix obtained from $H$ by deleting any $\lc k/r\rc-1$ locality-rows and all the columns they covered. Then $H'$ has full rank and the $[n',k',d']$ linear code $\mc{C}'$ with parity-check matrix $H'$ is an MDS code with $d'=d$.
\end{lem}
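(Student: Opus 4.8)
The plan is to argue that the minimum distance $d' = d$ and that $\mathcal{C}'$ achieves the classical Singleton bound, so that $\mathcal{C}'$ is automatically MDS; full rank of $H'$ follows from this. First I would record the effect on the parameters of deleting $\lceil k/r\rceil - 1$ locality-rows together with the columns they cover. Let $H$ have the form $(\ref{WLOG})$ with $l$ locality-rows. Deleting $\lceil k/r\rceil - 1$ of these rows removes those $\lceil k/r\rceil - 1$ rows and, say, $c$ columns (the columns covered by the deleted rows, which no surviving row need cover). Then $H'$ is $m'\times n'$ with $m' = (n-k) - (\lceil k/r\rceil - 1)$ and $n' = n - c$; writing $k' = n' - m'$ (assuming for the moment $H'$ has full rank) gives $k' = n - c - (n-k) + \lceil k/r\rceil - 1 = k - c + \lceil k/r\rceil - 1$, hence $n' - k' = m' = n - k - \lceil k/r\rceil + 1 = d - 1$ using the optimality hypothesis $d = n-k-\lceil k/r\rceil + 2$. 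So the target is to show $d' = d = n' - k' + 1$, i.e. $\mathcal{C}'$ is MDS.

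The key step is the lower bound $d' \ge d$. By the column-dependency criterion (the lemma of \cite[Corollary 1.4.14]{H2010} quoted above), it suffices to show that no $d-1$ columns of $H'$ are linearly dependent. Suppose to the contrary that some set $S$ of $d-1$ columns of $H'$ is dependent. These columns of $H'$ are restrictions of columns of $H$ to the surviving rows. The idea is to "lift" this dependency back to $H$: take the corresponding $d-1$ columns of the full matrix $H$, and observe that the only rows on which they could fail to be dependent are the $\lceil k/r\rceil - 1$ deleted locality-rows. Now adjoin to $S$ the set of all columns covered by those $\lceil k/r \rceil - 1$ deleted locality-rows; each such locality-row has weight at most $r+1$, so this adds at most $(\lceil k/r\rceil - 1)(r+1)$ columns, but more carefully one uses the structure established in Lemma \ref{r_mid_k} and $(\ref{inequality})$ to bound the number of genuinely new columns so that the enlarged set has size at most $d-1 + (\text{something}) \le n - k' $... the clean way is: the enlarged column set $S'$ has the property that within $H$ its columns are linearly dependent (the deleted locality-rows, restricted to $S'$, are each supported inside $S'$ and the relation on $S$ extends by zeros), and $|S'| \le d - 1 + (n - n') = d-1+c$. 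Since $d - 1 + c$ counted against $\mathcal{C}$ still must be $\le d(\mathcal{C}) - 1 = n-k-\lceil k/r\rceil + 1$? — this is where care is needed; the correct accounting is that the $c$ extra columns together with the $\lceil k/r\rceil-1$ deleted rows form $\lceil k/r\rceil - 1$ blocks each contributing, by Lemma \ref{r_mid_k}-type reasoning, a full-rank piece, so that a dependency among $|S'|$ columns of $H$ forces a dependency among the $d-1$ original columns seen in $\mathcal{C}$ of weight $< d(\mathcal{C})$, a contradiction. Hence $d' \ge d$.

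For the reverse inequality and for MDS-ness: since $d' \ge d = n' - k' + 1 \ge d'$ by the classical Singleton bound applied to $\mathcal{C}'$ (once we know $\dim \mathcal{C}' = k'$), all inequalities are equalities, so $d' = d$ and $\mathcal{C}'$ is MDS; in particular $\mathcal{C}'^\perp$ has dimension $m' = n' - k'$, i.e. $H'$ has full row rank, which also retroactively justifies the dimension count $k' = n' - m'$. The main obstacle I anticipate is making the "lifting" bookkeeping in the previous paragraph fully rigorous: one must verify that deleting a locality-row and its covered columns cannot create a short dependency, which really rests on the disjointness/weight-$(r+1)$ structure of optimal LRCs (Lemma \ref{r_mid_k} and the chain $(\ref{inequality})$) — equivalently, that each deleted block behaves like an independent single parity check on a disjoint column set. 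Once that structural fact is in hand, the distance argument and the rank/dimension conclusions are routine.
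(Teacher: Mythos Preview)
Your argument for the lower bound $d' \ge d$ contains a genuine gap, and in fact you are missing the one-line observation that makes the whole thing trivial. By definition, a column is \emph{covered} by a locality-row exactly when that row has a nonzero entry in that column. Hence after deleting the chosen $\lceil k/r\rceil - 1$ locality-rows \emph{and all columns they cover}, every surviving column of $H$ has a zero entry in each of the deleted row positions. Consequently, any linear dependency among a set $S$ of columns of $H'$ is already a dependency among the corresponding columns of the full matrix $H$: nothing can ``fail'' on the deleted rows, because those rows are identically zero on the surviving columns. This immediately gives $d' \ge d$, with no need to adjoin extra columns or to invoke any disjointness structure. (This is exactly the mechanism the paper uses in its proof of the companion Lemma~\ref{abount_H''}: ``since the elements not in $H''$ are all zero, we have $d \le d''$.'')

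Your attempted route --- adjoining the deleted columns and appealing to Lemma~\ref{r_mid_k} for a disjoint block structure --- does not work in general: Lemma~\ref{r_mid_k} requires $r \mid k$, so the disjointness you rely on is simply unavailable for most parameters, and your own bookkeeping (``this is where care is needed'') never closes. Once $d' \ge d$ is established correctly, your Singleton-bound squeeze is fine: $d' \le \operatorname{rank}(H') + 1 \le m' + 1 = d$, forcing $d' = d$, $\operatorname{rank}(H') = m'$, and $\mathcal{C}'$ MDS. So the upper-bound half and the rank conclusion are correct; only the lower-bound step needs to be replaced by the trivial lifting above.
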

\begin{lem}\label{abount_H''}
Let $\mathcal{C}$ be an optimal $(n,k,r)$ LRC with dimension $\lc k/r\rc \ge 2$ and minimum distance $d$ over $\mathbb{F}_q$ and . Let $H''$ be the $m''\times n''$ matrix obtained from $H$ by deleting any $\lc k/r\rc-2$ locality-rows and all the columns they covered. Then the $[n'',k'',d'']$ linear code $\mc{C}''$ with $H''$ as parity-check matrix is an almost MDS code with $d''=d$.
\end{lem}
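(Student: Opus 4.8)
The plan is to mirror the proof of Lemma~\ref{abount_H'}, but now delete $\lc k/r\rc-2$ locality-rows instead of $\lc k/r\rc-1$, so that exactly two locality-rows remain in $H''$, and to leverage the optimality of $\mathcal{C}$ to pin down the minimum distance $d''$ of the residual code. First I would set up the bookkeeping: by Lemma~\ref{r_mid_k} and the discussion around \eqref{inequality}, each of the $\lc k/r\rc$ locality-rows covers a block of columns, and deleting $\lc k/r\rc-2$ of them together with all the columns they cover removes a controlled number of columns and a controlled amount of dimension. Concretely, if the deleted locality-rows have supports of sizes (roughly) $r+1$ each and are arranged as in the optimal construction, then $n'' = n - (\text{number of deleted columns})$ and $k'' = n'' - m''$ where $m'' = (n-k) - (\lc k/r\rc - 2)$; one should check $H''$ has full row rank (this follows as in Lemma~\ref{abount_H'} since the locality-rows we keep/delete are, up to the structure forced by optimality, in ``staircase'' position). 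The arithmetic here is routine given the earlier lemmas, but one must be careful about the non-divisible case $r\nmid k$, where the last locality-block is smaller; the bound \eqref{inequality} controls $l$ and hence keeps the counting consistent.

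The heart of the argument is the distance computation. On one hand, $d'' \le d$: any codeword of $\mathcal{C}$ of weight $d$ whose support avoids the deleted columns (or can be arranged to do so by choosing which locality-rows to delete) descends to a codeword of $\mathcal{C}''$ of weight $\le d$; more carefully, one argues via the parity-check characterization that a minimal linearly dependent set of $d$ columns of $H$ can be chosen to survive the deletion, exactly as in the proof of Lemma~\ref{abount_H'}. On the other hand, $d'' \ge d - ?$: deleting rows from a parity-check matrix cannot increase the minimum distance arbitrarily, but here the point is that $\mathcal{C}$ is \emph{optimal}, i.e. $d = n-k-\lc k/r\rc+2$. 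Plugging the values of $n''$ and $k''$ into the Singleton-like bound (now with $\lc k''/r\rc = 2$, since only two locality-rows remain) gives $d \le d'' \le n'' - k'' + 1$ for the MDS bound on $\mathcal{C}''$; combining $d''=d$ from the upper bound with the optimality equality forces $d'' = n'' - k''$, i.e. $\mathcal{C}''$ is almost MDS rather than MDS. The key identity to verify is that $n'' - k'' = d$ exactly: since $k''=n''-m''$, this reads $m'' = d$, and $m'' = (n-k)-(\lc k/r\rc-2) = (n-k)-\lc k/r\rc+2 = d$ by optimality. So the whole statement reduces to this one-line arithmetic once the structural setup (full rank of $H''$, and $d''=d$ via the column-dependency argument) is in place.

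I expect the main obstacle to be the lower bound $d'' \ge d$, equivalently showing $H''$ has no set of $d-1$ linearly dependent columns. Deleting locality-rows could in principle create new dependencies among the surviving columns. The resolution should be: any $d-1$ columns of $H''$ lift to $d-1$ columns of $H$; if they were dependent in $\mathcal{C}''$ but not in $\mathcal{C}$, the dependency in $H$ must involve the deleted rows, but one can then absorb the deleted columns (those covered by the removed locality-rows) back in using the locality structure — each deleted locality-row lets us ``re-express'' its covered columns — producing a dependency of at most $d-1$ columns in $H$ after accounting, contradicting $d(\mathcal{C})=d$. Making this absorption precise, and checking it does not blow the column count past $d-1$, is the delicate point; it parallels the corresponding step in \cite{HX2016} for Lemma~\ref{abount_H'} and should go through with the same care, using that each removed locality-row has weight at most $r+1$ and that the blocks are pairwise disjoint (Lemma~\ref{r_mid_k}) in the divisible case, with a small modification otherwise. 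Once $d''=d$ is established, together with $n''-k''=d$ the code $\mathcal{C}''$ is AMDS by definition, completing the proof.
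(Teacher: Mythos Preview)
Your overall arithmetic is right: $m'' = (n-k)-(\lc k/r\rc-2) = d$ by optimality, so the target is exactly $d'' = d = n''-k''$. But you have misjudged where the work lies, and this leads to a real gap.

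For the lower bound $d'' \ge d$ you propose an ``absorption'' argument to handle the hypothetical case where $d-1$ surviving columns are dependent in $H''$ but independent in $H$. That case is vacuous, and seeing why is the whole point: by construction we delete \emph{all} columns covered by the removed locality-rows, so on every surviving column each deleted row has entry $0$. Hence a linear dependence among any set of surviving columns in $H''$ is already a dependence among the same columns in $H$; no re-expression or block-disjointness (Lemma~\ref{r_mid_k}) is needed, and the $r\nmid k$ case requires no modification. This is exactly the one-line step the paper uses (``since the elements not in $H''$ are all zero, $d\le d''$'').

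Conversely, your proposed upper bound $d'' \le d$ is the part that is \emph{not} automatic. You assert that a minimum-weight codeword of $\mathcal{C}$, or a minimal dependent $d$-set of columns of $H$, ``can be chosen to survive the deletion''; nothing guarantees this. The paper avoids this issue entirely: it compares with $H'$ from Lemma~\ref{abount_H'} (obtained by deleting one further locality-row from $H''$) and uses the same zero-entry observation in the other direction to get $d'' \le d'$. Since $d' = d$, this closes the sandwich $d \le d'' \le d' = d$. Equivalently, from Singleton $d'' \le m''+1$; if $d'' = m''+1$ then $d'' > m'+1 = d'$, contradicting $d'' \le d'$.

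Finally, you try to establish that $H''$ has full row rank up front via a ``staircase'' picture; this is not needed and not how the paper proceeds. Once $d'' = m''$ is known, $k'' = n''-m''$ and hence $\mathrm{rank}(H'') = m''$ follow as consequences.
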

\begin{proof}[Proof of Lemma \ref{abount_H''}]
Since $H$ has $l$ locality-rows, by $(\ref{inequality})$, we have
\begin{flalign}\label{lemma_inequality}
m''=n-k-(\lc k/r\rc-2)\ge 2
\end{flalign}
and $H''$ has at least two rows with weight at most $r+1$, which are locality-rows from $H$.
Let $\gamma'$ be the number of columns covered by the $\lc k/r\rc-2$ locality-rows. Since every locality-row has weight at most $r+1$, we have
\begin{flalign}\label{lemma_inequality1}
n''=n-\gamma'\ge n-(\lc k/r \rc-2)(r+1).
\end{flalign}
Then, combining $(\ref{lemma_inequality})$ and $(\ref{lemma_inequality1})$ with $k+2r>r\cdot\lc k/r\rc$, we have $n''>m''\ge 2$. By the classical Singleton bound,
\begin{flalign*}
d''\leq n''-k''+1=Rank(H'')+1\leq m''+1.
\end{flalign*}
Among these $n''$ columns of $H$, since the elements not in $H''$ are all zero, we have $d\leq d''$.  Similarly, we can also obtain $d''\leq d'$.
Since $\mc{C}$ is an optimal LRC with $d=n-k-\lc k/r\rc+2$, we have
\begin{flalign*}
d''\geq d=n-k-\lc k/r\rc+2=m''.
\end{flalign*}

Note that $d=d'=m'+1$ by Lemma \ref{abount_H'}, where $m'$ is the number of rows of $H'$ defined in Lemma \ref{abount_H'}. If $d''=m''+1$, then $m''= m'$, which is impossible.
Therefore, $d=m''=n''-k''=d''$, which implies that $Rank(H'')=m''$ and $\mathcal{C}''$ is an almost MDS code.
\end{proof}
\begin{remark}
Actually, Lemma \ref{abount_H''} can be generalized to the following form with a similar process.

Let $\mathcal{C}$ be an optimal $(n,k,r)$ LRC with dimension $\lc k/r\rc \ge s$ and minimum distance $d$ over $\mathbb{F}_q$. Let $\widehat{H}$ be the $\widehat{m}\times \widehat{n}$ matrix obtained from $H$ by deleting any $\lc k/r\rc-s$ locality-rows and all the columns they covered. Then $\widehat{H}$ is the parity-check matrix of an $[\widehat{n},\widehat{k},\widehat{d}]$ linear code with $\widehat{d}=\widehat{n}-\widehat{k}+2-s=d$.
\end{remark}
Furthermore, we need some results about MDS codes and almost MDS codes to help us determine all the possible parameters of the optimal quaternary LRCs.
\begin{thm}\label{MDS}\cite[Corollary 7.4.3]{H2010}
 Assume that there exists an $[n,k,d]$ MDS code $\mathcal{C}$ over $\mb{F}_q$,
\begin{itemize}
  {\item[1.]if $2\le k$, then $d=n-k+1\leq q$;}
  {\item[2.]if $k\le n-2$, then $k+1\leq q$.
   }
\end{itemize}
\end{thm}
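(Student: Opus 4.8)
The plan is to deduce both statements from the well-known fact that an $[n,k,d]$ MDS code is equivalent to a collection of $n$ points in general position in a projective space, or, dually, that the columns of a parity-check matrix of an MDS code are in ``general position'' in the sense that every $n-k$ of them are linearly independent. I would first recall that $\mathcal{C}$ is MDS if and only if its dual $\mathcal{C}^\perp$ is MDS, so it suffices to work with whichever of $k$, $n-k$ is convenient. For part 1, I would take a generator matrix $G$ of $\mathcal{C}$ in standard form $(I_k \mid A)$; the MDS property forces every square submatrix of $A$ to be nonsingular, in particular every entry of $A$ is nonzero and (after scaling rows and columns) one can normalize so that $A$ contains a row and a column of all $1$'s. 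Restricting attention to a $2\times (d-1)$ submatrix of $A$ of the form $\binom{1\ 1\ \cdots\ 1}{1\ a_2\ \cdots\ a_{d-1}}$, the $2\times 2$ minors being nonzero means $1, a_2, \dots, a_{d-1}$ are pairwise distinct nonzero elements of $\mathbb{F}_q$, which forces $d-1 \le q-1$, i.e.\ $d = n-k+1 \le q$. (When $k \ge 2$ there is indeed such a $2 \times (d-1)$ block since $A$ has $k \ge 2$ rows and $n-k = d-1$ columns.)

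For part 2, I would apply part 1 to the dual code: if $k \le n-2$ then $\mathcal{C}^\perp$ is an $[n, n-k, k+1]$ MDS code with redundancy $n-k \ge 2$, so part 1 gives $k+1 = n-(n-k)+1 \le q$. This duality trick makes part 2 essentially immediate once part 1 is established, so the entire content sits in the elementary counting argument of part 1.

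The main (minor) obstacle is purely bookkeeping: justifying that one may normalize $G$ so that a suitable $2 \times (d-1)$ all-distinct block appears, and checking the edge cases (e.g.\ $d=2$, or trivial codes) so that the inequalities are not vacuous. Since the paper cites this as \cite[Corollary 7.4.3]{H2010}, I expect the ``proof'' here to be a one-line appeal to that reference rather than a reconstruction; if a self-contained argument is wanted, the projective-geometry phrasing—``$n$ points of $\mathrm{PG}(k-1,q)$ in general position, and an arc in $\mathrm{PG}(1,q)$ has at most $q+1$ points''—packages both parts cleanly and is the route I would actually write out.
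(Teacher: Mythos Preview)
Your expectation is correct: the paper does not prove this statement at all, it simply cites it as \cite[Corollary 7.4.3]{H2010} and moves on. Your self-contained argument via the standard-form generator matrix $(I_k\mid A)$ and the duality trick for part~2 is correct and is exactly the classical proof one finds in textbooks (including, essentially, the one in the cited reference). The normalization step and the edge cases you flag are routine, so there is no gap.
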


\begin{lem}\cite[Theorem 8]{DM1996}\label{amds}
If $\mathcal{C}$ is an $[n,k,n-k]$ almost MDS code over $\mathbb{F}_q$, $q>3$ with $k\ge 3$, then $n-k< 2q-1$.
\end{lem}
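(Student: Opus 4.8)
The plan is to reduce the bound to the planar case $k=3$ by shortening, and then to recast it as an extremal problem for point sets in $PG(2,q)$. First I would record the effect of shortening: if $\mathcal{C}=[n,k,n-k]$ is AMDS and we shorten at a coordinate that is not identically zero, the result has length $n-1$ and dimension $k-1$, and since shortening passes to a subcode it cannot decrease the minimum distance; hence its distance is at least $n-k=(n-1)-(k-1)$, so by the classical Singleton bound it is either AMDS or MDS, and in both cases its redundancy is again $m:=n-k$. (We may assume $\mathcal{C}$ is non-degenerate, since deleting a zero coordinate turns an AMDS code into an MDS code with smaller redundancy, to which Theorem \ref{MDS} applies directly.) Iterating, I shorten $\mathcal{C}$ towards dimension $3$. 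If the code ever becomes MDS, then Theorem \ref{MDS}(1), applicable because the dimension stays $\ge 3\ge 2$, gives $m+1\le q$, hence $m\le q-1<2q-1$ and we are done. Otherwise we reach an AMDS code $[m+3,3,m]$, and it suffices to bound $m$ for such codes.

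Next I would pass to finite geometry. Fix a generator matrix $G$ of the $[m+3,3,m]$ code; its $N:=m+3$ columns are vectors of $\mathbb{F}_q^3$, which I read as a multiset of points of $PG(2,q)$ spanning the plane. A nonzero codeword $xG$ vanishes exactly on the columns lying on the line $x^{\perp}$, so $wt(xG)=N-(\text{number of columns on }x^{\perp})$; thus the minimum distance is $N$ minus the maximal number of columns on a line. The AMDS condition $d=N-3$ says precisely that every line of $PG(2,q)$ carries at most $3$ columns (with multiplicity) and some line carries exactly $3$, i.e. the columns form an $(N;3)$-arc. A short computation shows multiplicities cannot help: a point of multiplicity $3$ is impossible once $N>3$, and a point of multiplicity $2$ forces $N\le q+3$, so for $N$ near $2q$ the columns are distinct. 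This reduces the theorem to the finite-geometry bound $m_3(2,q)\le 2q+1$ for $q>3$, where $m_3(2,q)$ denotes the largest size of an $(N;3)$-arc in $PG(2,q)$; indeed $N\le 2q+1$ gives $m\le 2q-2<2q-1$.

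To bound the arc I would begin with the pencil count: fixing an arc point $P$ and looking at the $q+1$ lines through it, each carrying at most $2$ further points, gives $N-1\le 2(q+1)$, i.e. $N\le 2q+3$, with equality only when every line through every point is a $3$-secant, that is, when the arc is a maximal arc of degree $3$. These must be excluded for $q>3$: for $q=3$ the required $(2q+3;3)$-arc is the degree-$q$ maximal arc $AG(2,3)$ (the complement of a line), which always exists and is exactly why the hypothesis $q>3$ is needed; but for $q>3$ one has degree $3<q$, and such proper maximal arcs do not exist, since by the theorem of Ball, Blokhuis and Mazzocca there are no nontrivial maximal arcs for odd $q$, while for even $q$ a degree-$3$ maximal arc would require $3\mid q$, impossible when $q$ is a power of $2$. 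Thus $N\le 2q+2$. To exclude $N=2q+2$ I would feed the line-type equations $\sum_i t_i=q^2+q+1$, $\sum_i i\,t_i=N(q+1)$ and $\sum_i\binom{i}{2}t_i=\binom{N}{2}$ into the pencil analysis: for $N=2q+2$ they force $t_1=0$ and show every point lies on exactly one $2$-secant and $q$ three-secants, whence $t_3=2q(q+1)/3$. This is non-integral precisely when $q\equiv 1\pmod 3$, so those $q$ (in particular $q=4$) are settled immediately.

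The main obstacle is the remaining congruence classes $q\equiv 0,2\pmod 3$, for which the counting equations are consistent and do not by themselves forbid a $(2q+2;3)$-arc. Here the plan is to exploit the rigid structure already extracted: the $2$-secants form a perfect matching on the $2q+2$ arc points, and for every external point $Q$ the secant numbers satisfy $a_3=a_1+2a_0$. I would convert these incidence relations into an algebraic statement via the lacunary-polynomial method of Segre--Blokhuis--Ball type, attaching to the secant/tangent structure a plane curve whose degree is too small to accommodate the forced number of secants, thereby obtaining a contradiction; alternatively one may invoke the known sharp upper bounds for arcs of degree three in $PG(2,q)$. Either route yields $m_3(2,q)\le 2q+1$ for all $q>3$ and completes the argument. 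I expect this final non-existence of the extremal $(2q+2;3)$-arc to be by far the hardest and most technical step, with the shortening reduction and the incidence counting being essentially routine.
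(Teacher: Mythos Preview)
The paper does not supply a proof of this lemma; it is quoted from \cite[Theorem~8]{DM1996} and used as a black box, so there is no in-paper argument against which to compare your proposal.

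On the proposal itself: the shortening reduction to $k=3$ and the translation into an $(N;3)$-arc in $PG(2,q)$ are correct, as are the pencil count $N\le 2q+3$, the exclusion of the maximal-arc value $N=2q+3$, and the divisibility obstruction to $N=2q+2$ when $q\equiv 1\pmod 3$. The remaining step, ruling out a $(2q+2;3)$-arc when $q\equiv 0,2\pmod 3$, is where the substance lies, and here your write-up is only programmatic. Your fallback of ``invoking the known sharp upper bounds for arcs of degree three in $PG(2,q)$'' is circular: via the arc--code dictionary, the inequality $m_3(2,q)\le 2q+1$ for $q>3$ is exactly the $k=3$ instance of the lemma you are trying to prove, so citing it assumes the conclusion. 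The lacunary-polynomial route may well succeed, but you have not carried it out, and it is not a routine calculation. Note also that the Ball--Blokhuis--Mazzocca theorem you use to dispose of the odd-$q$ maximal-arc case postdates \cite{DM1996}; the original argument therefore cannot have gone this way and presumably relies on a more direct coding-theoretic device (residual codes, weight identities for AMDS codes, or the like) that sidesteps the hard arc-exclusion entirely. If you want a self-contained proof you should either execute the polynomial argument in full or consult the source for the alternative method.
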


Given a code $\mathcal{C}$ of length $n$, denote $A_i$ $(0\leq i\leq n)$ as the number of codewords of weight $i$ in $\mathcal{C}$, then the list $A_i$ $(0\leq i\leq n)$ is called the weight distribution for $\mathcal{C}$. The following theorem determines the weight distribution of MDS code.
\begin{thm}\label{MDS_weight_distribution}\cite[Theorem 7.4.1]{H2010}
Let $\mathcal{C}$ be an $[n,k,d]$ MDS code over $\mathbb{F}_q$.  The weight distribution of $\mathcal{C}$ is given by $A_0=1$, $A_i=0$ for $1\leq i<d$, and
\begin{flalign*}
A_i={n\choose i}\sum\limits_{j=0}^{i-d}(-1)^j{i\choose j}(q^{i+1-d-j}-1),
\end{flalign*}
for $d\leq i\leq n$, where $d=n-k+1$.
\end{thm}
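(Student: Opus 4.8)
The plan is to count codewords by their supports and exploit the defining property of MDS codes that every set of $k$ coordinates is an information set. For a coordinate set $T\subseteq[n]$ write $M(T)$ for the number of codewords whose support is contained in $T$, and for $S\subseteq[n]$ write $B(S)$ for the number of codewords whose support equals $S$. Since $A_i=\sum_{|S|=i}B(S)$, it suffices to show that $B(S)$ depends only on $|S|$ and to evaluate it; then $A_i=\binom{n}{i}B_i$, where $B_i$ is the common value. The cases $0\le i<d$ are immediate: $A_0=1$ and $A_i=0$ for $1\le i\le d-1$ because the minimum weight equals $d$, so the substance of the statement lies in the range $d\le i\le n$.

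First I would compute $M(T)$. The codewords supported on $T$ are precisely the kernel of the projection $\pi_{T^c}\colon\mathcal{C}\to\mathbb{F}_q^{T^c}$ onto the complementary coordinates, where $|T^c|=n-i$ when $|T|=i$. Using the equivalent characterization that an $[n,k,n-k+1]$ code is MDS precisely when every $k$ columns of a generator matrix are linearly independent (so every $k$-subset of coordinates is an information set and $\pi_I$ is a bijection when $|I|=k$), I would show that the projection onto any $s$ coordinates has rank exactly $\min(s,k)$: for $s\le k$ extend the set to an information set to obtain surjectivity, and for $s\ge k$ the rank is at most $\dim\mathcal{C}=k$ and at least $k$ since the set contains an information set. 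Hence $\dim\ker\pi_{T^c}=k-\min(n-i,k)=\max(0,\,i-d+1)$, so $M(T)$ depends only on $i=|T|$ and equals $M_i=q^{\max(0,\,i-d+1)}$; in particular $M_i=1$ for $i\le d-1$ and $M_i=q^{\,i-d+1}$ for $i\ge d$.

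Next, by inclusion--exclusion over the subsets of a fixed support $S$ of size $i$,
\begin{flalign*}
B(S)=\sum_{T\subseteq S}(-1)^{\,i-|T|}M(T)=\sum_{j=0}^{i}(-1)^{\,i-j}\binom{i}{j}M_j,
\end{flalign*}
which manifestly depends only on $i$; call it $B_i$. Substituting the values of $M_j$ and writing $M_j=1+(M_j-1)$ with $M_j-1=q^{\,j-d+1}-1$ for $j\ge d$ and $M_j-1=0$ otherwise, the constant part contributes $\sum_{j=0}^{i}(-1)^{\,i-j}\binom{i}{j}=0$ for $i\ge1$, so only the terms with $d\le j\le i$ survive; the index shift $j\mapsto i-j$ then converts the remaining sum into $\sum_{j=0}^{i-d}(-1)^{j}\binom{i}{j}\bigl(q^{\,i+1-d-j}-1\bigr)$. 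Multiplying by $\binom{n}{i}$ yields the claimed expression for $A_i$.

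The conceptual heart is the rank computation establishing $M_i=q^{\,i-d+1}$; once the information-set property is in hand this is short, and the only real care needed is the bookkeeping in the binomial inversion, specifically verifying that the contribution of the constant term $M_j=1$ vanishes and that the index shift lands exactly on the stated range $0\le j\le i-d$. I do not anticipate a genuine obstacle, since every ingredient (the MDS information-set characterization, kernel-dimension counting, and binomial inversion) is standard; the main risk is an off-by-one error in the exponent $i+1-d-j$ or in the summation limits, which I would guard against by checking the boundary value $i=d$, where only $j=0$ survives and $A_d=\binom{n}{d}(q-1)$, the known count of minimum-weight codewords.
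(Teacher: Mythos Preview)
Your argument is correct and is essentially the standard proof of this classical result. Note, however, that the paper itself does not prove this theorem: it is quoted as \cite[Theorem 7.4.1]{H2010} and used as a black box, so there is no ``paper's own proof'' to compare against. Your approach---computing $M_i=|\{c\in\mathcal{C}:\mathrm{supp}(c)\subseteq T\}|$ via the information-set characterization of MDS codes and then recovering $B_i$ by M\"obius inversion on the Boolean lattice---is exactly the textbook derivation, and your sanity check at $i=d$ is the right one.

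One minor remark on presentation: when you write ``$M_j=1+(M_j-1)$ with $M_j-1=q^{j-d+1}-1$ for $j\ge d$ and $M_j-1=0$ otherwise,'' you might also mention explicitly that the formula $M_j=q^{\max(0,j-d+1)}$ already encodes both cases uniformly, so the splitting is just for the purpose of isolating the vanishing alternating sum $\sum_{j=0}^{i}(-1)^{i-j}\binom{i}{j}=0$. This is purely cosmetic; the mathematics is fine as written.
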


Finally, we need the following lemma to help us determine the upper bound on the minimum distance for optimal LRCs.
\begin{lem}\cite{HSX2019}\label{singleton_defect}
Let $\mathcal{C}$ be an optimal $q$-ary $(n,k,r)$-LRC with minimum distance $d$ and dimension $k>r\ge 1$, then
\begin{align*}
d\leq\left\{\begin{array}{ll}q,&\text{if }r\nmid (k-1)\text{;}\\2q,&\text{if }r\mid (k-1).\end{array}\right.
\end{align*}
\end{lem}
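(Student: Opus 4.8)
The plan is to convert the optimal LRC into a short MDS code and a short almost MDS code that both retain the minimum distance $d$, using Lemmas \ref{abount_H'} and \ref{abount_H''}, and then to read off the bound on $d$ from the classical size restrictions for such codes in Theorem \ref{MDS} and Lemma \ref{amds}. The two alternatives in the statement will come from which of the two reductions is still informative.

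For the case $r\nmid(k-1)$ (which already forces $r\ge 2$, since $1\mid(k-1)$ always), I would apply Lemma \ref{abount_H'}: deleting any $\lc k/r\rc-1$ locality-rows together with all columns they cover gives an $[n',k',d]$ MDS code $\mc{C}'$, so $k'=n'-d+1$. Each locality-row has weight at most $r+1$, hence the deleted rows cover at most $(\lc k/r\rc-1)(r+1)$ columns and $n'\ge n-(\lc k/r\rc-1)(r+1)$. Substituting $d=n-k-\lc k/r\rc+2$ and simplifying gives
\[
k'\ \ge\ k-r\lc k/r\rc+r\ =\ r-(r\lc k/r\rc-k).
\]
The integer $r\lc k/r\rc-k$ lies in $\{0,1,\dots,r-1\}$ and equals $r-1$ exactly when $k\equiv 1\pmod r$; since we are assuming $k\not\equiv 1\pmod r$, it is at most $r-2$, so $k'\ge 2$. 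Part 1 of Theorem \ref{MDS} applied to $\mc{C}'$ then yields $d=n'-k'+1\le q$.

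For the case $r\mid(k-1)$, note $\lc k/r\rc\ge 2$ because $k>r$, so Lemma \ref{abount_H''} applies: deleting any $\lc k/r\rc-2$ locality-rows and the columns they cover gives an $[n'',k'',d]$ almost MDS code $\mc{C}''$, i.e.\ $k''=n''-d$. The same covering estimate gives $n''\ge n-(\lc k/r\rc-2)(r+1)$, and substituting $d=n-k-\lc k/r\rc+2$ simplifies to
\[
k''\ \ge\ 2r-(r\lc k/r\rc-k)\ \ge\ 2r-(r-1)\ =\ r+1\ \ge\ 2 .
\]
When $r\ge 2$ this gives $k''\ge 3$, and since $q\ge 4$, Lemma \ref{amds} applied to $\mc{C}''$ gives $d=n''-k''<2q-1\le 2q$. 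When $r=1$ and $k''\ge 3$ the same works; in the remaining subcase $r=1,\ k''=2$, where $\mc{C}''$ is $[n'',2,n''-2]_q$, I would finish by a direct count: if a generator-matrix column is nonzero, exactly $q-1$ nonzero codewords vanish in that position, so summing the number of zero coordinates over the $q^2-1$ nonzero codewords against the bound $2(q^2-1)$ coming from $d=n''-2$ forces $n''(q-1)\le 2(q^2-1)$, i.e.\ $n''\le 2q+2$ and $d=n''-2\le 2q$. (For the excluded fields $q\le 3$ one falls back on the classifications of optimal binary and ternary LRCs in \cite{HX2016,HXC2017}, or on the same elementary argument.)

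The hard part is the case $r\mid(k-1)$: there the MDS reduction of Lemma \ref{abount_H'} may degenerate to a trivial $[d,1,d]$ code and give no information, so one is forced onto the almost MDS reduction and must bring in a genuine length bound for almost MDS codes (Lemma \ref{amds}), which is exactly where the hypothesis $q>3$ matters and where the low-dimensional subcase $k''=2$ needs separate treatment. The bookkeeping with $\lc k/r\rc$ — in particular the clean observation that $r\lc k/r\rc-k=r-1$ if and only if $r\mid(k-1)$ — is what makes the dichotomy in the statement drop out.
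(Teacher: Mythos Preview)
The paper does not give its own proof of Lemma~\ref{singleton_defect}; it is quoted verbatim from \cite{HSX2019}. So there is no in-paper argument to compare against.

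Your proposal is correct and is in fact the natural proof one would extract from the paper's own toolkit: the first alternative is exactly Lemma~\ref{abount_H'} composed with Theorem~\ref{MDS}(1), and the second alternative is Lemma~\ref{abount_H''} composed with Lemma~\ref{amds}. The ceiling bookkeeping is right: $r\lc k/r\rc-k\in\{0,\dots,r-1\}$ with the value $r-1$ occurring iff $r\mid(k-1)$, which is precisely what makes the MDS reduction degenerate to $k'=1$ and forces you onto the almost-MDS route. Your direct Plotkin-type count for the $[n'',2,n''-2]_q$ subcase is also fine; strictly speaking you should note that a zero column in the generator matrix only tightens the inequality (a zero column contributes $q^2-1>q-1$ to the double count), so the conclusion $n''\le 2q+2$ still holds.

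The only genuine caveat is the one you already flag: Lemma~\ref{amds} needs $q>3$, so for $q\in\{2,3\}$ with $r\ge2$ and $r\mid(k-1)$ your argument as written does not close. Appealing to the full binary/ternary classifications of \cite{HX2016,HXC2017} is circular in spirit (those papers also rely on bounds of this type); a cleaner patch is to observe that for $q\le3$ the almost-MDS bound $n-k\le 2q$ still holds for $k\ge2$ by the same elementary double count you used for $k''=2$, or to invoke the Singleton-defect bound directly from \cite{HSX2019}. For the quaternary setting of this paper, of course, your proof is complete as stated.
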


\subsection{Some results on finite geometry}
Denote $PG(n,q)$ as the $n$-dimensional projective space over $\mathbb{F}_q$. We call the $point$ in $PG(n,q)$ as the all $1$-subspace in $\mathbb{F}_q^{n+1}$, and the $line$ in $PG(n,q)$ as the all $2$-subspace in $\mathbb{F}_q^{n+1}$. A $k$\emph{-cap} in $PG(n,q)$ is a set of $k$ points, no 3 of which are collinear. A $k$-cap in $PG(n,q)$ is called complete if it is not contained in a $(k+1)$-cap of $PG(n,q)$. For $n=2$, a $k$-cap is usually called a $k$-arc.

Moreover, we denote $m_2(n,q)$ as the size of the largest complete cap of $PG(n,q)$. We need the following results about $m_2(n,q)$ in this paper.
\begin{lem}[\cite{HT2016,DF2009}]\label{finite_geometry}
\begin{enumerate}
  \item[1.] $m_2(2,q)=\left\{\begin{array}{ll}q+1,&q\text{ odd};\\q+2,&q\text{ even}.\end{array}\right.$
  \item[2.] $m_2(3,q)=q^2+1~(q>2)$.
\end{enumerate}
\end{lem}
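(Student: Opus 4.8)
The plan is to establish each value by matching a lower-bound construction with an upper-bound counting argument, handling the planar case (part 1) first and then bootstrapping to $PG(3,q)$ (part 2).

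For part 1, the lower bounds come from explicit arcs. The conic $\{(1:t:t^2):t\in\mathbb{F}_q\}\cup\{(0:0:1)\}$ is a $(q+1)$-arc for every $q$, since three of its points are collinear only when a Vandermonde determinant vanishes, which forces two of them to coincide; and for $q$ even one adjoins the nucleus $(0:1:0)$ to obtain a $(q+2)$-arc. For the upper bound I fix one arc point $P$ and count the $q+1$ lines through it: each carries at most one further arc point, so a $k$-arc satisfies $k-1\le q+1$, i.e. $k\le q+2$, which is already the even-$q$ value. To sharpen this to $k\le q+1$ when $q$ is odd I argue by parity: if $k=q+2$, then through every arc point all $q+1$ lines are secants, so the arc has no tangent line and every line of the plane meets it in $0$ or $2$ points; choosing any point $R$ off the arc, the $q+1$ lines through $R$ partition the $q+2$ arc points into pairs, forcing $q+2$ to be even and hence $q$ even, a contradiction. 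This settles part 1.

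For part 2 the lower bound is supplied by an elliptic quadric (a nondegenerate quadric of minus type) in $PG(3,q)$: it has exactly $q^2+1$ points and contains no line, so every line meets it in at most two points and it is a cap. For the upper bound I fix a secant line $PQ$ of a cap $K$ and use the pencil of $q+1$ planes through it. Every other point of $K$ lies in exactly one of these planes, and each plane meets $K$ in an arc through $P$ and $Q$, hence in at most $m_2(2,q)$ points. Summing over the pencil gives $k-2\le (q+1)\bigl(m_2(2,q)-2\bigr)$. When $q$ is odd, part 1 gives $m_2(2,q)=q+1$, and this reads $k-2\le (q+1)(q-1)=q^2-1$, i.e. $k\le q^2+1$, as required.

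The main obstacle is the even-$q$ case of part 2. There the same pencil count uses $m_2(2,q)=q+2$ and only yields $k\le q^2+q+2$, because a priori every plane section could be a hyperoval. Closing this gap requires the finer theory of ovals in even characteristic, and the strategy I would follow is Qvist's: analyze the tangent lines of $K$ and the nuclei of its oval plane-sections to show that the sections cannot all be hyperovals, and that an extremal cap must in fact be an ovoid whose plane sections are conics. A natural first step is to rule out the extreme $k=q^2+q+2$, in which every line through a cap point would be a secant, so $K$ would meet every line in $0$ or $2$ points and every plane in a hyperoval of $q+2$ points; a count of plane–point incidences then severely constrains $K$, indeed yielding a divisibility obstruction. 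The remaining intermediate range is exactly the technical heart of the classical theorem, and for the full argument I would invoke the treatment of caps and ovoids in \cite{HT2016,DF2009}.
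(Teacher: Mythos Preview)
The paper does not prove this lemma at all: it is stated as a known result with citations to \cite{HT2016,DF2009} and no argument is given. So there is no proof in the paper to compare your proposal against.

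Your arguments for part~1 and for the odd-$q$ half of part~2 are the standard ones and are correct. You are also right that the even-$q$ case of part~2 is the substantive step: the pencil-of-planes count through a secant only gives $k\le q^2+q+2$, and reducing this to $q^2+1$ is precisely the classical theorem (Qvist, Seiden, Barlotti) whose proof requires the tangent-line/nucleus analysis you sketch. Your proposal does not actually carry that analysis out; the paragraph beginning ``a natural first step'' outlines a plausible strategy but stops short of a proof, and in the end you defer to the same references the paper cites. So as a self-contained proof your write-up has a genuine gap in exactly the place you identify. Since the paper itself treats the whole lemma as a black box from the literature, your partial proof already goes further than what the paper provides.
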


In $PG(2,q)$, an arc is called an \emph{oval} if its size achieves $q+1$, when $q$ is even, an arc is called an \emph{hyperoval} if its size achieves $q+2$. A \emph{conic} in $PG(2,q)$ is the set of points $\langle(x_1,x_2,x_3)\rangle$ satisfying a homogeneous quadratic polynomial condition of the form $Q(x_1,x_2,x_3)=0$, where
\begin{flalign*}
Q(x_1,x_2,x_3)=ax_1^2+bx_2^2+cx_3^2+dx_1x_2+ex_1x_3+fx_2x_3.
\end{flalign*}
The \emph{nucleus} of a conic $\mathfrak{C}$ is the only common point of the intersection of all tangent lines, which intersect $\mathfrak{C}$ in one point.

The following lemma gives the structure of hyperoval in $PG(2,4)$.
\begin{thm}\cite[Theorem~2.9]{TP2003}\label{hyperoval}
All the hyperoval in $PG(2,4)$ consist of one conic and its nuclues. All the oval in $PG(2,4)$ is a conic.
\end{thm}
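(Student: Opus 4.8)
The plan is to prove Theorem~\ref{hyperoval} by a direct classification argument in $PG(2,4)$, exploiting the fact that the ambient field is so small that a hyperoval has only $q+2=6$ points and an oval only $q+1=5$ points. First I would fix a hyperoval $\mathcal{H}=\{P_1,\dots,P_6\}$ in $PG(2,4)$. By the action of $PGL(3,4)$, which is transitive on ordered quadruples of points in general position (no three collinear), I may normalise the first four points of $\mathcal{H}$ to the standard frame $P_1=\langle(1,0,0)\rangle$, $P_2=\langle(0,1,0)\rangle$, $P_3=\langle(0,0,1)\rangle$, $P_4=\langle(1,1,1)\rangle$. The conic $\mathfrak{C}$ through these four points in ``pointed'' position is essentially forced: any conic $Q(x_1,x_2,x_3)=ax_1^2+bx_2^2+cx_3^2+dx_1x_2+ex_1x_3+fx_2x_3=0$ passing through $P_1,P_2,P_3$ must have $a=b=c=0$, and passing through $P_4$ forces $d+e+f=0$. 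Over $\mathbb{F}_4$ this gives a one‑parameter family of conics, and I would check that each such conic is nondegenerate, contains exactly $q+1=5$ points, and has a well‑defined nucleus.

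The core of the argument is then a counting/covering step. There are exactly $21$ points in $PG(2,4)$. A fixed conic $\mathfrak{C}$ accounts for $5$ of them; I would show that the remaining $16$ points split as $1$ nucleus point plus $15$ points, and that through each point not on $\mathfrak{C}\cup\{\text{nucleus}\}$ there pass $2$ secant lines, $0$ tangent lines; while through the nucleus all $5$ lines to points of $\mathfrak{C}$ are tangents (this is the defining property of the nucleus in even characteristic). Consequently, the only way to extend the $5$‑arc $\mathfrak{C}$ to a $6$‑arc is to adjoin the nucleus: any other candidate point $P$ lies on a secant of $\mathfrak{C}$, hence is collinear with two points of $\mathfrak{C}$, contradicting the arc property. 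This shows simultaneously that every oval (a complete $5$‑arc) is a conic — once I verify that a $5$‑arc not contained in a conic cannot exist, again by the frame normalisation: four of its points form the standard frame, and the fifth point must avoid all $\binom{4}{2}=6$ secants of the frame, which over $\mathbb{F}_4$ leaves exactly the points lying on the associated conics — and that every hyperoval is a conic plus its nucleus.

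For the hyperoval direction specifically, I would argue: given the $6$‑arc $\mathcal{H}$, delete any one point, say $P_6$, to get a $5$‑arc $\mathcal{H}\setminus\{P_6\}$; by the oval result this is a conic $\mathfrak{C}$, and by the extension analysis above the unique point completing it to a $6$‑arc is the nucleus of $\mathfrak{C}$, so $P_6$ is that nucleus and $\mathcal{H}=\mathfrak{C}\cup\{\text{nucleus}(\mathfrak{C})\}$. A small consistency check is needed: deleting a \emph{different} point $P_i$ also yields a conic whose nucleus is $P_i$; one should confirm this is compatible, i.e.\ that all six $5$‑subsets are conics with the ``opposite'' point as nucleus. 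This is automatic from the symmetry of the regular hyperoval, but I would spell it out via the explicit $\mathbb{F}_4$‑coordinates, possibly listing the $6$ points of a representative hyperoval (for instance $\{(1,0,0),(0,1,0),(0,0,1),(1,1,1),(1,\omega,\omega^2),(1,\omega^2,\omega)\}$ with $\omega$ a primitive cube root of unity) and checking directly.

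The main obstacle I anticipate is not conceptual but organisational: one must handle the parametrised family of conics through the standard frame and verify the nucleus computation uniformly over $\mathbb{F}_4$, and then rule out ``exotic'' $5$‑arcs cleanly. In larger $PG(2,q)$ with $q$ even, hyperovals need \emph{not} be conics plus nucleus (Lunelli–Sce and beyond), so the proof genuinely relies on $q=4$ being small; the delicate point is to make the exhaustion transparent rather than a brute‑force case check, and to phrase the arc‑extension step so that it yields both the oval and hyperoval statements at once. If a slicker route is wanted, one can instead cite the standard fact that for $q\le 8$ every oval in $PG(2,q)$ is a conic (Segre for $q$ odd; explicit classification for $q\in\{4,8\}$ even) and then only the nucleus‑adjunction step remains, which is the short computation above.
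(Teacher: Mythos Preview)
The paper does not supply a proof of Theorem~\ref{hyperoval} at all: it is quoted verbatim from \cite[Theorem~2.9]{TP2003} as a background fact in the preliminaries, and the authors move on immediately to ovoids in $PG(3,q)$. So there is no ``paper's own proof'' to compare your proposal against.

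That said, your proposal is a reasonable self-contained argument for the $q=4$ case, and the overall architecture (frame normalisation via $PGL(3,4)$, identifying the pencil of conics through the frame, and showing the nucleus is the unique point extending a conic to a $6$-arc) is the standard one. One small correction: your secant/tangent count through a point $P\notin\mathfrak{C}\cup\{\text{nucleus}\}$ is off. In even characteristic every tangent to $\mathfrak{C}$ passes through the nucleus $N$, so the line $PN$ is itself a tangent; thus through $P$ there is exactly one tangent (the line $PN$), $q/2=2$ secants, and $2$ external lines --- not ``$0$ tangent lines'' as you wrote. This does not affect your conclusion, since the presence of at least one secant through $P$ already forces a collinear triple and rules $P$ out, but you should state the count correctly. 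You might also note, when carrying out the frame computation, that over $\mathbb{F}_4$ the pencil $\{d\,x_1x_2+e\,x_1x_3+f\,x_2x_3=0:\ d+e+f=0\}$ contains exactly two nondegenerate members (those with $d,e,f$ all nonzero), matching the two points off the six frame-secants; this makes the ``every $5$-arc is a conic'' step explicit rather than allusive.
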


In $PG(3,q)$, an arc is called an \emph{ovoid} if its size achieves $q^2+1$. The following lemma demonstrates the structure of ovoids.
\begin{lem}\cite[Theorem 26.2]{M2007}\label{property_ovoid}
Let $\mathcal{O}$ be an ovoid of $PG(3,q)$, and let $P\in\mathcal{O}$. Then there is a unique plane $\pi_P$ such that $\pi_P\cap\mathcal{O}=\{P\}$. Every plane $\pi$ meets $\mathcal{O}$ either in a single point or in an oval of $\pi$.
\end{lem}

The following lemma gives an upper bound on the size of the cap in $PG(3,q)$ containing some specific structures related to conics.


\begin{lem}\cite[Theorem 2]{SB1959}\label{finite_geometry_1}
Take $q=2^h$ for some positive integer $h$, let $\pi_1$ and $\pi_2$ be two distinct planes of $PG(3,q)$ and $L$ be their line of intersection. Consider a nondegenerate conic $\mathfrak{C}_1$ in $\pi_1$ and a nondegenerate conic $\mathfrak{C}_2$ in $\pi_2$, both of the two conics touching $L$ at the same point $T$ and having the same nucleus $\mathcal{O}$ (necessarily situated on $L$ and distinct from $T$).

The points set $\mathfrak{C}_1\cup\mathfrak{C}_2\cup\mathcal{O}$ constitutes an incomplete $(2q+2)$-cap. Every $k$-cap containing this $(2q+2)$-cap can be obtained by aggregating to it some points conveniently chosen on the plane $\pi_3$, where $\pi_3$ contains the line $L$ (distinct from $\pi_1,\pi_2$). Then the number of $k$ satisfies the limitation $k\leq 3q+2$. When $k=3q+2$, this $k$-cap is a complete cap.
\end{lem}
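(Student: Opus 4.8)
The plan is to prove Lemma~\ref{finite_geometry_1} by a careful analysis of how planes of $PG(3,q)$ ($q=2^h$) meet the cap $\mathcal{K}_0:=\mathfrak{C}_1\cup\mathfrak{C}_2\cup\{\mathcal{O}\}$, exploiting the hyperoval structure in $\pi_1$ and $\pi_2$ guaranteed by Theorem~\ref{hyperoval} (for general $q=2^h$ the relevant fact is that a conic together with its nucleus forms a hyperoval, i.e. no line meets it in more than $2$ points). First I would verify that $\mathcal{K}_0$ really is a $(2q+2)$-cap: the only danger is three collinear points, and since $\mathfrak{C}_1\cup\{\mathcal{O}\}$ and $\mathfrak{C}_2\cup\{\mathcal{O}\}$ are each hyperovals in their planes, a line with three points of $\mathcal{K}_0$ would either lie in $\pi_1$ or $\pi_2$ (impossible) or meet $\pi_1$ in one point and $\pi_2$ in one point, hence pass through at most one point of $\mathfrak{C}_1$, one of $\mathfrak{C}_2$, and possibly $\mathcal{O}$ — but $\mathcal{O}\in L=\pi_1\cap\pi_2$, so a line through $\mathcal{O}$ meets $\pi_i\setminus L$ in at most one point, giving at most $\mathcal{O}$ plus one point of each conic off $L$; one then checks the tangency hypothesis at $T$ rules out the remaining degenerate configuration. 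That $\mathcal{K}_0$ is incomplete is seen by exhibiting a point (e.g. on $\pi_3$) that extends it.

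Next I would bound any completion $\mathcal{K}\supseteq\mathcal{K}_0$. The heart of the argument is: every point $P\in\mathcal{K}\setminus\mathcal{K}_0$ must lie on the plane $\pi_3$ through $L$ distinct from $\pi_1,\pi_2$. To see this, suppose $P\notin\pi_1\cup\pi_2\cup\pi_3$; consider the $q+1$ planes through $L$. The plane $\langle L,P\rangle$ is one of them, but $P$ being a new cap point interacts with $\mathfrak{C}_1,\mathfrak{C}_2$ through secant/tangent lines. The key combinatorial input is counting the lines through $P$ that meet $\mathcal{K}_0$: through $P$ there pass lines to each of the $2q+1$ points of $\mathcal{K}_0$, and for $\mathcal{K}\cup\{P\}$ to remain a cap none of these lines may be a secant of $\mathcal{K}_0$. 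A line through $P$ meeting $\pi_1$ hits it in exactly one point, likewise $\pi_2$; so the only way $P$ sees two points of $\mathcal{K}_0$ on one line is via a point of $\mathfrak{C}_1$ (or $\{\mathcal{O}\}$) collinear with a point of $\mathfrak{C}_2$ (or $\{\mathcal{O}\}$) and $P$. Projecting $\mathfrak{C}_1$ from $P$ onto $\pi_2$ gives $q+1$ points of $\pi_2$; I would argue that unless $P\in\pi_3$ this projected conic must meet the hyperoval $\mathfrak{C}_2\cup\{\mathcal{O}\}$ in enough points to force a secant, using that two distinct conics (or a conic and the image of a conic) in $PG(2,q)$ through a common setup share at least one point. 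This forces $\mathcal{K}\setminus\mathcal{K}_0\subseteq\pi_3$.

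Then, restricting to $\pi_3$: the points of $\mathcal{K}$ on $\pi_3$ form a cap in the plane $\pi_3\cong PG(2,q)$, and moreover $L\cap\mathcal{K}_0$ contributes the points $T,\mathcal{O}\in L\subset\pi_3$ (both tangency point and nucleus lie on $L$), so the cap in $\pi_3$ already contains these; by Lemma~\ref{finite_geometry} item~1, a cap in $PG(2,q)$ with $q$ even has size at most $q+2$. Hence $|\mathcal{K}\cap\pi_3|\le q+2$, and since $T,\mathcal{O}$ are already in $\mathcal{K}_0$ we add at most $q$ new points: $|\mathcal{K}|\le (2q+2)+q=3q+2$. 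Finally, for the completeness claim when $k=3q+2$, I would observe that such a $\mathcal{K}$ has $|\mathcal{K}\cap\pi_3|=q+2$, a hyperoval of $\pi_3$, and then show no point outside can be added: any new point is forced into one of the $q+1$ planes through $L$ by the same projection argument, but $\pi_1,\pi_2,\pi_3$ are already "saturated" (their traces on $\mathcal{K}$ are hyperovals, which are complete arcs for $q$ even) and any other plane through $L$ would create a secant with the existing structure.

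I expect the main obstacle to be the projection argument in the second paragraph — precisely ruling out new cap points off $\pi_3$. Making "the projection of $\mathfrak{C}_1$ from $P$ must meet $\mathfrak{C}_2\cup\{\mathcal{O}\}$" rigorous requires care about the image being a conic versus a degenerate locus (depending on whether $P$ lies on a secant/tangent/external line of $\mathfrak{C}_1$), and about how the common nucleus $\mathcal{O}$ and the common tangent point $T$ constrain the pencil of conics. This is exactly where the hypotheses "same nucleus $\mathcal{O}$" and "touching $L$ at the same point $T$" — plus $q$ being even, so tangents through the nucleus behave specially — get used, and one likely has to invoke a dimension/intersection count for conics in $PG(2,q)$ (two conics meet in at most $4$ points unless they share a component, and a suitable parity/tangency argument shows the relevant pair must meet in at least one point giving a forbidden secant). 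Everything after establishing "new points $\subseteq\pi_3$" is a short appeal to the planar bound $m_2(2,q)=q+2$ from Lemma~\ref{finite_geometry}.
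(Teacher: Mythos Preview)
The paper does not prove Lemma~\ref{finite_geometry_1} at all: it is quoted verbatim as \cite[Theorem~2]{SB1959} and used as a black box (specifically in the analysis of the case $s=1$, $t=3$, $l=\lceil k/r\rceil+1$, $d=4$). There is therefore no ``paper's own proof'' to compare your proposal against.

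That said, a remark on your sketch. Your outline has the right architecture --- verify $\mathcal{K}_0$ is a cap, confine new points to a single plane through $L$, then invoke the planar bound $m_2(2,q)=q+2$ --- but the phrasing of the second step obscures what actually has to be shown. Every point of $PG(3,q)\setminus L$ lies on a \emph{unique} plane through $L$, so ``$P$ lies on some plane through $L$'' is automatic; the nontrivial content of Segre's statement is that (i) no new point can lie in $\pi_1$ or $\pi_2$ (easy: $\mathcal{K}_0\cap\pi_i$ is already a hyperoval), and (ii) any \emph{two} new points $P,P'$ must lie in the \emph{same} plane through $L$. Your projection argument is aimed at a single point $P$ and does not yet explain why a second extension point $P'$ is forced into $\langle L,P\rangle$; that is where the ``same nucleus, same tangent point'' hypotheses do the real work, and your sketch would need to be reorganised around this pairwise constraint rather than a single-point obstruction.
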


\section{Some necessary conditions}\label{section_3}

In this section, from the perspective of parity-check matrix, we will prove some new necessary conditions for the existence of optimal LRCs which will be useful in this paper.

Consider an $(n,k,r)$ LRC $\mathcal{C}$ over $\mathbb{F}_q$ with $n=l(r+1)$ and $k=lr-u$, where $l\ge 1$, $r>u\ge 0$. Assume that the parity-check matrix $H$ of $\mathcal{C}$ has the following form:
\begin{flalign}\label{H}
H=\left(\begin{array}{c}\begin{array}{cccc}\overbrace{\begin{array}{c}1~1~\cdots~1\\ \\ \\ \\ \end{array}}^{r+1}&\overbrace{\begin{array}{c}\\ 1~1~\cdots~1\\ \\ \\ \end{array}}^{r+1}& \begin{array}{c}\\ \\  \ddots \\ \\
\end{array}&\overbrace{\begin{array}{c} \\ \\ \\ 1~1~\cdots~1\\  \end{array}}^{r+1}\end{array}\\ \hline H_2\end{array}\right)
\end{flalign}
where the first $l$ rows in $H$, denoted as $H_1$, correspond to the locality-rows with disjoint repair groups of size $r+1$ and the lower part $H_2$ is a $u\times l(r+1)$ matrix over $\mathbb{F}_q$.

Consider a locality-row of $H$ which covers $r+1$ columns $\{a_1,a_2,\ldots,a_{r+1}\}$. A vector $\mathbf{v}$ is said to \emph{be generated by} columns $\{a_{i_1},a_{i_2},\ldots,a_{i_s}\}$, where $i_{j}\in[r+1]$, if $\mathbf{v}=\sum_{j=1}^{s}b_ja_{i_j}$, for some $b_j\in\mathbb{F}_q^\ast$ satisfying $\sum_{j=1}^{s}b_j=0$.

\begin{lem}\label{keylemma}
Let $H$ be an $(l+u)\times l(r+1)$ matrix over $\mathbb{F}_q$ of the form $(\ref{H})$
\begin{itemize}
  \item[1.] If any 4 columns of $H$ are linearly independent, then \begin{flalign}\label{keylemma1}
      l\cdot{r+1\choose 2}\leq\frac{q^u-1}{q-1}.
      \end{flalign}
  \item[2.] If any 5 columns of $H$ are linearly independent, and $r\leq 4$ then
\begin{flalign}
l\cdot{r+1\choose 2}&+(q-2)\cdot {r+1\choose 3}\leq\frac{q^u-1}{q-1}.\label{ii_1}
\end{flalign}
  \item[3.] If any 6 columns of $H$ are linearly independent, then for all $r$,
      \begin{align}
      l\cdot r&\leq m_2(u-1,q),\label{iii_1}\\
      l(q-2)\cdot{r+1\choose 3}&\leq\frac{q^u-1}{q-1}. \label{iii_2}
      \end{align}
\end{itemize}
For completeness, we define ${n\choose k}:=0$, if $n<k$.
\end{lem}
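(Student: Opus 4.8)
The plan is to exploit the structure of $H$ in $(\ref{H})$: each of the $l$ locality-rows sits above a block of $r+1$ columns, and within such a block the top coordinate of every column is $1$. Deleting the $l$ locality-rows from a column leaves a length-$u$ vector living in the lower part $H_2$; I will call this the \emph{tail} of the column. The key observation is that a set of columns of $H$ is linearly dependent precisely when some linear combination of them vanishes in the locality-coordinates \emph{and} in the tail-coordinates. Vanishing in a locality-coordinate forces the chosen coefficients inside that block to sum to zero — this is exactly the notion of a vector ``being generated by'' columns of a block. So the strategy is: from each block produce many short vectors (differences/combinations of its columns with coefficients summing to zero), observe that the linear-independence hypothesis forbids certain coincidences among these generated vectors and their tails, and then count.

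For part 1, suppose any $4$ columns are independent. Inside a single block, for each pair of columns $a_i,a_j$ the vector $a_i-a_j$ is generated by that pair and has tail equal to the difference of the two tails. I would argue that these $l\binom{r+1}{2}$ tail-vectors (one per pair, over all blocks) are nonzero and pairwise non-proportional: if the tail of $a_i-a_j$ were a scalar multiple of the tail of $a_{i'}-a_{j'}$ (pairs from the same or different blocks), then a suitable combination of the (at most $4$) columns $a_i,a_j,a_{i'},a_{j'}$ would be a dependency — the locality-coordinates cancel because each generated vector has coefficient-sum zero in its block, and the tails cancel by the proportionality — contradicting $4$-independence. Hence we get $l\binom{r+1}{2}$ distinct points of $PG(u-1,q)$, giving $(\ref{keylemma1})$. (One must also check the degenerate possibility that two such differences coincide with multiplicity inside one block, but with $4$ distinct columns available this is subsumed.)

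For parts 2 and 3 the same philosophy applies but with richer generated vectors. For part 2 (any $5$ columns independent, $r\le 4$), besides the $\binom{r+1}{2}$ pair-differences per block I would also use, for each triple of columns in a block, the $q-2$ essentially different vectors of the form $b_1a_{i_1}+b_2a_{i_2}+b_3a_{i_3}$ with $b_1+b_2+b_3=0$ and all $b_j\ne 0$ (the count $q-2$ comes from normalizing one coefficient and choosing another freely among the $q-2$ values that keep all three nonzero, modulo the overall scalar). These $l\binom{r+1}{2}+l(q-2)\binom{r+1}{3}$ tail-vectors must again be pairwise non-proportional and nonzero: any proportionality would exhibit a dependency among at most $2+3=5$ columns (fewer if the two generated vectors share a block or columns), contradicting $5$-independence. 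The restriction $r\le 4$ is what keeps the supports of a pair-vector and a triple-vector from being forced to overlap too much, so the at-most-$5$-columns bookkeeping is valid. Counting distinct points in $PG(u-1,q)$ yields $(\ref{ii_1})$.

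For part 3 (any $6$ columns independent), inequality $(\ref{iii_2})$ is obtained exactly as above using the $l(q-2)\binom{r+1}{3}$ triple-vectors, whose tails must be pairwise non-proportional lest two triples from possibly different blocks combine into a dependency on at most $6$ columns. For $(\ref{iii_1})$ I would instead look at the projection onto the tail-coordinates directly: take the $l\cdot r$ columns obtained by, in each block, discarding one column and looking at the remaining $r$ tails — more precisely, show that the tails, viewed projectively in $PG(u-1,q)$, form a cap, i.e. no three of them are collinear; three collinear tails $t_1,t_2,t_3$ with $\alpha t_1+\beta t_2+\gamma t_3=0$ would, after correcting the locality-coordinates using one further column from each involved block to absorb the nonzero coefficient-sums, produce a dependency among at most $6$ columns. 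Hence these points form a cap and $l r\le m_2(u-1,q)$.

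\medskip

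The main obstacle I anticipate is the careful combinatorial accounting in parts 2 and 3 of exactly \emph{how many distinct columns} are involved when two generated vectors are proportional — one must handle the cases where the two vectors come from the same block (then they already share the top coordinate pattern and fewer than $2+2$, $2+3$, or $3+3$ columns appear), share one column, or come from disjoint blocks, and verify in every case that the total stays within the independence threshold ($4$, $5$, or $6$). The role of the hypothesis $r\le 4$ in part 2, and the precise extraction of the cap structure for $(\ref{iii_1})$ (rather than just a non-proportionality/point-count bound), are the places where I expect the argument to need the most care.
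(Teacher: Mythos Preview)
Your argument for part~1 and for inequality~(\ref{iii_2}) is essentially the paper's: generate projective points in $PG(u-1,q)$ from pairs (resp.\ triples) of columns inside each block, and use the independence hypothesis to force these points to be pairwise distinct.

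There are, however, two genuine gaps.

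\textbf{Part 2.} You are trying to prove $l{r+1\choose 2}+l(q-2){r+1\choose 3}\le (q^u-1)/(q-1)$, which is strictly stronger than what the lemma asserts (the second summand carries no factor $l$). Your bookkeeping ``at most $2+3=5$ columns'' overlooks the case of two triple-vectors coming from \emph{different} blocks: that involves $3+3=6$ columns, and five-wise independence does not rule out such a coincidence. Hence you cannot conclude that all $l(q-2){r+1\choose 3}$ triple-points are distinct. The paper sidesteps this by fixing a \emph{single} locality-row and only taking triples from it, contributing just $(q-2){r+1\choose 3}$ extra points. The hypothesis $r\le 4$ then ensures that any two distinct triples from that one row span at most $r+1\le 5$ columns, so five-wise independence separates them; it is not about pair--triple overlap as you suggest.

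\textbf{Inequality (\ref{iii_1}).} Your cap argument looks at the $lr$ tails of actual columns (one column per block discarded) and, when three tails satisfy $\alpha t_1+\beta t_2+\gamma t_3=0$, proposes to ``correct'' the nonzero locality-coordinates by adding one further column from each involved block. But that correcting column has its own (generally nonzero) tail, so after absorbing the locality sums you no longer have zero in the tail coordinates, and no dependency among columns of $H$ is produced. The paper instead fixes one column $a_0$ in each block and forms the $r$ differences $a_j-a_0$; these already have zero in all locality coordinates, so three of them being collinear in $PG(u-1,q)$ is \emph{directly} a linear relation among at most $6$ columns of $H$ (two columns per difference), yielding the cap bound $lr\le m_2(u-1,q)$ without any correction step.
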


\begin{proof}[Proof of Lemma \ref{keylemma}]
The idea of the proof is to convert sets of the columns of the parity-check matrix $H$ into points of a projective space and obtain a set of pairwise distinct points in this projective space from the matrix $H$. After that we shall use some results on the structure of projective spaces to give an upper bound for the number of such points to derive the above inequalities.

\textbf{Case 1.} This result was first proposed in \cite{HXC2017}, here we reprove it in a different way.

By the definition of $H$, the uppermost nonzero entry of each column lies in the corresponding locality-row. Given a locality-row with weight $r+1$, any two of columns covered by the same locality-row could generate $q-1$ different nonzero vectors, and each of these $q-1$ vectors is a multiple of any one of the rest. Since the first $l$ coordinates of these $q-1$ vectors are all $0$, thus we can omit these $l$ coordinates and regard these $q-1$ vectors as one point of $PG(u-1,q)$.

From this procedure, for all $l$ locality-rows of $H$, we can obtain $l\cdot{r+1\choose 2}$ points of $PG(u-1,q)$. Since any 4 columns of $H$ are linearly independent, all these points have to be pairwise independent, i.e. any point can not be a multiple of the other, which implies that these points are pairwise distinct. Since $PG(u-1,q)$ has $(q^u-1)/(q-1)$ distinct points, we obtain (\ref{keylemma1}).

\textbf{Case 2.} First, similar to the discussion of Case 1, we can obtain $l\cdot {r+1\choose 2}$ points of $PG(u-1,q)$, each point is generated by two different columns covered by a same locality-row.

Now, fix a locality-row with weight $r+1$, consider the $(q-2)(q-1)$ different nonzero vectors generated by $3$ distinct columns covered by the same locality-row. Due to the multiple relationship among these vectors, they can only be regarded as $(q-2)$ points of $PG(u-1,q)$. If $r\le 4$, since any 5 columns of $H$ are linearly independent, these $2\cdot {r+1\choose 3}$ points generated by $3$ distinct columns covered by the same locality-row have to be pairwise distinct. Moreover, the linear independency also guarantees that these $(q-2)\cdot{r+1\choose 3}$ points are disjoint from the $l\cdot {r+1\choose 2}$ points obtained above. Since $PG(u-1,q)$ has $(q^u-1)/(q-1)$ distinct points, we obtain (\ref{ii_1}).

\textbf{Case 3.} Similarly, for all $l$ locality-rows of $H$, we can obtain $lr$ points of $PG(u-1,q)$, each point is generated by one fixed column and other column covered by a same locality-row. Since any 6 columns of $H$ are linearly independent, we know that any 3 of these points are not collinear. Therefore, the number of these points is at most the largest size of the complete cap of $k$-caps in $PG(u-1,q)$, i.e. $m_2(u-1,q)$. Thus, we have (\ref{iii_1}).

Meanwhile, note that each locality-row covers $r+1$ columns, we can obtain $(q-2)\cdot{r+1\choose 3}$ points of $PG(u-1,q)$ generated by sets of 3 columns covered by a certain locality-row. Since any 6 columns of $H$ are linearly independent, any two of these points have to be pairwise independent. This implies that these points are pairwise distinct. Since $|PG(u-1,q)|\leq(q^u-1)/(q-1)$, we have (\ref{iii_2}).

\end{proof}

To simplify our discussion in Section \ref{section_4}, we need the following proposition.

\begin{prop}\label{prop_hamming}
Given positive integers $r\ge 2$, $p>3$ and $r\ge p-3$, let $\mathcal{C}$ be an optimal $(r+p,r+p-3,r)$ LRC with $d=3$ over $\mathbb{F}_4$, then $r+p\leq 21$. Moveover, if $H$ is a parity-check matrix of $\mathcal{C}$, then $H$ is a parity-check matrix of a quaternary $[21,18,3]$ Hamming code or its shortened version. Moreover, each of the locality-rows of $H$ has at least $p-1$ zeros and there exists one of the locality-rows of $H$ has exactly $p-1$ zeros.
\end{prop}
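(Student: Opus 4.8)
The plan is to exploit the fact that, as soon as $d=3$ and $n-k=3$, a parity-check matrix of $\mathcal{C}$ is essentially a list of points of $PG(2,4)$, and to read off all three assertions from this.

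First I would dispose of the bound and the Hamming identification. Since $\mathcal{C}$ is optimal with $n-k=3$ and $d=3$, bound (\ref{singletonbound}) forces $\lceil k/r\rceil=2$, so the hypotheses $p>3$ and $r\ge p-3$ are exactly $k>r$ and $k\le 2r$. As $\mathcal{C}$ has dimension $k=n-3$, any parity-check matrix $H$ is a $3\times n$ matrix of rank $3$ over $\mathbb{F}_4$, and since $d\ge 3$ its columns are nonzero and pairwise linearly independent (\cite[Corollary 1.4.14]{H2010}). Hence the $n$ columns represent $n$ pairwise distinct points of $PG(2,4)$, and since $|PG(2,4)|=(4^{3}-1)/(4-1)=21$ we get $n=r+p\le 21$. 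A $3\times 21$ matrix over $\mathbb{F}_4$ whose columns run through all points of $PG(2,4)$ is, up to column permutation and scaling, a parity-check matrix of the $[21,18,3]$ Hamming code; deleting the $21-n$ absent columns exhibits $H$ as a parity-check matrix of a code shortened from it, which settles the first two assertions.

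For the last assertion, the easy half is immediate: by Definition \ref{defn} each locality-row is a codeword of $\mathcal{C}^{\perp}$ of weight at most $r+1$, hence has at least $n-(r+1)=p-1$ zeros. For the existence of a locality-row with exactly $p-1$ zeros I would first discard the boundary value $r=p-3$: there $r\mid k$, so Lemma \ref{r_mid_k} forces $(r+1)\mid n$, whereas $n=r+p=2(r+1)+1$, impossible for $r\ge 2$. Thus $r\ge p-2$, whence $r+1<n\le 2(r+1)$ and $\lceil n/(r+1)\rceil=2$, so by (\ref{inequality}) the number $l$ of locality-rows lies in $\{2,3\}$. Next, by Lemma \ref{abount_H'}, deleting one locality-row of weight $w$ together with the $w$ columns it covers yields the parity-check matrix of an $[\,n-w,\ n-w-2,\ 3\,]$ MDS code over $\mathbb{F}_4$; Theorem \ref{MDS} then gives $n-w\le q+1=5$, so every locality-row has weight in $[\,r+p-5,\ r+1\,]$. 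If $p=6$ this interval collapses to $\{r+1\}$ and we are done. If $p\in\{4,5\}$ I would translate to $PG(2,4)$: a weight-$w$ locality-row corresponds to a line carrying exactly $n-w\in[p-1,5]$ of the $n$ data points; the covering condition says no data point lies on all the locality-lines; and when $l=3$ the rank-$3$ condition on $H$ forces the three locality-lines not to be concurrent. Applying inclusion–exclusion to $|\{\text{data points}\}\cap(L_1\cup\cdots\cup L_l)|$ against $n=r+p$, and using $r\ge p-2$ together with the fact that a line of $PG(2,4)$ has only five points (and, if necessary, taking $l$ to be its minimal value $2$), forces at least one locality-line to carry exactly $p-1$ data points, i.e.\ a locality-row of weight exactly $r+1$.

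The main obstacle is this last step for $p\in\{4,5\}$: the small-geometry bookkeeping of how two or three lines of $PG(2,4)$, each prescribed to pass through a fixed number of the $n$ marked points, can sit relative to one another, and the verification that in every such configuration some line passes through exactly $p-1$ of them. It is precisely the tiny gap between $p-1$ and the line size $5$ over $\mathbb{F}_4$ that makes this work. By contrast, the bound $n\le 21$, the Hamming identification, and the case $p=6$ are short once the columns of $H$ are recognized as points of $PG(2,4)$.
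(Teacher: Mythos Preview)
Your treatment of the bound $r+p\le 21$, the Hamming-code identification, and the ``at least $p-1$ zeros'' direction is exactly the paper's: columns of the $3\times n$ parity-check matrix are distinct points of $PG(2,4)$, and locality-rows have weight at most $r+1$.

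For the ``exactly $p-1$ zeros'' assertion, the paper's entire proof is the phrase ``to satisfy the limitation of parameters.'' What this is gesturing at is that in the classification the code is being listed under its \emph{exact} locality $r$: if every locality-row of $H$ had weight at most $r$, then every coordinate would be covered by a dual codeword of weight $\le r$, the locality would be $\le r-1$, and the code would appear under a smaller parameter. Once this convention is made explicit, the claim is a one-liner: some coordinate $i$ has no covering dual codeword of weight $\le r$, so the locality-row through $i$ has weight exactly $r+1$, i.e.\ exactly $p-1$ zeros.

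Your attempt to replace this by inclusion--exclusion on lines of $PG(2,4)$ is both more work than the paper does and cannot succeed as stated. For a concrete obstruction, take $p=4$, $r=4$: let $L_1,L_2$ be two lines of $PG(2,4)$ meeting in $Q$, and let the eight columns of $H$ be the points of $(L_1\cup L_2)\setminus\{Q\}$, with the two locality-rows corresponding to $L_1,L_2$ and the third row to a line through $Q$ distinct from $L_1,L_2$. This $H$ defines an optimal $(8,5,4)$-LRC with $d=3$, yet both locality-rows have four zeros and neither has three. The code's actual locality is $3$, which is precisely why the paper's classification would file it under $r=3$, not $r=4$; but nothing in your line-counting detects this. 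So the ``main obstacle'' you identify is real, and the fix is not more geometry but the exact-locality reading of the parameter $r$.
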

\begin{proof}
Since $d=3$, each column of $H$ can be regarded as a point in $PG(2,4)$. Hence, $r+p\leq (4^3-1)/(4-1)=21$. By \cite[Theorem 1.8.2]{H2010}, any quaternary $[21,18,3]$ code is equivalent to the Hamming code. When $5<r+p<21$, since the number of columns of $H$ is less than $21$, $H$ can be viewed as a parity-check matrix of a shorten version of the $[21,18,3]$ Hamming code. Moreover, each of the locality-rows of $H$ has at least $p-1$ zeros and there exists one of the locality-rows of $H$ has exactly $p-1$ zeros to satisfy the limitation of parameters.
\end{proof}

\section{The classification of the optimal quaternary LRCs}\label{section_4}

Based on the necessary conditions proved in Section \ref{section_3}, now we begin to determine all the parameters of optimal quaternary LRCs meeting the Singleton-like bound. For each class of parameters, we will present an explicit construction. We consider the quaternary field, $\mathbb{F}_4=\{0,1,\omega,\omega\}$, where $\omega$ is the primitive of $\mathbb{F}_4$


The following proposition follows from Theorem \ref{MDS} and Lemma \ref{abount_H'}.

\begin{prop}\label{prop01}
Let $\mathcal{C}$ be an optimal quaternary $(n,k,r)$ LRC with $d=n-k-\lceil k/r\rceil+2$ according to $(\ref{singletonbound})$ and $H$ be its parity-check matrix of the form described in $(\ref{WLOG})$. Let $H'$ be the $m'\times n'$ matrix obtained from $H$ by deleting any fixed $\lceil k/r\rceil-1$ locality-rows and all the columns they covered. Then $H'$ is a full rank parity-check matrix of an $[n',k',d']$ linear code $\mathcal{C}'$ over $\mathbb{F}_4$ with the following possible parameters:

\centering
\begin{tabular}{|c|c|c|c|c|c|c|}
  \hline$n'$&$n'(n'\ge 2)$&$n'(n'\ge 3)$&$4$&$5$&$5$&$6$\\
  \hline$k'$&$n'-1$&$1$&$2$&$2$&$3$&$3$\\
  \hline$d'$&$2$&$n'$&$3$&$4$&$3$&$4$\\
  \hline
\end{tabular}
\end{prop}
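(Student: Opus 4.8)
The plan is to combine the structural result from Lemma \ref{abount_H'} — that $\mathcal{C}'$ is an $[n',k',d']$ MDS code with $d'=d$ over $\mathbb{F}_4$ — with the non-existence restrictions for MDS codes over $\mathbb{F}_4$ coming from Theorem \ref{MDS}, and with the bound $d \le q = 4$ from Lemma \ref{singleton_defect} when $r \nmid (k-1)$ (note $\mathcal{C}'$ being MDS already forces $d' = n' - k' + 1$). Since $\mathcal{C}$ is optimal, $d' = d = n - k - \lceil k/r\rceil + 2 \ge 2$, so $d' \in \{2,3,4,\dots\}$, and we must show the only realizable $(n',k',d')$ are the seven columns listed.

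First I would dispose of the degenerate dimensions: $k' = 1$ gives a repetition-type code $[n', 1, n']$, which exists over $\mathbb{F}_4$ for every $n' \ge 1$ (this is column two, valid for $n' \ge 3$ since we need $d' = n' \ge 3$ here — the cases $n' \le 2$ either collapse into the $d'=2$ family or are excluded because $k > r$ forces $\lceil k/r\rceil \ge 2$ and hence $n' \ge$ something). Dually, $k' = n' - 1$ gives the $[n', n'-1, 2]$ parity-check code, existing for all $n' \ge 2$ (column one). So assume $2 \le k' \le n' - 2$. Then Theorem \ref{MDS}(2) forces $k' + 1 \le 4$, i.e. $k' \le 3$, hence $k' \in \{2, 3\}$; and Theorem \ref{MDS}(1) forces $d' = n' - k' + 1 \le 4$, i.e. $n' \le k' + 3 \le 6$. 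Enumerate: for $k' = 2$ we get $n' \in \{4, 5\}$ (since $n' \ge k'+2 = 4$ and $n' \le 5$), giving $d' = 3$ and $d' = 4$; for $k' = 3$ we get $n' \in \{5, 6\}$ (since $n' \ge 5$ and $n' \le 6$), giving $d' = 3$ and $d' = 4$. These are exactly columns four through seven. Finally one checks existence: the $[4,2,3]$, $[5,2,4]$ and $[6,3,4]$ codes are (shortened/extended) Reed–Solomon / MDS codes over $\mathbb{F}_4$, and $[5,3,3]$ is the dual of $[5,2,4]$; all exist since $q = 4 \ge d' - 1$ and $q = 4 \ge k'$ in the relevant ranges.

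There is essentially no hard step here — the proposition is a bookkeeping corollary. The only point requiring a little care is making sure the列 $k' = 1$ with $n' \ge 3$ and the列 $k' = n'-1$ with $n' \ge 2$ are stated with the right ranges and that they do not overlap spuriously (they coincide only at $n' = 2$, which is excluded from the first by the $n' \ge 3$ condition), and checking that no MDS parameters with $k' = 2$ or $k' = 3$ were missed at the low end ($n' = k'+1$ would be the trivial $[k'+1,k',2]$ code, already subsumed in column one, so it need not be listed separately). I would also remark that $d' = d$ together with $d \ge 2$ rules out $d' \le 1$, and that the table is exhaustive precisely because Theorem \ref{MDS} leaves no other numerical possibilities once $2 \le k' \le n'-2$ over $\mathbb{F}_4$.
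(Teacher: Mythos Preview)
Your proposal is correct and follows essentially the same route as the paper, which simply states that the proposition ``follows from Theorem \ref{MDS} and Lemma \ref{abount_H'}'' without writing out the case analysis; you have supplied exactly that analysis. One minor remark: the appeal to Lemma \ref{singleton_defect} in your plan is superfluous, since once Lemma \ref{abount_H'} gives that $\mathcal{C}'$ is MDS with $2\le k'\le n'-2$, Theorem \ref{MDS}(1) already yields $d'=n'-k'+1\le q=4$ directly, and the degenerate cases $k'=1$ and $k'=n'-1$ need no distance bound at all.
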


Applying Lemma \ref{singleton_defect} with $q=4$, we can obtain that the minimum distance of a quaternary optimal LRC is at most $8$. Then according to the Proposition \ref{prop01}, our discussion will be divided into the following 4 cases.

1. $d=2$ and $H'$ contains exactly one row. $H'$ is a parity-check matrix of a quaternary $[n',n'-1,2]$ $(n'\ge 2)$ MDS code.

2. $d=3$ and $H'$ contains two rows. $H'$ is a parity-check matrix of a quaternary $[5,3,3]$ or $[4,2,3]$ or $[3,1,3]$ MDS code.

3. $d=4$ and $H'$ contains three rows. $H'$ is a parity-check matrix of a quaternary $[6,3,4]$ or $[5,2,4]$ or $[4,1,4]$ MDS code.

4. $5\leq d\leq 8$ and $H'$ contains more than three rows. $H'$ is a parity-check matrix of a quaternary $[n',1,n']$ $(n'\ge 5)$ MDS code.

\subsection{$d=2$ and $H'$ contains one row}\label{subsection_4_1}

In this case, by Proposition \ref{prop01}, $H'$ is a full rank parity-check matrix of a quaternary $[n',n'-1,2]$ $(n'\ge2)$ MDS code. Thus we have $d=d'=2$, $m'=n-k-(\lc k/r\rc-1)=1$. Recall that $l$ is the number of the rows in $H_1$, by (\ref{inequality}), we have $\lc k/r\rc=l=n-k$, which means that $H=H_1$.

If $r\mid k$, then $n-k=k/r$ and $n=(r+1)k/r$, thus all $k/r$ rows of $H$ have weight $r+1$ and their supports are pairwise disjoint. The parity-check matrix of the $[k+k/r,k,2]$ LRC with locality $r$ has the following form:
\begin{flalign}\label{pcm_d2}
H=\big(I_{\frac{k}{r}}\otimes\underbrace{(1~1~\ldots~1)}_{r+1}\big)_{\frac{k}{r}\times\frac{(r+1)k}{r}}.
\end{flalign}
E.g., for $n=12$, $k=9$ and $r=3$, we have
\begin{flalign*}
H=\left(\begin{array}{cccccccccccc}1&1& 1 & 1 & 0 & 0 & 0 & 0 & 0 & 0 & 0 & 0 \\0&0& 0 & 0 & 1 & 1 & 1 & 1 & 0 & 0 & 0 & 0 \\0&0& 0 & 0 & 0 & 0 & 0 & 0 & 1 & 1 & 1 & 1 \end{array}\right).
\end{flalign*}

If $r\nmid k$, then $r\ge 2$. Let $k=sr+t$, where $1\leq t\leq r-1$, then we have $\lceil\frac{k}{r}\rceil=s+1$, $n=k+\lceil\frac{k}{r}\rceil=(r+1)\lceil\frac{k}{r}\rceil-(r-t)$, where $1\leq r-t\leq r-1$. Let $\hat{H}$ be a $\lceil\frac{k}{r}\rceil\times(r+1)\lceil\frac{k}{r}\rceil$ matrix of the form given in (\ref{pcm_d2}), where $\frac{k}{r}$ is replaced with $\lceil\frac{k}{r}\rceil$.

From the analysis above, we have the following two types of parity-check matrices.
\begin{flalign}
H \text{ is } &\text{a }\lceil\frac{k}{r}\rceil\times(k+\lceil\frac{k}{r}\rceil)\text{ matrix obtained from }\hat{H} \text{ by deleting any } r-t\nonumber\\ &\text{columns of }\hat{H}, \text{ such that at least one}\text{ row of }H\text{ has weight }r+1;\label{H_notlow}\\
\underline{H}\text{ is }&\text{obtained from }H\text{ by substituting at most }r-t~0\text{'s of }H\text{ to }1\text{'s or }\omega\text{'s}\nonumber\\&\text{or }\omega^2\text{'s such that the weight of each row of }\underline{H}\text{ is at most }r+1.\label{H_low}
\end{flalign}
\indent Then, in the sense of equivalence, every $(k+\lc\frac{k}{r}\rc,k,r)$ LRC with $d=2$ must have parity-check matrix as $H$ or $\underline{H}$. E.g. for $n=11$, $k=8$ and $r=3$, its parity-check matrix is
\begin{flalign*}
H=\left(\begin{array}{ccccccccccc}1&1& 1 & 1 & 0 & 0 & 0 & 0 & 0 & 0 & 0  \\0&0& 0 &0 & 1 & 1 & 1 & 1 & 0 & 0 & 0  \\0&0& 0 & \underline{0} & 0 & 0 & 0  & \underline{0} & 1 & 1 & 1 \end{array}\right).
\end{flalign*}
With any one of the two underlined zeros being substituted to $1$ or $\omega$ or $\omega^2$, $\underline{H}$ is obtained.
\begin{remark}
The analysis in this subsection is quite similar to that in \cite{HX2016,HXC2017}, since it has no relevance to the field size.
\end{remark}
\subsection{$d=3$ and $H'$ contains two rows}\label{d=3}

Since $H'$ is a parity-check matrix of a quaternary $[5,3,3]$ or $[4,2,3]$ or $[3,1,3]$ MDS code, we have $n-k-\lc k/r\rc=d-2=1$. Meanwhile, since $H_1$ has $l\ge\lc k/r\rc$ rows, by (\ref{inequality}), we have
\begin{flalign*}
\lc k/r\rc\leq l\leq n-k=\lc k/r\rc+1.
\end{flalign*}

\begin{itemize}
  \item For the case $l=\lc k/r\rc$
\end{itemize}

We have $n-k=l+1$, which implies that $H_2$ contains only one row. According to the construction procedure of $H'$, $H'$ contains a locality-row covering all the columns remained after the deletion. Therefore, consider the linear code $\mathcal{C'}$ of length $n'$ with $H'$ as its parity-check matrix, the dual code of $\mathcal{C}'$ has weight distribution with $A_{n'}>0$. By Theorem \ref{MDS_weight_distribution}, dual codes of both $[4,2,3]$ and $[3,1,3]$ MDS codes satisfy this condition, while the dual code of $[5,3,3]$ MDS code has weight distribution with $A_5=0$, thus $H'$ can be the parity-check matrix of a $[4,2,3]$ MDS code or a $[3,1,3]$ MDS code. This leads to $n'=3$ or $n'=4$.

If $r\mid k$, set $k=sr$ for some $s>0$. By Lemma \ref{r_mid_k}, we have $(r+1)\mid n$. Since $n=k+l+1=s(r+1)+1$, this contradicts the fact that $(r+1)\mid n$. Therefore, optimal LRCs with such parameters do not exist.

If $r\nmid k$, then $r\ge 2$. Let $k=sr+t$, where $1\le t\le r-1$ and we denote $\gamma$ as the number of columns covered by the supports of the deleted $\lc k/r\rc-1$ locality-rows. Thus, we have $\gamma\leq(\lc k/r\rc-1)(r+1)$. Then
\begin{flalign}\label{inequality_1}
k+\lc k/r \rc+1-(\lc k/r\rc-1)(r+1)\leq n-\gamma=n'\leq 4,
\end{flalign}
i.e.
\begin{equation*}
k-r\cdot\lc k/r\rc+r\le 2.
\end{equation*}
This leads to $t=1$ or $t=2$.

\textbf{Case} $\mathbf{t=1}$: In this case, we have $3\le n'\le 4$, which implies that $H'$ is the parity-check matrix of a quaternary $[4,2,3]$ or $[3,1,3]$ MDS code.


If $s=1$, we have $k=r+1$, $n=r+4$ and $r\ge 2$. Notice that after removing arbitrary locality-row with $m$ zeros and the columns it covered, we can obtain a parity-check matrix of some code with length $m$. Therefore, each locality-row of $H$ must have $3$ or $4$ zeros to ensure the existence of $H'$. Moreover, applying Proposition \ref{prop_hamming} with $p=4$, we have $n=r+4\leq 21$, and one of locality-rows contains exactly 3 zeros. When $n=21$,
\begin{flalign}\label{pcm_hamming_code}
H=\left(\begin{array}{ccccccccccccccccccccc}
1&1&1&1&1&1&1&1&1&1&1&1&1&1&1&1&0&0&0&0&0\\
1&1&1&\omega&\omega&\omega&\omega^2&\omega^2&\omega^2&\omega^2&\omega&0&0&0&0&1&0&1&1&1&1\\
1&\omega&\omega^2&1&\omega&\omega^2&1&\omega&\omega^2&0&0&\omega^2&0&\omega&1&0&1&\omega^2&\omega&1&0
\end{array}\right).
\end{flalign}
One can easily show that $(\ref{pcm_hamming_code})$ doesn't meet the restrictions above. Similarly, when $n=20$, as a shortened version of the $[21,18,3]$ Hamming code, the $[20,17,3]$ code doesn't meet the restrictions either. Thus we have $n\leq 19$ and $r\leq 15$. When $n=19$, we have
\begin{flalign}\label{pcm_hamming_code_1_3}
H=\left(\begin{array}{ccccccccccccccccccccc}
1&1&1&1&1&1&1&1&1&1&1&1&1&1&1&1&0&0&0\\
1&1&1&\omega&\omega&\omega&\omega^2&\omega^2&\omega^2&\omega^2&\omega&1&0&0&0&0&1&1&1\\
\hline
1&\omega&\omega^2&1&\omega&\omega^2&1&\omega&\omega^2&0&0&0&\omega^2&0&\omega&1&\omega&1&0
\end{array}\right).
\end{flalign}
Therefore, we can puncture the first 0 to 13 columns from (\ref{pcm_hamming_code_1_3}) respectively to obtain an optimal LRC with parameter $[r+4,r+1,3]$, $2\leq r\leq 15$.
%

When $s\ge 2$, if $n'=3$, we have $\gamma=n-n'=s(r+1)$. Thus, the supports of the $s$ deleted locality-rows are pairwise disjoint with size $r+1$. Similarly, when $n'=4$, one can show that the supports of $s$ deleted locality-rows intersect in one column.
%

W.l.o.g., suppose that the weight of the first locality-row of $H$ is $r+1$. When $s\ge 4$, we claim that $r\leq 3$. Otherwise, suppose $r\ge 4$ and there exists a column covered by the first two locality-rows. Due to the arbitrariness of $s$ deleted rows, all the last $s-2$ locality-rows are pairwise disjoint and have weight $r+1$. If we delete the first $s$ locality-rows and the coordinates they cover, the resulting $H'$ has length $r+1\ge 5$, a contradiction. On the other hand, if there is no column simultaneously covered by the first and the last $s$ locality-rows, after deleting the last $s$ locality-rows and the coordinates they cover, the resulting $H'$ has length $r+1\ge 5$, which also leads a contradiction. Therefore, $r\leq 3$.

If $s=3$, suppose there are two columns simultaneously covered by the first and the last three locality-rows, after deleting the first locality-row and other two locality-rows which intersect the first locality-row, we have $\gamma=3(r+1)-2$, a contradiction. Therefore, there exist at most one column simultaneously covered by the first and the last three locality-rows. After deleting the last three locality-rows and the columns they cover, the resulting $H'$ has length $r+1-1\leq 4$, which leads to $r\leq 4$.

If $s=2$, suppose there are three columns simultaneously covered by the first and the last two locality-rows, by pigeonhole principle, there exist two columns simultaneously covered by the first locality-row and one of the last two locality-rows. If we delete these two locality-rows and the columns they cover, we have $\gamma=s(r+1)-2$, a contradiction. Therefore, there are at most two columns covered by the first and last two locality-rows. After deleting the last two locality-rows and the columns they cover, the resulting $H'$ has length $r+1-2\leq 4$, which leads to $r\leq 5$.

To sum up, we obtain the following parameters.
%

If $r=2$, then $k=2s+1$, $n=k+\lc k/r\rc+1=3s+3$. Since $n'\leq r+1$, we have $n'$ can only be $3$. Thus, $H'$ is the parity-check matrix of a quaternary $[3,1,3]$ code. Due to the arbitrariness of the $s$ deleted locality-rows, all the locality-rows of $H$ have weight exactly $r+1$. The following $H$ gives the corresponding optimal code
\begin{flalign}\label{3s+3}
H=\left(\begin{array}{c}I_{s+1}\otimes(\begin{array}{ccc}1&1&1\end{array})\\ \hline \underbrace{(1~1~\ldots~1)}_{s+1}\otimes (\begin{array}{ccc}0&1&\omega\end{array}) \end{array}\right).
\end{flalign}

If $r=3$, then $k=3s+1$, $n=k+\lc k/r\rc+1=4s+3$. Since $(r+1)\nmid n$, $H'$ can be the parity-check matrix of a quaternary $[3,1,3]$ or $[4,2,3]$ code.
Then the following $H$ gives the corresponding optimal code
\begin{flalign}\label{4s+3_2}
H=\left(\begin{array}{c|c}\Big(\begin{array}{ccccccc}1&1&1&\alpha&0&0&0\\0&0&0&1&1&1&1\end{array}\Big)&\mathbf{0}_{2\times(4(s-1))}\\ \hline \mathbf{0}_{(s-1)\times 7}&I_{s-1}\otimes(\begin{array}{cccc}1&1&1&1\end{array})\\ \hline (\begin{array}{ccccccc}0&\omega&\omega^2&1&0&\omega&\omega^2\end{array})& \underbrace{(1~1~\ldots~1)}_{s-1}\otimes (\begin{array}{cccc}1&0&\omega&\omega^2\end{array}) \end{array}\right),
\end{flalign}
where $\alpha$ can be $0$ or $1$.
\begin{remark}
In the above case, after deleting the first $s$ locality-rows from $H$, the resulting $H'$ is the parity-check matrix of a quaternary $[4,2,3]$ MDS code, while after deleting the last $s$ locality-rows, the resulting $H'$ is the parity-check matrix of a quaternary $[3,1,3]$ MDS code.
\end{remark}

If $r=4$, $2\leq s\leq 3$, $k=4s+1$, $n=5s+3$. The following $Hs$ give the the corresponding optimal codes with parameters $[5s+3,4s+1,3]$ with $2\leq s\leq 3$ respectively.
\begin{flalign}
H=\left(\begin{array}{ccccccccccccc}
1&1&1&1&1&0&0&0&0&0&0&0&0\\
0&0&0&0&1&1&1&1&1&0&0&0&0\\
0&0&0&0&0&0&0&0&1&1&1&1&1\\
\hline
0&1&\omega&\omega^2&0&1&\omega&\omega^2&0&0&1&\omega&\omega^2
\end{array}
\right),~~
H=\left(\begin{array}{cccccccccccccccccc}
1&1&1&1&1&0&0&0&0&0&0&0&0&0&0&0&0&0\\
0&0&0&0&1&1&1&1&1&0&0&0&0&0&0&0&0&0\\
0&0&0&0&0&0&0&0&0&1&1&1&1&1&0&0&0&0\\
0&0&0&0&0&0&0&0&0&0&0&0&0&1&1&1&1&1\\
\hline
0&1&\omega&\omega^2&0&0&1&\omega&\omega^2&0&1&\omega&\omega^2&0&0&1&\omega&\omega^2
\end{array}
\right).\label{5s+3}
\end{flalign}

If $r=5$, then $s=2$, $k=5s+1$, $n=6s+3$. The following $H$ gives the corresponding optimal code with parameter $[6s+3,5s+1,3]$ with $s=2$.
\begin{flalign}
H=\left(\begin{array}{ccccccccccccccc}
1&1&1&1&1&1&0&0&0&0&0&0&0&0&0\\
0&0&0&0&0&1&1&1&1&1&1&0&0&0&0\\
0&0&0&0&1&0&0&0&0&0&1&1&1&1&1\\
\hline
0&1&\omega&\omega^2&0&1&0&1&\omega&\omega^2&1&0&1&\omega&\omega^2
\end{array}
\right).\label{6s+3}
\end{flalign}

\textbf{Case} $\mathbf{t=2}$: In this case, we have $n'=4$, i.e., $H'$ is the parity-check matrix of a quaternary $[4,2,3]$ MDS code.

If $s=1$, we have $k=r+2$, $n=r+5$ and $r\ge 3$. With the similar to the case when $t=1$ and $s=1$, we have $n\leq 20$ and $r\leq 15$.
\begin{flalign}\label{pcm_hamming_code_1_4}
H=\left(\begin{array}{cccccccccccccccccccccc}
1&1&1&1&1&1&1&1&1&1&1&1&1&1&1&1&0&0&0&0\\
1&1&1&\omega&\omega&\omega&\omega^2&\omega^2&\omega^2&\omega^2&\omega&1&0&0&0&0&1&1&1&1\\
\hline
1&\omega&\omega^2&1&\omega&\omega^2&1&\omega&\omega^2&0&0&0&\omega^2&0&\omega&1&\omega^2&\omega&1&0
\end{array}\right).
\end{flalign}
Therefore, we can puncture the first 0 to 12 columns from (\ref{pcm_hamming_code_1_4}) respectively to obtain an optimal LRC with parameters $[r+5,r+2,3]$, $3\leq r\leq 15$.

When $s\ge 2$, due to the arbitrariness of the deleted $s$ locality-rows during the construction of $H'$ and $\gamma=s(r+1)$, we have $r=3$, $k=3s+2$ and $n=4s+4$. The following parity-check matrix gives the corresponding optimal construction:
\begin{flalign}\label{4s+4}
H=\left(\begin{array}{c}I_{s+1}\otimes(\begin{array}{cccc}1&1&1&1\end{array})\\ \hline \underbrace{(1~1~\ldots~1)}_{s+1}\otimes (\begin{array}{cccc}0&1&\omega&\omega^2\end{array}) \end{array}\right).
\end{flalign}

\begin{itemize}
  \item For the case $l=\lc k/r\rc+1$
\end{itemize}

We have $n-k=l$, which implies that $H=H_1$.

If $r\mid k$, by Lemma \ref{r_mid_k}, the supports of all the locality-rows in $H$ must be pairwise disjoint. However, since $H'$ is a parity-check matrix of a quaternary $[5,3,3]$ or $[4,2,3]$ or $[3,1,3]$ MDS code, thus supports of the rows of $H'$ must intersect on some coordinates, this leads to a contradiction.

If $r\nmid k$, let $k=sr+t$, where $1\leq t\leq r-1$. With the same analysis as that for the case $l=\lc k/r\rc$, we can obtain a similar inequality as (\ref{inequality_1}):
\begin{flalign*}
k+\lc k/r \rc+1-(\lc k/r\rc-1)(r+1)\leq n-\gamma=n'\leq 5,
\end{flalign*}
this leads to $1\le t\le 3$.

Noted that $3\leq n'\leq 5$, for the sake of convenience, we divide our discussion into the following $3$ cases.

\textbf{Case} $\mathbf{n'}$ \textbf{can only be $5$}:

According to the weight distribution of the dual code of the quaternary $[5,3,3]$ MDS code corresponding to $H'$, we know that supports of the rows of $H'$ pairwise intersect on at least $3$ coordinates. Meanwhile, we have $\gamma=n-n'=s(r+1)-(3-t)$, which means there exist at most two columns covered by two members of the $s$ deleted locality-rows, therefore, we have $s=1$.

From the arbitrariness of the rows deleted to construct $H'$, the first and the last locality-rows of $H$ must intersect on at least $3$ coordinates which do not appear in the $H'$ obtained by deleting the first or the last locality-row of $H$. Since each row of $H'$ has weight at least $4$, the last row of $H$ has weight at least $7$. Thus $r\ge 6$.

When $t=1$ or $2$, the deleted locality-row of $H$ has weight $n-n'=r+t-2\leq r$. By the arbitrariness of the row deleted, each locality-row of $H$ has weight less than $r+1$, this contradicts the fact that $H$ has locality $r$.

When $t=3$, we have $\gamma=r+1$, $k=r+3$, $n=r+6$ and $d=3$. Applying Proposition \ref{prop_hamming} with $p=6$, we obtain $n=r+6\leq21$ and $6\leq r\leq 15$. Moreover, each of the locality-rows contains exact 5 zeros. When $n=21$, we have
\begin{flalign}\label{pcm_hamming_5}
H=\left(\begin{array}{ccccccccccccccccccccc}
1&1&1&1&1&1&1&1&1&1&1&1&1&1&1&1&0&0&0&0&0\\
1&1&1&\omega&\omega&\omega&\omega^2&\omega^2&\omega^2&\omega^2&\omega&0&0&0&0&1&0&1&1&1&1\\
1&\omega&\omega^2&1&\omega&\omega^2&1&\omega&\omega^2&0&0&\omega^2&0&\omega&1&0&1&\omega^2&\omega&1&0
\end{array}\right).
\end{flalign}
Therefore, we can puncture the first 0 to 9 columns from (\ref{pcm_hamming_5}) respectively to obtain an optimal LRC with parameters $[r+6,r+3,3]$, $6\leq r\leq 15$.

When $n'=3$ or $4$, through an analysis about the value of $\gamma$ and $n'$, we can obtain $s=1$, $t=1$. Since the optimal LRCs with such parameter have been discussed, we omit the discussion.
\subsection{$d=4$ and $H'$ contains three rows}

First, we need the following lemma to determine the nonexistence of optimal LRCs.

\begin{lem}\label{ban_struture}
If $H$ is a parity-check matrix of a linear code with minimum distance $d$, which has the following form,
\begin{flalign}\label{length_12_6_example}
H=\left(\begin{array}{ccccccccccc}
1&1&1~~~1&1&1&0&0&0~~~0&0&0&1\\
0&0&0~~~0&0&0&1&1&1~~~1&1&1&x_1\\
&&\multirow{1}{*}{\huge A}&&&&&\multirow{1}{*}{\huge B}&&&x_2\\
&&&&&&&&&&x_3
\end{array}\right),
\end{flalign}
where
\begin{flalign*}
\widetilde{A}:=\left(\begin{array}{ccccc}1&1&1~~~1&1&1\\ &&\multirow{1}{*}{\huge A}&&\\ &&&& \end{array}\right)~\text{and}~
\widetilde{B}:=\left(\begin{array}{ccccc}1&1&1~~~1&1&1\\ &&\multirow{1}{*}{\huge B}&&\\ &&&& \end{array}\right)
\end{flalign*}
are both parity-check matrices of a quaternary $[6,3,4]$ MDS code and $x_1\in\mathbb{F}_4^\ast$, $x_2,x_3\in\mathbb{F}_4$. Then, $d\leq 3$. 
\end{lem}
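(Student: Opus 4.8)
## Proof Proposal for Lemma \ref{ban_struture}

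The plan is to argue by contradiction: assuming $d \ge 4$, I will construct a nonzero codeword of weight at most $3$, i.e., exhibit at most $3$ linearly dependent columns of $H$. The key structural fact I would exploit is that $\widetilde A$ and $\widetilde B$ are parity-check matrices of $[6,3,4]$ MDS codes, so their columns form arcs in $PG(2,4)$ (indeed $6$-arcs, which by Theorem \ref{hyperoval} together with the first all-ones row make up a hyperoval, namely a conic plus its nucleus, in the plane $PG(2,4)$). The last column $(1,x_1,x_2,x_3)^T$ lives in $\mathbb{F}_4^4$ with nonzero first two coordinates, and I want to show it is forced into a linear dependency with columns drawn from the $A$-block and the $B$-block.

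First I would set up coordinates. Write the $A$-block columns as $(1,0,\mathbf{a}_i)^T$ for $i \in [6]$ where $\{\mathbf{a}_i\}$ is a $6$-arc-with-structure in the plane spanned by the $3$rd and $4$th rows restricted to those coordinates; likewise the $B$-block columns as $(0,1,\mathbf{b}_j)^T$. The point is that $\{\mathbf{a}_i\} \cup \{(\text{nucleus})\}$ forms a hyperoval in $PG(2,4)$, and since a hyperoval has $q+2 = 6$ points, the six vectors $\mathbf{a}_i \in \mathbb{F}_4^3$ together with the all-ones condition from the first row already account for a full conic-plus-nucleus; the same holds for $\{\mathbf{b}_j\}$. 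The crucial observation is the last column $\mathbf{x} = (1, x_1, x_2, x_3)^T$. A dependency $\lambda \mathbf{x} + \mu (1,0,\mathbf a_i)^T + \nu(0,1,\mathbf b_j)^T = 0$ of weight $3$ requires, from the first two rows, $\lambda + \mu = 0$ and $\lambda x_1 + \nu = 0$, so $\mu = -\lambda$, $\nu = -\lambda x_1$, both nonzero; the remaining two equations demand $(x_2, x_3)^T = \mathbf a_i - x_1 \mathbf b_j$ (after scaling $\lambda = 1$) in the appropriate $2$-dimensional coordinates. So I need: there exist $i, j$ with $\mathbf a_i - x_1 \mathbf b_j$ equal to the projection of $\mathbf x$ onto the bottom two rows. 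Counting: as $(i,j)$ ranges over $[6]\times[6]$ we get $36$ vectors $\mathbf a_i - x_1\mathbf b_j$; I would show this set covers all of $\mathbb{F}_4^2$ (which has only $16$ elements), or at least covers the particular target, by using the arc structure — in particular that the $\mathbf a_i$'s, projected suitably, hit every point of the relevant affine plane and the $\mathbf b_j$'s likewise, so their difference set is everything. Alternatively, if the target is not hit this way, I would look for weight-$3$ dependencies entirely inside $\widetilde A$ extended by one $B$-column, or use the nucleus point explicitly.

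A cleaner route, which I'd probably adopt to avoid brute-force casework: since $\widetilde A$ is the parity check of a $[6,3,4]$ MDS code, the columns of $A$ (the bottom $2$ rows of the $A$-block) are $6$ vectors in $\mathbb{F}_4^2 \setminus\{0\}$, any $2$ of which... wait, that can't be since $|\mathbb{F}_4^2\setminus\{0\}| = 15$ and we can't have all pairs independent among $6$ of them unless — actually MDS of the $3\times 6$ matrix $\widetilde A$ means every $3$ columns are independent, so every $2$ columns of $A$ (the $2\times 6$ lower part) together with the all-ones row give $3$ independent columns, meaning no two columns of the $2\times 6$ matrix $\begin{pmatrix}1&\cdots&1\\ & A &\end{pmatrix}$... this forces the $6$ columns of $A$ to be pairwise non-proportional points of $PG(1,4)$, but $PG(1,4)$ has only $5$ points — contradiction unless I recount. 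The resolution is that the "$6,3,4$ MDS" refers to $\widetilde A$ being $3 \times 6$ of rank $3$ with every $3$ columns independent; then $A$ is its last $2$ rows and the columns of $A$ need only be such that no column of $A$ is the zero of a weight-$3$ relation with the top row — I would carefully re-derive the exact constraint. In any case, the main obstacle is precisely this bookkeeping: correctly translating "every $3$ of the $4\times 1$ columns (one all-ones row, $A$ below) are independent" into a combinatorial statement about $A$'s columns in $PG(1,4)$ or $AG(1,4)$, and then showing that appending the last column $\mathbf x$ with nonzero $x_1$ necessarily creates either a $3$-column dependency across blocks or within one block plus $\mathbf x$. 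I expect the pigeonhole count — $6 \times 6 = 36$ difference-combinations mapping into a space of size $16$, hence hugely surjective onto the target — to do the work once the setup is correct, with Theorem \ref{hyperoval} pinning down the exact structure of the two conics so that no degenerate coincidence of the two blocks can avoid the dependency.
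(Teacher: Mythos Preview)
Your strategy matches the paper's: show that the extra column $(1,x_1,x_2,x_3)^T$ is always a linear combination of one $A$-block column and one $B$-block column, which after your normalisation reduces to asking whether $(x_2,x_3)\in\{a_i+x_1 b_j:i,j\in[6]\}$. The gap is that you never actually establish this surjectivity. Your ``pigeonhole'' remark is the wrong tool: $36$ elements landing in a set of size $16$ guarantees collisions, not that all $16$ targets are hit. Two generic $6$-element subsets of $\mathbb{F}_4^2$ need not have sum-set equal to $\mathbb{F}_4^2$, so some genuine structural input from the MDS hypothesis is required, and you do not provide it.

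The paper fills exactly this hole, and it is the heart of the argument. It first analyses the pair-sums $a_i+a_j$: for fixed $i$, the five differences $a_j-a_i$ ($j\neq i$) are pairwise non-proportional (this is what ``any three columns of $\widetilde A$ independent'' gives), hence they represent all five points of $PG(1,4)$; moreover each projective point arises from exactly three unordered pairs $\{i,j\}$. Using this (and the uniqueness of the quaternary $[6,3,4]$ code to align $\widetilde A$ and $\widetilde B$), the paper then classifies the vectors $\alpha_i+\lambda\beta_j$ for all $i,j\in[6]$, $\lambda\in\mathbb{F}_4^\ast$ into four types and counts them exactly: $1+15+12+20=48$ distinct values. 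Since there are also exactly $3\cdot 4\cdot 4=48$ candidate last columns $(1,x_1,x_2,x_3)^T$ with $x_1\neq 0$, every such column is hit, which is equivalent to your surjectivity claim holding for each fixed $x_1$.

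Your detour into $PG(1,4)$ for the columns of $A$ themselves is also off: the six $a_i$ are distinct \emph{affine} vectors in $\mathbb{F}_4^2$, not projective points, so there is no contradiction with $|PG(1,4)|=5$. The correct translation of the MDS condition is the one above, about the differences $a_j-a_i$ hitting all directions, and that is precisely the structural fact the paper exploits to make the exact count go through.
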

\begin{proof}[Proof of Lemma \ref{ban_struture}]
The idea of the proof is to regard the columns of $H$ in (\ref{length_12_6_example}) as the point of $PG(1,4)$, then try to analysis the linear relationship between them. We will find that if $H$ has the form as (\ref{length_12_6_example}), there exist $4$ columns linearly dependent. We finish the proof.

We use $a_i$, $b_i$ to denote the $i_{th}$ column of $\widetilde{A}$, $\widetilde{B}$ respectively. Noted that the first coordinate of $a_i$s are all 1, therefore, for distinct $i,j\in[6]$, $a_i+a_j$ can be regarded as a point $p_{a_i+a_j}$ of $PG(1,4)$. Based on this observation, we have the following claim demonstrating the linear dependency of columns in $A$.

\begin{claim}
If there exist $(a_{i_1},a_{i_2})$ and $(a_{j_1},a_{j_2})$, for some $i_1,i_2,j_1,j_2\in[6]$, such that $p_{a_{i_1}+a_{i_2}}=p_{a_{j_1}+a_{j_2}}$, then $\{i_1,i_2\}\cap \{j_1,j_2\}=\emptyset$. Moreover, for each point in $PG(1,4)$, there exist exactly three $2$-sets $\{a_i,a_j\}$ corresponding to a same point $p_{a_i+a_j}$.
\end{claim}
\begin{proof}[Proof of the claim]
Assume that $\{i_1,i_2\}\cap \{j_1,j_2\}\neq\emptyset$, then one can easily find $3$ columns in multiset $\{a_{i_1},a_{i_2},a_{j_1},a_{j_2}\}$ that are linearly dependent. This contradicts the requirement on the minimum distance of $\widetilde{A}$. Therefore, we know that there exist at most $3$ distinct $2$-sets $\{i,j\}\subseteq [6]$ such that $a_i+a_j$ corresponds to the same point in $PG(1,4)$. Besides, since there are ${6\choose 2}=15$ 2-sets and $(4^2-1)/(4-1)=5$ points in $PG(1,4)$ in all. For a fixed point $p_0$ in $PG(1,4)$, if there are less than three $2$-sets $\{i,j\}\subseteq [6]$ such that $p_{a_i+a_j}=p_0$, by pigeonhole principle, there exist at least four $2$-sets ${i,j}$ such that every $a_i+a_j$ corresponds to a same point, which contradicts the requirement on $\widetilde{A}$ to have minimum distance $4$.
\end{proof}
Thus, w.l.o.g., we assume that the correspondence between the columns of $\widetilde{A}$ and the points in $PG(1,4)$ has the following form:
\begin{flalign*}
p_{a_1+a_2}=p_{a_3+a_6}=p_{a_4+a_5}=c_1,\\
p_{a_1+a_3}=p_{a_2+a_5}=p_{a_4+a_6}=c_2,\\
p_{a_1+a_4}=p_{a_5+a_6}=p_{a_2+a_3}=c_3,\\
p_{a_1+a_5}=p_{a_3+a_4}=p_{a_2+a_6}=c_4,\\
p_{a_1+a_6}=p_{a_2+a_4}=p_{a_3+a_5}=c_5,
\end{flalign*}
where $c_1,\ldots,c_5$ are different points in $PG(1,4)$.

Moreover, if $a_{i_1}+a_{i_2}=a_{j_1}+a_{j_2}$ for some $i_1,i_2,j_1,j_2\in [6]$, then $a_{i_1}+a_{j_1}=a_{i_2}+a_{j_2}$. Besides, from the correspondence above, we can obtain another linear equation $a_{i_1}+a_{j_1}=\lambda_0 (a_{i_2}+a_{j_3})$ for some $j_3\in[6]\setminus\{i_1,i_2,j_1,j_2\}$ and $\lambda_0\in \mathbb{F}_4^*$. This leads to $3$ columns of $a_{i}$s being linearly dependent, which contradicts the fact that $d=4$. Therefore, w.l.o.g., we can also assume that
\begin{flalign*}
a_1+a_2=\omega (a_3+a_6)=\omega^2( a_4+a_5)\sim c_1,\\
a_1+a_3=\omega (a_2+a_5)=\omega^2( a_4+a_6)\sim c_2,\\
a_1+a_4=\omega (a_5+a_6)=\omega^2( a_2+a_3)\sim c_3,\\
a_1+a_5=\omega (a_3+a_4)=\omega^2( a_2+a_6)\sim c_4,\\
a_1+a_6=\omega (a_2+a_4)=\omega^2( a_3+a_5)\sim c_5,
\end{flalign*}
where $\sim$ stands for the aforementioned correspondence.

By the uniqueness of the quaternary $[6,3,4]$ MDS code, we can assume that
\begin{flalign*}
\widetilde{A}=\left(\begin{array}{ccccll}1&1&1&1&1&1\\0&1&0&\omega^2&1&\omega^2\\0&0&1&\omega^2&\omega^2&1 \end{array}\right).
\end{flalign*}
Thus we have
\begin{flalign}\label{correspondence1}
a_1+a_2&=\omega (a_3+a_6)=\omega^2( a_4+a_5)\sim (1,0), \nonumber\\
a_1+a_3&=\omega (a_2+a_5)=\omega^2( a_4+a_6)\sim (0,1),\nonumber\\
a_1+a_4&=\omega (a_5+a_6)=\omega^2( a_2+a_3)\sim (1,1),\\
a_1+a_5&=\omega (a_3+a_4)=\omega^2( a_2+a_6)\sim (1,\omega^2),\nonumber\\
a_1+a_6&=\omega (a_2+a_4)=\omega^2( a_3+a_5)\sim (1,\omega)\nonumber.
\end{flalign}

Let $\widetilde{B}$ be an arbitrary parity-check matrix of a quaternary $[6,3,4]$ code, w.l.o.g., assume the correspondence between the columns of $\widetilde{B}$ and the points in $PG(1,4)$ has the following form:
\begin{flalign}\label{correspondence2}
b_1+b_2&=\omega (b_3+b_6)=\omega^2( b_4+b_5)\sim d_1,\nonumber\\
b_1+b_3&=\omega (b_2+b_5)=\omega^2( b_4+b_6)\sim d_2,\nonumber\\
b_1+b_4&=\omega (b_5+b_6)=\omega^2( b_2+b_3)\sim d_3,\\
b_1+b_5&=\omega (b_3+b_4)=\omega^2( b_2+b_6)\sim d_4,\nonumber\\
b_1+b_6&=\omega (b_2+b_4)=\omega^2( b_3+b_5)\sim d_5,\nonumber
\end{flalign}
where $d_1,\ldots,d_5$ are different points in $PG(1,4)$.

For each $i\in [6]$, denote $\alpha_i$ as the first $6$ columns in $H$ given by $(\ref{length_12_6_example})$ and $\beta_i$ as the $7_{th}$ to $12_{th}$ columns of $H$. Then the linear dependencies among $\alpha_i$s and $\beta_i$s are the same as those among $a_i$s and $b_i$s.

Now, we show that the last column of $H$ in (\ref{length_12_6_example}) must be linear combinations of some $\alpha_{i},\beta_{j}$, $i,j\in [6]$, which contradicts that $d=4$.

First, consider the columns of the form $\alpha_i+\lambda \beta_j$, where $i,j\in [6]$ and $\lambda\in\mathbb{F}_4^\ast$. If there exist two columns of different form $\alpha_{i_1}+\lambda_1 \beta_{i_2}$ and $\alpha_{j_1}+\lambda_2 \beta_{j_2}$, such that $\alpha_{i_1}+\lambda_1 \beta_{i_2}=\alpha_{j_1}+\lambda_2 \beta_{j_2}$, then we have $\lambda_1=\lambda_2$ and this leads to a linear relation between $\alpha_{i_1},\beta_{j_1},\alpha_{i_2},\beta_{j_2}$.

Since $\alpha_i$s and $\beta_i$s inherit the linear dependencies among $a_i$s and $b_i$s, by (\ref{correspondence1}) and (\ref{correspondence2}), there exists $i_1\in[6]$ such that there is a unique $i_j\in[6]$ for each $2\leq j\leq 6$, such that $\alpha_1+\alpha_j=\beta_{i_1}+\beta_{i_j}$. Then combining this linear equation with the linear relations among columns of $\widetilde{B}$, we can obtain the following classification of all the columns of the form $\alpha_i+\lambda \beta_j$:
\begin{enumerate}
  \item $\alpha_1+\beta_{i_1}=\alpha_2+\beta_{i_2}=\cdots=\alpha_6+\beta_{i_6}$, thus columns of the form $\alpha_j+\beta_{i_j}$ $(j\in[6])$ provide only 1 element in $\mathbb{F}_4^4$.
  \item $\alpha_m+\beta_{i_n}=\alpha_n+\beta_{i_m}$, when $m\ne n\in[6]$. Thus columns of the form $\alpha_m+\beta_{i_n}$ provide ${6\choose 2}=15$ different elements in $\mathbb{F}_4^4$.
  \item For $\lambda\in\{\omega,\omega^2\}$ and each $s\in[6]$, $\alpha_s+\lambda \beta_{i_s}$ is unique. Thus columns of this form provide $6\times 2=12$ different elements in $\mathbb{F}_4^4$.
  \item For $\lambda\in\{\omega,\omega^2\}$ and $\{s,t\}\subseteq[6]$, by (\ref{correspondence1}) and (\ref{correspondence2}), there are exactly two other $\{s_1,t_1\},\{s_2,t_2\}\subseteq [6]$ such that $\alpha_{s}+\lambda \beta_{i_{t}}=\alpha_{s_1}+\lambda \beta_{i_{t_1}}=\alpha_{s_2}+\lambda \beta_{i_{t_2}}$. Therefore, columns of this form provide ${(6\times 5\times2)}/{3}=20$ different elements in $\mathbb{F}_4^4$.
\end{enumerate}

To sum up, there are 48 different columns of the form $\alpha_i+\lambda \beta_j$, where $i,j\in [6]$ and $\lambda\in\mathbb{F}_4^\ast$. On the other hand, since $x_2\in\mathbb{F}_4^\ast$ and $x_3,x_4\in\mathbb{F}_4$, we only have $3\times 4\times 4=48$ different choices for $(1,x_2,x_3,x_4)^{T}$. Therefore, all columns of this form are linear combinations of $\alpha_i$ and $\beta_j$ and we finish the proof.
\end{proof}

The discussion in this subsection is similar to those in Subsections \ref{subsection_4_1} and \ref{d=3}, in order to avoid the redundancy, we shall omit some of the repeated details.

In this subsection, $H'$ can be the parity-check matrix of a quaternary $[6,3,4]$ or $[5,2,4]$ or $[4,1,4]$ MDS code. And we have $n-k-\lc k/r\rc=d-2=2$. Since $H_1$ has $l\ge\lc k/r\rc$ rows, by (\ref{inequality}), we have
\begin{flalign*}
\lc k/r\rc\leq l\leq n-k=\lc k/r\rc+2.
\end{flalign*}

\begin{itemize}
  \item For the case $l=\lc k/r\rc$
\end{itemize}

Then $n-k=l+2$, which implies that $H_2$ contains two rows. According to the construction procedure of $H'$, $H'$ contains a locality-row covering all the coordinates remained after the deletion. Therefore, $n'\leq r+1$ and the linear code $\mc{C}'$ with parity-check matrix $H'$ has weight distribution with $A_{n'}>0$. By Theorem \ref{MDS_weight_distribution}, the dual codes of quaternary $[6,3,4]$, [5,2,4] and $[4,1,4]$ MDS codes all satisfy this condition. Thus $H'$ can be the parity-check matrix with all these $3$ kinds of parameters. This leads to $4\leq n'\leq 6$ and $r\geq 3$.

If $r\mid k$, we set $k=sr$ for some $s>0$. By Lemma \ref{r_mid_k}, we have $(r+1)\mid n$. We also have $n=k+l+2=s(r+1)+2$. This contradicts the fact that $(r+1)\mid n$. Therefore, optimal LRCs with such parameters do not exist.

If $r\nmid k$, let $k=sr+t$, where $1\le t\le r-1$. Recall that $\gamma$ is the number of columns covered by the supports of the deleted $\lc k/r\rc-1$ locality-rows. Thus, $\gamma\leq(\lc k/r\rc-1)(r+1)$. Then
\begin{flalign}\label{inequality_under_d4}
k+\lc k/r \rc+2-(\lc k/r\rc-1)(r+1)\leq n-\gamma=n'\leq 6,
\end{flalign}
this leads to $1\leq t\leq 3$.

\textbf{Case} $\mathbf{t=1}$:  In this case, we have $4\le n'\le 6$ and $H'$ can be the parity-check matrix of a quaternary $[6,3,4]$ or $[5,2,4]$ or $[4,1,4]$ MDS code.

If $s=1$, we have $k=r+1$ and $n=r+5$. It can be observed that after we remove arbitrary $1$ locality-row with $m$ zeros and columns it covered, we obtain a parity-check matrix of code with length $m$. Therefore each locality-row of $H$ must have $4$, $5$ or $6$ zeros to ensure the existence of $H'$. Moreover, one of locality-rows must have exactly 4 zeros to meet the requirement for parameters. Since $d=4$, any 3 columns of $H$ are linearly independent. By Lemma \ref{finite_geometry}.2, $n=r+5\leq m_2(3,4)=17$ and $r\leq 12$. Take each column of $H$ as a point of $PG(3,4)$, when $n=17$, they form an ovoid in $PG(3,4)$. By Lemma \ref{property_ovoid}, there is $1$ or $5$ zeros in each row of $H$, which doesn't meet the restrictions above. Thus, $n\leq 16$ and $r\leq 11$. When $n=16$, we have
\begin{flalign}\label{length_16}
H=\left(\begin{array}{cccccccccccccccc}
1&1&1&1&0&0&0&0&\omega&1&\omega^2&\omega&\omega&\omega^2&1&1\\
0&0&0&0&1&1&1&1&\omega&\omega^2&\omega^2&\omega&\omega^2&\omega^2&0&1\\
\hline
0&1&0&\omega^2&0&1&0&\omega^2&\omega&\omega&\omega^2&0&1&1&\omega^2&\omega\\
0&0&1&\omega^2&0&0&1&\omega^2&1&1&1&1&\omega^2&1&1&0
\end{array}\right).
\end{flalign}
Therefore, we can puncture the last 0 to 8 columns from $(\ref{length_16})$ respectively to obtain an optimal LRC with parameters $[r+5,r+1,4]$, $3\leq r\leq 11$.

When $s\geq 2$ and $n'=4$, we have $\gamma=n-n'=s(r+1)$. Thus, the supports of the $s$ deleted locality-rows are pairwise disjoint with size $r+1$.
Similarly, when $n'=5$ or $6$, one can show that the support of $s$ deleted locality-rows intersect in one column or two columns respectively.

The following discussion is similar to the case $l=\lceil k/r\rceil$ and $s\ge 2$ in Subsection \ref{d=3}, so we omit some details. W.l.o.g., suppose that the weight of the first locality-row of $H$ is $r+1$. When $s\ge 4$, by the arbitrariness of these $s$ rows, one can show that $r\leq 5$.

If $s=3$, there are at most two columns covered by the first locality-row and last three locality-rows from the similar discussion. After deleting the last three locality-rows and the columns they cover, the resulting $H'$ has length $r+1-2\leq 6$, which leads to $r\leq 7$. When $r=7$, this deletion will lead to an $H'$ of length $6$. If the first two locality-rows intersect on two columns, then after deleting these two locality-rows together with any other locality-row, according to $\gamma$, all these $3$ locality-rows have weight $r+1$ and the resulting $H'$ has length $n'=6$. Thus, by Lemma \ref{ban_struture}, to avoid submatrix of form (\ref{length_12_6_example}), there exists another locality-row (the $i_{th}$ locality-row, $(3\leq i\leq 4)$) intersecting the first two. Therefore, if we delete the first two and the $i_{th}$ locality-rows and the columns they cover, the corresponding $\gamma\leq 3(r+1)-3$, a contradiction. Similarly, if the first two locality-rows intersect on one column, one can also obtain a contradiction. Therefore, we have $r\leq6$.

If $s=2$, there are at most four columns covered by the first locality-row and last two locality-rows. Delete the last two locality-rows and the columns they cover, the resulting $H'$ has length $r+1-4\leq 6$, this leads to $r\leq 9$. When $8\leq r\leq 9$, since $n'\leq 6$, w.l.o.g., assume that the second locality-row of $H$ intersects the first locality-row of $H$ on two columns and the third locality-row intersects the first locality-row on $r-7$ columns. By the definition of $\gamma$, after deleting the first two locality-rows and the columns they cover, we have $n'=6$. Therefore, the last two locality-rows must intersect in some column which is not covered by the first locality-row. This indicates that $H$ contains a submatrix of form $(\ref{length_12_6_example})$ in Lemma \ref{ban_struture}, a contradiction. Thus, $r\leq 7$.

To sum up, we obtain the following parameters.

If $r=3$, we have $k=3s+1$, $n=k+\lc k/r\rc+2=4s+4$ and $n'=4$. Thus, $H'$ is the parity-check matrix of a quaternary $[4,1,4]$ code. Due to the arbitrariness of the $s$ deleted locality-rows, all the locality-rows of $H$ have weight exactly $r+1$. The following $H$ gives the corresponding optimal LRC
\begin{flalign}
H=\left(\begin{array}{c}I_{s+1}\otimes(\begin{array}{cccc}1&1&1&1\end{array})\\ \hline \underbrace{(1~1~\ldots~1)}_{s+1}\otimes \Big(\begin{array}{cccc}0&1&0&1\\0&0&1&1\end{array}\Big) \end{array}\right).\label{4s+4_3s+1}
\end{flalign}

If $r=4$, we have $k=4s+1$ and $n=5s+4$. Since $(r+1)\nmid n$, $H'$ can be the parity-check matrix of a quaternary $[4,1,4]$ or $[5,2,4]$ MDS code.
Then the following $H$ gives the corresponding optimal code.
\begin{flalign}
H=\left(\begin{array}{c|c}\Big(\begin{array}{ccccccccc}1&1&1&1&\alpha&0&0&0&0\\0&0&0&0&1&1&1&1&1\end{array}\Big)&\mathbf{0}_{2\times(5(s-1))}\\ \hline \mathbf{0}_{(s-1)\times 9}&I_{s-1}\otimes(\begin{array}{ccccc}1&1&1&1&1\end{array})\\ \hline \Big(\begin{array}{cccclcccc} 0&1&0&\omega^2&\omega^2&\omega^2&0&1&0\\
0&0&1&\omega^2&1&\omega^2&1&0&0\end{array}\Big)& \underbrace{(1~1~\ldots~1)}_{s-1}\otimes \Big(\begin{array}{ccccl}0&1&0&\omega^2&\omega^2\\0&0&1&\omega^2&1\end{array}\Big) \end{array}\right),\label{5s+4_4s+1}
\end{flalign}
where $\alpha$ can be $0$ or $1$.
\begin{remark}
In the above case, after deleting the first $s$ locality-rows from $H$, the resulting $H'$ is a parity-check matrix of a quaternary $[5,2,4]$ MDS code, while after deleting the last $s$ locality-rows, the resulting $H'$ is a parity-check matrix of a quaternary $[4,1,4]$ MDS code.
\end{remark}

If $r=5$, we have $k=5s+1$ and $n=6s+4$. Since $(r+1)\nmid n$, $H'$ can be the parity-check matrix of a quaternary $[4,1,4]$ or $[5,2,4]$ or $[6,3,4]$ MDS code.
The following $H$ gives the corresponding optimal code.
\begin{flalign}
H=\left(\begin{array}{c|c}\Big(\begin{array}{cccccccccc}1&1&1&1&\alpha&\beta&0&0&0&0\\0&0&0&0&1&1&1&1&1&1\end{array}\Big)&\mathbf{0}_{2\times(6(s-1))}\\ \hline \mathbf{0}_{(s-1)\times 10}&I_{s-1}\otimes(\begin{array}{cccccc}1&1&1&1&1&1\end{array})\\ \hline \Big(\begin{array}{ccccllccccc}0&1&0&\omega^2&\omega^2&1&\omega^2&0&1&0\\
0&0&1&\omega^2&1&\omega^2&\omega^2&1&0&0\end{array}\Big)& \underbrace{(1~1~\ldots~1)}_{s-1}\otimes \Big(\begin{array}{ccccll}0&1&0&\omega^2&1&\omega^2\\0&0&1&\omega^2&\omega^2&1\end{array}\Big) \end{array}\right),\label{6s+4_5s+1_1}
\end{flalign}
where $(\alpha,\beta)$ can be either $(0,0)$, $(0,1)$, $(1,0)$ or $(1,1)$.
\begin{remark}
In the above case, after deleting the first $s$ locality-rows from $H$, the resulting $H'$ is a parity-check matrix of a quaternary $[6,3,4]$ MDS code, while after deleting the last $s$ locality-rows, the resulting $H'$ corresponds to a parity-check matrix of a quaternary $[4,1,4]$ MDS code.
\end{remark}

Moreover, the following $H$ can provide another construction with the same parameter. When $s=2$, we have
\begin{flalign}\label{6s+4_5s+1_2}
H=\left(\begin{array}{llllllllllllllll}
1~~&1~~&1~~&1&1&\alpha&0&0&0~~&0~~&0~~&0~~&0~~&0&0&0\\
0&0&0&0&0&1&1&1&1&1&\beta&0&0&0&0&0\\
0&0&0&0&0&0&0&0&0&0&1&1&1&1&1&1\\
\hline
0&1&0&\omega^2&\omega^2&1&\omega^2&\omega^2&0&1&1&1&\omega&\omega&\omega^2&\omega^2\\
0&0&1&\omega^2&1&\omega^2&1&\omega^2&1&0&0&\omega^2&0&\omega&\omega^2&\omega
 \end{array}\right),
\end{flalign}
where $\alpha$, $\beta$ can be $0$ or $1$. And when $s\ge 3$, based on (\ref{6s+4_5s+1_2}), we have
\begin{flalign}\label{6s+4_5s+1_3}
H=\left(\begin{array}{c|c}\mathbf{A}&\mathbf{0}_{3\times(6(s-2))}\\
\hline \mathbf{0}_{(s-2)\times 16}&I_{s-2}\otimes(\begin{array}{cccccc}1&1&1&1&1&1\end{array})\\ \hline
\mathbf{B}& \underbrace{(1~1~\ldots~1)}_{s-2}\otimes \Big(\begin{array}{ccccll}0&1&0&\omega^2&1&\omega^2\\0&0&1&\omega^2&\omega^2&1\end{array}\Big) \end{array}\right),
\end{flalign}
where
\begin{flalign*}
A=\left(\begin{array}{cccccccccccccccc}
1&1&1&1&1&\alpha&0&0&0&0&0&0&0&0&0&0\\
0&0&0&0&0&1&1&1&1&1&\beta&0&0&0&0&0\\
0&0&0&0&0&0&0&0&0&0&1&1&1&1&1&1\end{array}\right)
\end{flalign*}
$\alpha$, $\beta$ can be 0 or 1, and
\begin{flalign*}
B=\left(\begin{array}{cccccccccccccccc}
0&1&0&\omega^2&\omega^2&1&\omega^2&\omega^2&0&1&1&1&\omega&\omega&\omega^2&\omega^2\\
0&0&1&\omega^2&1&\omega^2&1&\omega^2&1&0&0&\omega^2&0&\omega&\omega^2&\omega\end{array}\right).
\end{flalign*}

\begin{remark}
In the above case, after deleting the first $s$ locality-rows from $H$, the resulting $H'$ is a parity-check matrix of a quaternary $[6,3,4]$ MDS code, while after deleting the last $s$ locality-rows, or the first row and last $s-1$ locality-rows, the resulting $H'$ is a parity-check matrix of a quaternary $[5,2,4]$ MDS code.
\end{remark}

If $r=6$, we have $k=6s+1$ and $n=7s+4$, $2\leq s\leq 3$. The following $H$s give the parity-check matrices of the corresponding optimal codes with parameters $[7s+4,6s+1,4]$ with $s=2,3$ respectively.

\begin{flalign*}
H=\left(\begin{array}{cccccccccccccccccc}
1&1&1&1&1&1&1&0&0&0&0&0&0&0&0&0&0&0\\
0&0&0&0&0&0&1&1&1&1&1&1&1&0&0&0&0&0\\
0&0&0&0&0&0&0&0&0&0&0&1&1&1&1&1&1&1\\
\hline
0&1&0&\omega^2&\omega^2&1&1&1&1&\omega&\omega^2&0&\omega^2&0&1&0&\omega^2&\omega^2\\
0&0&1&\omega^2&1&\omega^2&0&\omega^2&\omega&0&0&0&\omega^2&0&0&1&\omega^2&1
\end{array}
\right),
\end{flalign*}

\begin{flalign}\label{7s+4}
H=\left(\begin{array}{ccccccccccccccccccccccccc}
1&1&1&1&1&1&1&0&0&0&0&0&0&0&0&0&0&0&0&0&0&0&0&0&0\\
0&0&0&0&0&0&1&1&1&1&1&1&1&0&0&0&0&0&0&0&0&0&0&0&0\\
0&0&0&0&0&1&0&0&0&0&0&0&0&1&1&1&1&1&1&0&0&0&0&0&0\\
0&0&0&0&0&0&0&0&0&0&0&0&0&0&0&0&0&0&1&1&1&1&1&1&1\\
\hline
1&1&\omega&\omega&0&0&0&0&1&0&\omega^2&\omega^2&1&1&1&\omega&\omega&0&0&0&1&0&\omega^2&\omega^2&1\\
0&\omega^2&0&1&\omega^2&\omega&1&0&0&1&\omega^2&1&\omega^2&0&\omega^2&0&1&\omega^2&1&0&0&1&\omega^2&1&\omega^2
\end{array}
\right).
\end{flalign}

If $r=7$, we have $k=7s+1$ and $n=8s+4$, $s=2$. The following $H$ gives the parity-check matrix of the corresponding optimal codes with parameter $[8s+4,7s+1,4]$ with $s=2$.

\begin{flalign}\label{8s+4}
H=\left(\begin{array}{cccccccccccccccccccc}
1&1&1&1&1&1&1&1&0&0&0&0&0&0&0&0&0&0&0&0\\
0&0&0&0&0&0&1&1&1&1&1&1&1&1&0&0&0&0&0&0\\
0&0&0&0&0&0&0&0&0&0&0&0&1&1&1&1&1&1&1&1\\
\hline
0&1&0&\omega^2&\omega^2&1&1&1&1&1&\omega^2&\omega^2&1&\omega^2&0&1&0&\omega^2&\omega^2&1\\
0&0&1&\omega^2&1&\omega^2&0&1&\omega^2&\omega&0&1&0&\omega^2&0&0&1&\omega^2&1&\omega^2
\end{array}
\right).
\end{flalign}

\textbf{Case} $\mathbf{t=2}$: In this case, $5\le n'\le 6$ and $H'$ can be the parity-check matrix of a quaternary $[5,2,4]$ or $[6,3,4]$ MDS code.

If $s=1$, we have $k=r+2$, $n=r+6$ and $r\ge 4$. There are $5$ or $6$ zeros in each locality-row of $H$ to ensure the existence of $H'$ with such parameters. Moreover, one of them has exactly 5 zeros to meet the requirement for parameters. Since $d=4$, we know that any $3$ columns of $H$ are linearly independent. Thus, we have $n=r+6\leq m_2(3,4)=17$ and $r\leq 11$. 
When $n=17$, we have
\begin{flalign}\label{length_17}
H=\left(\begin{array}{ccccccccccccccccc}
1&1&1&1&1&0&0&0&0&0&\omega&1&\omega^2&\omega&\omega&\omega^2&1\\
0&0&0&0&0&1&1&1&1&1&\omega&\omega^2&\omega^2&\omega&\omega^2&\omega^2&1\\
\hline
0&1&0&\omega^2&\omega^2&0&1&0&\omega^2&\omega^2&\omega&\omega&\omega^2&0&1&1&\omega\\
0&0&1&\omega^2&1&0&0&1&\omega^2&1&1&1&1&1&\omega^2&1&0
\end{array}\right).
\end{flalign}
Therefore, we can puncture the last 0 to 7 columns from $(\ref{length_17})$ respectively to obtain an optimal LRC with parameters $[r+6,r+2,4]$, $4\leq r\leq11$.

When $s\geq 2$ and $n'=5$, we have $\gamma=n-n'=s(r+1)$. Thus, the supports of the $s$ deleted locality-rows are pairwise disjoint with size $r+1$.
Similarly, when $n'=6$, one can show that the support of $s$ deleted locality-rows intersect on one column.

The following discussion is similar to the case $l=\lfloor k/r\rfloor$ and $s\ge 2$ in Subsection \ref{d=3}. W.l.o.g., suppose that the weight of the first locality-row of $H$ is $r+1$. When $s\ge 4$, by the arbitrariness of the $s$ locality-rows, one can show that $r\leq 5$.

If $s=3$, there exist at most one column simultaneously covered by the first and last three locality-rows by similar discussion. After deleting the last three locality-rows and the columns they cover, the resulting $H'$ has length $r+1-1\leq 6$, which leads to $r\leq 6$. When $r=6$, the same as case $t=1$ and $s=3$, optimal LRCs with such parameters do not exist.

If $s=2$, there exist at most two columns simultaneously covered by the first and last two locality-rows by similar discussion. After deleting the last two locality-rows and the columns they cover, the resulting $H'$ has length $r+1-2\leq 6$, which leads to $r\leq 7$. When $r=7$, since $n'\leq 6$, by the definition of $\gamma$, the second and third locality-rows both intersect the first locality-row in different column. If we delete the first two locality-rows (or the first and the third locality-rows), the resulting $H'$ has length $n'=6$, which indicates that $H$ contains a submatrix of form $(\ref{length_12_6_example})$ in Lemma \ref{ban_struture}, a contradiction. Thus, $r\leq6$.


Now we give the explicit construction respectively.

If $r=4$, then $k=4s+2$ and $n=5s+5$. Thus, $H'$ is the parity-check matrix of a quaternary $[5,2,4]$ MDS code. The following $H$ gives the corresponding optimal code
\begin{flalign}\label{5s+5_4s+2}
H=\left(\begin{array}{c}I_{s+1}\otimes(\begin{array}{ccccc}1&1&1&1&1\end{array})\\ \hline \underbrace{(1~1~\ldots~1)}_{s+1}\otimes \Big(\begin{array}{ccccc}0&1&0&\omega^2&\omega^2\\0&0&1&\omega^2&1\end{array}\Big) \end{array}\right).
\end{flalign}

If $r=5$, then $k=5s+2$ and $n=6s+5$. Since $(r+1)\nmid n$, $H'$ can be the parity-check matrix of a quaternary $[5,2,4]$ or $[6,3,4]$ MDS code.

Then the following $H$ gives the corresponding optimal code
\begin{flalign}\label{6s+5_5s+2}
H=\left(\begin{array}{c|c}\Big(\begin{array}{ccccccccccc}1&1&1&1&1&\alpha&0&0&0&0&0\\0&0&0&0&0&1&1&1&1&1&1\end{array}\Big)&\mathbf{0}_{2\times(6(s-1))}\\ \hline \mathbf{0}_{(s-1)\times 11}&I_{s-1}\otimes(\begin{array}{cccccc}1&1&1&1&1&1\end{array})\\ \hline \Big(\begin{array}{cccclllcccc}0&1&0&\omega^2&\omega^2&1&\omega^2&\omega^2&0&1&0\\0&0&1&\omega^2&1&\omega^2&1&\omega^2&1&0&0 \end{array}\Big)& \underbrace{(1~1~\ldots~1)}_{s-1}\otimes \Big(\begin{array}{cccccc}0&1&0&1&\omega^2&\omega^2\\0&0&1&\omega^2&\omega^2&1\end{array}\Big) \end{array}\right),
\end{flalign}
where $\alpha$ can be $0$ or $1$.
\begin{remark}
In the above case, after deleting the first $s$ locality-rows from $H$, the resulting $H'$ is a parity-check matrix of a quaternary $[6,3,4]$ MDS code, while after deleting the last $s$ locality-rows, the resulting $H'$ is a parity-check matrix of a quaternary  $[5,2,4]$ MDS code.
\end{remark}

If $r=6$, $k=6s+2$, $n=7s+5$ with $s=2$. The following $H$ gives the parity-check matrix of the corresponding optimal codes with parameter $[7s+5,6s+2,4]$ with $s=2$.

\begin{flalign}\label{7s+5}
H=\left(\begin{array}{ccccccccccccccccccc}
1&1&1&1&1&1&1&0&0&0&0&0&0&0&0&0&0&0&0\\
0&0&0&0&0&0&1&1&1&1&1&1&1&0&0&0&0&0&0\\
0&0&0&0&0&0&0&0&0&0&0&0&1&1&1&1&1&1&1\\
\hline
0&1&0&\omega^2&\omega^2&1&\omega^2&1&\omega^2&\omega^2&\omega&1&\omega^2&0&1&0&\omega^2&\omega^2&1\\
0&0&1&\omega^2&1&\omega^2&1&\omega&\omega^2&\omega&\omega^2&1&1&0&0&1&\omega^2&1&\omega^2
\end{array}
\right).
\end{flalign}

\textbf{Case} $\mathbf{t=3}$: When $s=1$, $k=r+3$, $n=r+7$. Since $n'=6$ in this case, we have $r\ge 5$. For $r=5$, $k=8$ and $n=12$, the following $H$ gives the corresponding optimal code
\begin{flalign}\label{length_12_6_only}
H=\left(\begin{array}{cccccccccccc}
1&1&1&1&1&1&0&0&0&0&0&0\\
0&0&0&0&0&0&1&1&1&1&1&1\\
\hline
0&1&0&\omega^2&\omega^2&1&0&1&0&\omega^2&\omega^2&1\\
0&0&1&\omega^2&1&\omega^2&0&0&1&\omega^2&1&\omega^2
\end{array}\right).
\end{flalign}

Next, we will show that $r\leq 5$. For $r\geq 6$, consider a parity-check matrix of an $[r+7,r+3,4]$ optimal code, each locality-row of this matrix must have exactly $6$ zeros. Thus, we can puncture this matrix to $12$ of its columns and obtain the parity-check matrix of an optimal $[12,8,4]$ code. To ensure the locality and $n'=6$, the first two rows of this submatrix are isomorphic to those of $H$ in (\ref{length_12_6_only}). Therefore, we only have to show that starting from the parity-check matrix $H$ of an optimal $[12,8,4]$ code, we can't add any other columns into $H$ to obtain a parity-check matrix of an $[r+7,r+3,4]$ code with $r\geq 6$. Otherwise, we obtain a submatrix of form $(\ref{length_12_6_example})$ in Lemma \ref{ban_struture}. Therefore, $r=5$.

When $s\ge 2$, $n'=6$. Due to the arbitrariness of the deleted $s$ locality-rows during the construction of $H'$, we have $r=5$, $k=5s+3$, $n=6s+6$ and $H'$ is the parity-check matrix of a quaternary $[6,3,4]$ MDS code. The following $H$ gives the corresponding optimal code
\begin{flalign}\label{6s+6_5s+3}
H=\left(\begin{array}{c}I_{s+1}\otimes(\begin{array}{cccccc}1&1&1&1&1&1\end{array})\\ \hline \underbrace{(1~1~\ldots~1)}_{s+1}\otimes \Big(\begin{array}{ccccll}0&1&0&\omega^2&\omega^2&1\\0&0&1&\omega^2&1&\omega^2\end{array}\Big) \end{array}\right).
\end{flalign}

\begin{itemize}
  \item For the case $l=\lc k/r\rc+1$
\end{itemize}

Then $n-k=l+1$, which implies that $H_2$ contains one row. According to the construction of $H'$, $H'$ contains two locality-rows covering all the coordinates remained after deletion. Thus $H'$ can be the parity-check matrix of a quaternary $[6,3,4]$ or $[5,2,4]$ or $[4,1,4]$ MDS code. This leads to $4\leq n'\leq 6$.

If $r\mid k$, we set $k=sr$ for some $s>0$. By Lemma \ref{r_mid_k}, we have $(r+1)\mid n$. Since $n=k+l+1=s(r+1)+2$, if $r\ge 2$, we have $(r+1)\nmid n$, which contradicts the above conclusion. When $r=1$, we have $k=s$ and $n=2s+2$. Then $H'$ is a parity-check matrix of a quaternary $[4,1,4]$ code. The following $H$ gives the corresponding optimal code
\begin{flalign}\label{2s+2_s}
H=\left(\begin{array}{c}I_{s+1}~\otimes(\begin{array}{cc}1&1\end{array})\\ \hline \underbrace{(1~\ldots~1)}_{s+1}\otimes(\begin{array}{cc}0&1\end{array}) \end{array}\right).
\end{flalign}

If $r\nmid k$, then $r\ge 2$. Let $k=sr+t$, where $1\leq t\leq r-1$. Similarly, by inequality (\ref{inequality_under_d4}), we have $1\leq t\leq 3$.

If $s=1$ and $t=1$, we have $k=r+1$, $n=r+5$, $r\ge 2$ and $n\leq m_2(3,4)=17$. With the same analysis as that for the case $l=\lc k/r\rc$, we have $n\ne 17$. When $n=16$, we have
\begin{flalign}\label{r+5_r+1}
H=\left(\begin{array}{cccccccccccccccc}
1&1&1&0&0&0&0&\omega^2&\omega&1&\omega&\omega^2&\omega^2&1&\omega^2&1\\
0&0&1&1&1&0&0&\omega&\omega^2&\omega&\omega&0&\omega&\omega&\omega^2&\omega^2\\
0&0&0&0&1&1&1&\omega&\omega&\omega^2&\omega^2&\omega^2&\omega&1&1&\omega\\
\hline
\omega^2&\omega&1&0&1&\omega&\omega^2&1&1&1&1&1&\omega&1&\omega^2&\omega
\end{array}\right).
\end{flalign}
Furthermore, we can puncture the last $0$ to $9$ columns from $(\ref{r+5_r+1})$ to obtain an optimal LRC with parameters $[r+5,r+1,4]$, $2\leq r\leq 11$.

If $s=1$ and $t=2$, we have $k=r+2$, $n=r+6$ and $n\leq m_2(3,4)=17$. Therefore, $3\leq r\leq 11$. When $n=17$, we have
\begin{flalign}\label{r+6_r+2}
H=\left(\begin{array}{ccccccccccccccccc}
1&1&1&1&0&0&0&0&0&\omega&1&\omega^2&\omega^2&\omega^2&\omega&\omega&\omega^2\\
0&0&0&1&1&1&1&0&0&\omega&1&1&\omega&1&\omega^2&1&\omega\\
0&0&1&0&0&0&1&1&1&\omega^2&\omega&\omega&\omega&\omega^2&\omega&1&1\\
\hline
1&\omega&0&1&1&\omega&0&0&\omega^2&1&\omega&\omega&\omega&\omega^2&0&0&1
\end{array}\right).
\end{flalign}
Furthermore, we can puncture the last $0$ to $9$ columns from $(\ref{r+6_r+2})$ to obtain an optimal LRC with parameters $[r+6,r+2,4]$, $3\leq r\leq 11$.


If $s=1$ and $t=3$, we have $k=r+3$ and $n=r+7$. Thus, $n'\geq n-(r+1)=6$. Since $n'\leq r+1$, we have $r\ge 5$, $n\ge 12$. Therefore, each locality-row of $H$ has exactly 6 zeros to meet the requirement of parameters. According to the weight distribution of the dual code of $[6,3,4]$ MDS code, complements of the supports of the three locality-rows pairwise intersect in at least two coordinates.

On the other hand, if there exist two locality-rows such that complements of their supports intersect in more than $2$ coordinates, then after deleting arbitrary one locality-row and the columns it covers, the resulting $H'$ has a codeword of weight less than 3, which contradicts that the minimum distance of the dual code of a $[6,3,4]$ MDS code is 4. Therefore, complements of the supports of the three locality-rows pairwise intersect in exact two coordinates.

By regarding each column of $H$ as a point of $PG(3,4)$, we can obtain a point set $\mathcal{S}$. Since $d=4$, $\mathcal{S}$ is a cap in $PG(3,4)$. Consider planes $\pi_1=\{(0,a_1,a_2,a_3):a_1,a_2,a_3\in \mathbb{F}_4\}$, $\pi_2=\{(b_1,0,b_2,b_3):b_1,b_2,b_3\in \mathbb{F}_4\}$, they intersect in line $L=\{(0,0,c_1,c_2):c_1,c_2\in \mathbb{F}_4\}$. Denote $\mathfrak{C}_1=\pi_1\cap\mathcal{S}$, $\mathfrak{C}_2=\pi_2\cap\mathcal{S}$. Since each locality-row of $H$ has exact 6 zeros, $|\mathfrak{C}_1|=|\mathfrak{C}_2|=6=4+2$, i.e., they are hyperovals in $PG(2,4)$. Moreover, since the complements of the supports of the three locality-rows pairwise intersect in exact two coordinates, w.l.o.g, we can suppose that $\mathfrak{C}_1\cap \mathfrak{C}_1=\{A,B\}\subset L$. By Theorem \ref{hyperoval}, $\mathfrak{C}_1\setminus B$ and $\mathfrak{C}_2\setminus B$ are conics in $\pi_1$ and $\pi_2$ respectively. Moreover, both $\pi_1$ and $\pi_2$ have $B$ as their nucleus and they intersect in line $L$ at point $A$ simultaneously. Therefore, according to Lemma~\ref{finite_geometry_1}, $\mathcal{S}$ can be expanded from the incomplete cap $\mathfrak{C}_1\cup\mathfrak{C}_2\cup\mathcal{O}$, and $n\leq3q+2=14$,~$r\leq 7$.

When $r=7$, $n=14$ and $k=10$, the following $H$ gives the corresponding optimal code
\begin{flalign}\label{r+7_r+3}
H=\left(\begin{array}{cccccccccccccc}
1&1&1&1&1&1&0&0&0&0&0&0&1&\omega\\
0&0&0&0&1&1&1&1&1&1&0&0&1&\omega\\
1&1&0&0&0&0&1&1&0&0&1&1&1&\omega\\
\hline
1&\omega&1&\omega&1&\omega&1&\omega&1&\omega&1&\omega&\omega&\omega
\end{array}\right).
\end{flalign}
We can puncture the last 1 or 2 columns from (\ref{r+7_r+3}) to obtain LRCs with parameters $[13,9,4]$ and $[12,8,4]$.

If $s\ge 2$, recall that $k=sr+t$, where $1\leq t\leq 3$ and $r>t$. According to the value of $t$, we divide our discussion into the following $3$ cases.

If $t=3$, by inequality (\ref{inequality_under_d4}), we have $n'=6$ and $H'$ can only be the parity-check matrix of a quaternary $[6,3,4]$ MDS code. According to the minimum distance of the dual code of the quaternary $[6,3,4]$ MDS code, we know that the supports of the rows of $H'$ pairwise intersect on at least two coordinates. Meanwhile, we have $\gamma=n-n'=s(r+1)$, which means the support sets of the $s$ deleted locality-rows are pairwise disjoint with size $r+1$, a contradiction. Therefore, optimal LRCs with such type of parameter don't exist.

If $t=2$, by inequality (\ref{inequality_under_d4}), we have $n'=5,6$ and $H'$ can be the parity-check matrix of a quaternary $[5,2,4]$ or $[6,3,4]$ MDS code. With a similar analysis as above, one can also show that optimal LRCs with such type of parameter don't exist.

If $t=1$, by inequality (\ref{inequality_under_d4}), we have $4\leq n'\leq 6$ and $H'$ can be the parity-check matrix of a quaternary $[4,1,4]$, $[5,2,4]$ or $[6,3,4]$ MDS code. Since $\gamma=n-n'=s(r+1)+4-n'$, when $n'=4$, the supports of the $s$ deleted locality-rows are pairwise disjoint with size $r+1$.
Similarly, when $n'=5$ or $6$, one can show that there exist at most 1 or 2 coordinates covered by different members of the $s$ deleted locality-rows. Moreover, by the expression of $\gamma$, every two locality-rows of $H'$ intersect on at most two coordinates.

For the case when $s=2$, $k=2r+1$, $n=2(r+1)+4$. According to value of $\gamma$, there exist two coordinates covered by the two deleted locality-rows. Therefore, the length of $H$ is at least $4(r+1)-2\times 6$, then we have $4(r+1)-2\times 6\leq 2r+6$. Thus, $2\leq r\leq 7$. The following $H_r$s give the corresponding optimal codes when $2\leq r\leq 7$.

\begin{flalign}\label{2r+6_2r+1_1}
H_2=\left(\begin{array}{cccccccccc}
1&1&1&0&0&0&0&0&0&0\\
0&0&1&1&1&0&0&0&0&0\\
0&0&0&0&0&1&1&1&0&0\\
0&0&0&0&0&0&0&1&1&1\\
\hline
1&\omega&1&1&\omega&1&\omega&1&1&\omega
\end{array}
\right),~H_3=\left(\begin{array}{cccccccccccc}
1&1&1&1&0&0&0&0&0&0&0&0\\
0&0&1&1&1&1&0&0&0&0&0&0\\
0&0&0&0&0&0&1&1&1&1&0&0\\
0&0&0&0&0&0&0&0&1&1&1&1\\
\hline
1&\omega&1&\omega&1&\omega&1&\omega&1&\omega&1&\omega
\end{array}
\right),
\end{flalign}
\begin{flalign}\label{2r+6_2r+1_2}
H_4=\left(\begin{array}{cccccccccccccc}
1&1&1&1&1&0&0&0&0&0&0&0&0&0\\
0&0&0&1&1&1&1&1&0&0&0&0&0&0\\
1&0&0&0&0&0&0&0&1&1&1&1&0&0\\
0&0&0&0&0&1&1&0&0&0&1&1&1&1\\
\hline
1&1&\omega&1&\omega&1&\omega&1&1&\omega&1&\omega&1&\omega
\end{array}
\right),~
H_5=\left(\begin{array}{cccccccccccccccc}
1&1&1&1&1&1&0&0&0&0&0&0&0&0&0&0\\
0&0&0&0&1&1&1&1&1&1&0&0&0&0&0&0\\
1&0&0&0&0&0&0&0&1&0&1&1&1&1&0&0\\
0&1&0&0&0&0&0&0&0&1&0&0&1&1&1&1\\
\hline
1&1&1&\omega&1&\omega&1&\omega&1&1&1&\omega&1&\omega&1&\omega
\end{array}
\right),
\end{flalign}
\begin{flalign}\label{2r+6_2r+1_3}
H_6=\left(\begin{array}{cccccccccccccccccc}
1&1&1&1&1&1&1&0&0&0&0&0&0&0&0&0&0&0\\
0&0&0&0&0&1&1&1&1&1&1&1&0&0&0&0&0&0\\
1&0&1&0&0&0&0&0&0&0&1&0&1&1&1&1&0&0\\
0&1&0&0&0&0&0&0&0&1&0&1&0&0&1&1&1&1\\
\hline
1&1&\omega&1&\omega&1&\omega&1&\omega&1&1&\omega&1&\omega&1&\omega&1&\omega
\end{array}
\right),
\end{flalign}
\begin{flalign}\label{2r+6_2r+1_4}
H_7=\left(\begin{array}{cccccccccccccccccccc}
1&1&1&1&1&1&1&1&0&0&0&0&0&0&0&0&0&0&0&0\\
0&0&0&0&0&0&1&1&1&1&1&1&1&1&0&0&0&0&0&0\\
1&0&1&0&0&0&0&0&0&0&1&0&1&0&1&1&1&1&0&0\\
0&1&0&1&0&0&0&0&0&0&0&1&0&1&0&0&1&1&1&1\\
\hline
1&1&\omega&\omega&1&\omega&1&\omega&1&\omega&1&1&\omega&\omega&1&\omega&1&\omega&1&\omega
\end{array}
\right).
\end{flalign}


For the case $s\ge 3$. If there exists an $H'$ with $n'=6$, since every two rows of $H'$ share at least $2$ common coordinates, by $\gamma$, $n'$ can only be $6$. Thus, every locality-row has weight $r+1$ and every two of them share exactly two common coordinates. It is easy to show that we can obtain at most $2+(r-1)(s+2)$ columns, therefore, we have $2+(r-1)(s+2)\geq n$ and $r\ge 5$.
When $r\ge 5$, assume that the first two locality-rows remained after the deleting process during the construction of $H'$, then these two locality-rows have two common coordinates. Moreover, there exists another locality-row sharing at least one common coordinate with one of the first two locality-rows. By the arbitrariness of the rows deleted to construct $H'$, we can take these $3$ locality-rows during the deleting process, then $\gamma\leq s(r+1)-3$ which contradicts the fact that $\gamma\ge s(r+1)-2$. Therefore, $n'\neq 6$. With a similar discussion, we have $n'\neq5$ either. Therefore, we have $n'=4$ and the supports of any two locality-rows are pairwise disjoint. Thus, $r=1$ and this contradicts the assumption that $r\nmid k$.
In conclusion, optimal LRCs with such parameters don't exist.

\begin{itemize}
  \item For the case $l=\lc k/r\rc+2$
\end{itemize}


If $r\mid k$, by Lemma \ref{r_mid_k}, the supports of all the locality-rows in $H$ must be pairwise disjoint. When $n'=5,6$, the support of $H'$ must intersect in some coordinates according to their weight distribution. Thus $n'=4$ and the supports of the locality-rows are pairwise disjoint with weight exactly $r+1$. Then one can easily find two columns covered by a same locality-row linearly dependent, which contradicts the fact that $d=4$. Therefore, $r\nmid k$.

Assume $k=sr+t$, where $1\leq t\leq 3$, with the same proof as the case when $l=\lc k/r\rc+1$ in Subsection \ref{d=3}, we can obtain $s=1$ or $s\ge 2$ with $t=1$ by discussing the relationship between the intersection of coordinates of the $s$ deleted locality-rows and the intersection of any two rows of $H'$.

\subsection{$5\leq d\leq 8$ and $H'$ contains more than three rows}
%
In this subsection, $H'$ is a full rank parity-check matrix of a quaternary $[n',1,n']$ $(n'\ge 5)$ code. Since $H'$ is an $(n'-1)\times n'$ matrix and $\mathcal{C'}$ has $d'=d$, we have $n'=d'=d$ and $m'=n'-1=d-1=n-k-\lc k/r \rc+1$. Besides, the number of the deleted columns is $\gamma=n-n'=k+\lc k/r\rc-2$. Since each of the locality-rows has weight at most $r+1$, then
\begin{flalign}\label{page_14}
\gamma=&k+\lc k/r\rc-2\le (\lc k/r\rc-1)(r+1),\nonumber\\
\text{i.e.                } ~~~~~~~~&k-r\cdot\lc k/r\rc+r\le 1.
\end{flalign}

If $r\mid k$, (\ref{page_14}) implies that $r=1$. By Lemma \ref{r_mid_k}, we have $(r+1)\mid n$. Then $n=2l$ and $d=n-k-\lc k/r\rc +2=2(l-k+1)$. Since $r\mid (k-1)$, we have $5\leq d=2(l-k+1)\leq2q=8$ by Lemma \ref{singleton_defect}. Hence $2\leq l-k\leq 3$. Moreover, since the supports of any two locality-rows of the parity-check matrix are pairwise disjoint and have size $r+1$, $H$ has the form as $(\ref{H})$.

\begin{itemize}
  \item For the case $l=k+2$
\end{itemize}

We have $n=2k+4$ and $d=6$. Applying Lemma \ref{keylemma}.1 with $u=n-k-l=2$, we have $l=k+2\leq\frac{4^2-1}{4-1}=5$. Hence, $k=2$ or $k=3$. The following two parity-check matrices give optimal constructions for $k=2$ and $k=3$ respectively,
\begin{flalign}\label{2k+4}
H=\left(\begin{array}{cccccccc}1&1&0&0&0&0&0&0\\ 0&0&1&1&0&0&0&0\\0&0&0&0&1&1&0&0\\0&0&0&0&0&0&1&1\\ \hline 0&1&0&1&0&1&0&0 \\ 0&\omega&0&1&0&0&0&1 \end{array}\right),~~~~
H=\left(\begin{array}{cccccccccc}1&1&0&0&0&0&0&0&0&0\\ 0&0&1&1&0&0&0&0&0&0\\0&0&0&0&1&1&0&0&0&0\\0&0&0&0&0&0&1&1&0&0\\0&0&0&0&0&0&0&0&1&1\\ \hline 0&1&0&1&0&\omega&0&\omega&0&0 \\0&\omega&0&1&0&1&0&0&0&1 \end{array}\right).
\end{flalign}

\begin{itemize}
  \item For the case $l=k+3$
\end{itemize}

We have $n=2k+6$ and $d=8$. Applying Lemma \ref{keylemma}.3 with $u=3$, we have $l=k+3\leq m_2(2,4)=q+2=6$. Hence, $k=2$ or $k=3$. The following two parity-check matrices give optimal constructions for $k=2$ and $k=3$ respectively,

\begin{flalign}\label{2k+6}
H=\left(\begin{array}{cccccccccc}1&1&0&0&0&0&0&0&0&0\\ 0&0&1&1&0&0&0&0&0&0\\0&0&0&0&1&1&0&0&0&0\\0&0&0&0&0&0&1&1&0&0\\0&0&0&0&0&0&0&0&1&1\\ \hline 0&1&0&1&0&\omega&0&\omega&0&0 \\0&\omega&0&1&0&1&0&0&0&1\\0&0&0&1&0&0&0&\omega&0&1 \end{array}\right),
~~~~
H=\left(\begin{array}{cccccccccccc}1&1&0&0&0&0&0&0&0&0&0&0\\ 0&0&1&1&0&0&0&0&0&0&0&0\\0&0&0&0&1&1&0&0&0&0&0&0\\0&0&0&0&0&0&1&1&0&0&0&0\\0&0&0&0&0&0&0&0&1&1&0&0\\0&0&0&0&0&0&0&0&0&0&1&1\\ \hline 0&1&0&1&0&\omega&0&\omega&0&0&0&0 \\0&\omega&0&1&0&1&0&0&0&1&0&0\\0&0&0&1&0&0&0&\omega&0&1&0&1 \end{array}\right).
\end{flalign}

If $r\nmid k$, then $r\ge 2$. Set $k=sr+t$, where $1\leq t\leq r-1$. By (\ref{page_14}), we have $t=1$. This implies $k=sr+1$ and $\gamma=s(r+1)$. Since we need to delete $\lc k/r\rc-1=s\ge1$ locality-rows, each of which has weight at most $r+1$. Thus, these deleted $s$ rows must have disjoint supports with weight $r+1$. Next, we divide our discussion into two cases: $s=1$ and $s\ge 2$.

\textbf{Case} $\mathbf{s=1}$: We have $k=r+1$ and $d=n-k=n'> 4$. By Hao et al.'s arguments in Section III.B of \cite{HXC2017}, $\mathcal{C}$ and $\mathcal{C}^\bot$ are both near MDS codes. Now, we use the following lemma to determine the range of parameter $d$.

\begin{lem}\cite[Proposition 6.2]{DSL1995}\label{near_mds_1}
If $\mathcal{C}$ is an $[n,k,n-k]$ near MDS code over $\mathbb{F}_q$, $q>3$, with $k\ge3$, then $n\leq 2q+k-2$.
\end{lem}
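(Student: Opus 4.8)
The bound $n\le 2q+k-2$ is equivalent to $n-k\le 2q-2$, i.e. to the strict inequality $n-k<2q-1$, and an $[n,k,n-k]$ near-MDS code is in particular an $[n,k,n-k]$ almost MDS code; so under the hypotheses $q>3$, $k\ge 3$ the statement is an immediate consequence of Lemma~\ref{amds}. Since the near-MDS bound is classically proved independently of the (later, more general) almost-MDS bound, let me instead describe the self-contained route, which reduces the general case to dimension $3$.

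\emph{Step 1 (finite-geometry reformulation).} Fix a generator matrix $G$ of $\mathcal C$ and let $\mathcal S\subseteq PG(k-1,q)$ be the set of $n$ points determined by its columns. A codeword $xG$ has weight $n-|\mathcal S\cap H_x|$, where $H_x$ is the hyperplane $x^{\perp}$, and a weight-$w$ codeword of $\mathcal C^{\perp}$ is a linearly dependent $w$-subset of the columns of $G$; hence $d(\mathcal C)=n-k$ says that no hyperplane of $PG(k-1,q)$ contains more than $k$ points of $\mathcal S$, and $d(\mathcal C^{\perp})=k$ says that the points of $\mathcal S$ are pairwise distinct and every $k-1$ of them are projectively independent.

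\emph{Step 2 (reduction to dimension $3$).} We may assume $n-k\ge 3$, for otherwise $n\le k+2\le 2q+k-2$ already. Pick $k-3$ points $A_1,\dots,A_{k-3}$ of $\mathcal S$; they span a flat $\Pi$ of projective dimension $k-4$, and $\mathcal S\cap\Pi=\{A_1,\dots,A_{k-3}\}$, since a further point of $\mathcal S$ in $\Pi$ would give $k-1$ dependent points. Project from $\Pi$ onto $PG(2,q)$. The images of the remaining $n-k+3$ points are pairwise distinct (two coinciding images would make those two points together with $A_1,\dots,A_{k-3}$ a dependent $(k-1)$-set), and no four of them are collinear (four collinear images would put $A_1,\dots,A_{k-3}$ together with the four preimages — that is $k+1$ points of $\mathcal S$ — in one hyperplane). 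In coding terms this configuration is exactly the code $\mathcal C^{*}$ obtained by shortening $\mathcal C$ at the $k-3$ coordinates of the $A_i$: it has length $n^{*}=n-k+3$, minimum distance $\ge d(\mathcal C)=n-k$ (distance does not decrease under shortening), and dimension exactly $3$ — the dimension fails to collapse because $d(\mathcal C^{\perp})=k>k-3$ forces no nonzero dual codeword to be supported on the $k-3$ deleted positions. By the Singleton bound $d(\mathcal C^{*})\le n^{*}-2=n-k+1$, so $\mathcal C^{*}$ is an $[\,n-k+3,\,3\,]_q$ code that is either MDS or almost MDS.

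\emph{Step 3 (finish, and the main obstacle).} If $\mathcal C^{*}$ is MDS, Theorem~\ref{MDS}(1) gives $d(\mathcal C^{*})=n-k+1\le q$, hence $n\le q+k-1\le 2q+k-2$. If $\mathcal C^{*}$ is almost MDS, its $n-k+3$ columns form a plane $(n-k+3;3)$-arc, so $n-k+3\le m_3(2,q)\le 2q+1$ by the classical upper bound on plane $3$-arcs, giving $n\le 2q+k-2$; this bound $m_3(2,q)\le 2q+1$ is precisely what fails at $q=3$ (there the complement of a line is a $(2q+3;3)$-arc), which is why the hypothesis $q>3$ is needed. The substantive step is Step 2 — in particular verifying that the projected/shortened configuration retains full dimension $3$ and minimum distance at least $n-k$; once the problem has been pushed down to $PG(2,q)$, the planar $3$-arc bound $m_3(2,q)\le 2q+1$ (equivalently, the $k=3$ case of Lemma~\ref{amds}, or Theorem~\ref{MDS} in the MDS subcase) does the rest.
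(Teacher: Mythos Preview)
The paper does not supply a proof of this lemma; it is quoted verbatim from \cite{DSL1995}. Your opening sentence is already a complete argument within the paper's own toolkit: a near-MDS code is in particular almost MDS, and rewriting $n\le 2q+k-2$ as $n-k<2q-1$ makes the statement identical to Lemma~\ref{amds}. (In fact this shows that, as the paper is organised, Lemma~\ref{near_mds_1} is redundant once Lemma~\ref{amds} has been stated.)

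Your longer Steps~1--3 are also correct and reproduce the classical Dodunekov--Landgev reduction: the near-MDS hypothesis $d(\mathcal C^{\perp})=k$ is exactly what forces the $k-3$ chosen columns to be independent, so shortening there lands on a $[\,n-k+3,\,3,\,\ge n-k\,]$ code, i.e.\ an MDS code or a plane configuration with at most three points per line. Just be aware that this is a \emph{reduction} rather than a self-contained proof: the terminal input $m_3(2,q)\le 2q+1$ for $q>3$ is precisely the $k=3$ instance of Lemma~\ref{amds}, as you note yourself. So Steps~1--3 show ``general $k$ near-MDS $\Rightarrow$ $k=3$ almost-MDS'', whereas invoking Lemma~\ref{amds} directly handles all $k\ge3$ in one stroke.
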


Therefore, we have $d=n-k\le 6$, the minimum distance of $H$ is $d^\bot=r+1=k\leq 6$. Thus all the possible parameters of quaternary near MDS codes are $[n,k,n-k]$, for $3\le k\le 6$ and $5\le n-k\le 6$.

There exist optimal LRCs of locality $r=k-1$ with all the above parameters. The following $H$ gives a parity-check matrix of an optimal quaternary $[12,6,6]$ LRC with $r=5$:
\begin{flalign}\label{near_mds}
H=\left(\begin{array}{llllllllllll}
1~&1~&1~&1~&1~&1~&0~&0~&0&0&0&0\\
0&0&0&0&0&0&1&1&1&1&1&1\\
\hline
0&1&0&0&\omega&1&0&0&1&0&1&\omega^2\\
0&0&\omega^2&0&1&0&0&0&1&\omega&1&\omega\\
0&0&0&\omega^2&\omega&\omega^2&0&0&\omega^2&\omega^2&0&1\\
0&0&0&0&\omega&\omega&0&\omega^2&\omega^2&1&\omega&\omega
\end{array}\right).
\end{flalign}
All the optimal LRCs with other parameters can be obtained by puncturing or shortening this $[12,6,6]$ code. Moreover, their localities satisfy $r=k-1$ as well.

\textbf{Case} $\mathbf{s\ge 2}$: Since the deleted $s$ locality-rows can be chosen arbitrarily and they have disjoint supports of weight $r+1$, we know that all locality-rows of $H$ are pairwise disjoint, then $H$ has the form $(\ref{H})$. Hence, $(r+1)\mid n$ and we have $n=l(r+1)$. Then, $d=(r+1)(l-s)$. Since $r\mid (k-1)$, by Lemma \ref{singleton_defect}, we have $d=(r+1)(l-s)\leq 2q=8$. Combining with $r\ge2$, we also have $l-s\le 2$.

\begin{itemize}
  \item For the case $l=s+1$
\end{itemize}

We have $5\leq d=r+1\leq 8$, i.e. $4\leq r\leq 7$. By Lemma \ref{abount_H''}, after deleting any $\lc k/r \rc- 2$ locality-rows and all the columns they covered, we obtain $[2(r+1),r+1,r+1]$ almost MDS code. Applying Lemma $\ref{amds}$ with $q=4$, we have $r+1<7$. Therefore, $4\leq r\leq 5$.

When $r=4$, we have $k=4s+1$, $n=5s+5$ and $d=5$. Applying Lemma \ref{keylemma}.1 with $u=n-k-(s+1)=3$, we have $(s+1){5\choose2}\leq\frac{4^3-1}{4-1}$. Thus, $s\leq1$, a contradiction.

When $r=5$, we have $k=5s+1$, $n=6s+6$ and $d=6$. Applying Lemma \ref{keylemma}.2 with $u=n-k-(s+1)=4$, we have $(s+1){6\choose2}+2\cdot{5\choose 3}+{5\choose 2}\leq\frac{4^4-1}{4-1}$. Thus, $s\leq2$. If $s=2$, we have $n=18$, $k=11$ and $d=6$. Next, we will show that $[18,11,6]$ codes with locality $r=5$ do not exist.

Suppose that there exists such a code with parity-check matrix $H_0$, by deleting any one locality-row and its corresponding columns from $H_0$, the remaining submatrix is a parity-check matrix of a $[12,6,6]$ code according to Lemma \ref{abount_H''}. W.l.o.g., we suppose that $H_0$ has the following form:
\begin{flalign}\label{[12,6,6],WLOG}
H_0=\left(\begin{array}{cccccccccccccccccc}1&1&1&1&1&1&0&0&0&0&0&0&0&0&0&0&0&0\\0&0&0&0&0&0&1&1&1&1&1&1&0&0&0&0&0&0\\0&0&0&0&0&0&0&0&0&0&0&0&1&1&1&1&1&1\\ \hline
0&1&0&0&0&\multirow{4}{*}{$a$} &0&\multirow{4}{*}{$b_1$} &\multirow{4}{*}{$b_2$} &\multirow{4}{*}{$b_3$} & \multirow{4}{*}{$b_4$}&\multirow{4}{*}{$b_5$} &0&\multirow{4}{*}{$c_1$} &\multirow{4}{*}{$c_2$} &\multirow{4}{*}{$c_3$} & \multirow{4}{*}{$c_4$}& \multirow{4}{*}{$c_5$}\\ 0&0&1&0&0& &0& & & & & &0& & & & & \\ 0&0&0&1&0& &0& & & & & &0& & & & & \\ 0&0&0&0&1& &0& & & & & &0& & & & & \end{array}\right),
\end{flalign}
where $a,b_1,\ldots,b_5,c_1,\ldots,c_5\in \mathbb{F}_4^4$. We have the following claim.

\begin{claim}\label{[12,6,6] unique}
Assume that $S_1$, $S_2$ are both parity-check matrices of a quaternary $[12,6,6]$ code, which have the following forms, respectively
\begin{flalign}\label{[12,6,6]}
S_1=\left(\begin{array}{cccccccccccc}1&1&1&1&1&1&0&0&0&0&0&0\\0&0&0&0&0&0&1&1&1&1&1&1\\
0&1&0&0&0&\multirow{4}{*}{$\mathbf{a}$} &0&\multirow{4}{*}{$\mathbf{b_1}$} &\multirow{4}{*}{$\mathbf{b_2}$} &\multirow{4}{*}{$\mathbf{b_3}$} & \multirow{4}{*}{$\mathbf{b_4}$}&\multirow{4}{*}{$\mathbf{b_5}$}\\ 0&0&1&0&0& &0& & & & &  \\ 0&0&0&1&0& &0& & & & &  \\ 0&0&0&0&1& &0& & & & &  \end{array}\right),
~S_2=\left(\begin{array}{cccccccccccc}1&1&1&1&1&1&0&0&0&0&0&0\\0&0&0&0&0&0&1&1&1&1&1&1\\
0&1&0&0&0&\multirow{4}{*}{$\mathbf{a}$} &0&\multirow{4}{*}{$\mathbf{c_1}$} &\multirow{4}{*}{$\mathbf{c_2}$} &\multirow{4}{*}{$\mathbf{c_3}$} & \multirow{4}{*}{$\mathbf{c_4}$}&\multirow{4}{*}{$\mathbf{c_5}$}\\ 0&0&1&0&0& &0& & & & &  \\ 0&0&0&1&0& &0& & & & &  \\ 0&0&0&0&1& &0& & & & &  \end{array}\right),
\end{flalign}
where $\mathbf{a},\mathbf{b_1},\ldots,\mathbf{b_5},\mathbf{c_1},\ldots,\mathbf{c_5}\in \mathbb{F}_4^4$. Then $b_i=\lambda c_{j_i}$, where $1\leq i\leq 5$, $\lambda\in\mathbb{F}_4^{\ast}$ and $\{j_1,\ldots,j_5\}=\{1,\ldots,5\}$.

\end{claim}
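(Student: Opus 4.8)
The plan is to recast the claim as a finite-geometry uniqueness statement over $\mathbb{F}_4$. First I would observe that every column of $S_1$ has the sum of its two locality-row coordinates equal to $1$ (it is $1+0$ on columns $1$--$6$ and $0+1$ on columns $7$--$12$); hence all $12$ columns lie in a common affine hyperplane of $\mathbb{F}_4^6$, which I identify with $\mathbb{F}_4^5$. Under this identification the six ``left'' columns become the points $P_0=\mathbf{0},P_1=e_1,\dots,P_4=e_4,P_5=a$ lying in one affine $4$-flat (say $t=0$), and the six ``right'' columns become $Q_0=\mathbf{0},Q_1=b_1,\dots,Q_5=b_5$ (and, for $S_2$, $Q_j=c_j$) lying in the parallel $4$-flat $t=1$. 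Since a linear dependence among vectors on an affine hyperplane missing the origin is automatically an affine dependence, the condition ``$d=6$'' is equivalent to: every $5$ of these $12$ points are affinely independent. The claim then amounts to showing that, given the prescribed left configuration, the right configuration $\{b_1,\dots,b_5\}$ is unique up to a common scalar and a permutation; applying this to both $S_1$ and $S_2$ and composing the two equivalences produces the asserted relation $b_i=\lambda c_{j_i}$.

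Next I would split a $5$-subset by how many of its points lie on each of the two $4$-flats. The splits $5+0$ and $0+5$ say exactly that $\{P_0,\dots,P_5\}$ and $\{Q_0,\dots,Q_5\}$ are themselves sets of six points in general position (any five affinely independent); a short determinant computation turns the left one into $a\in(\mathbb{F}_4^\ast)^4$ with $a_1+a_2+a_3+a_4\ne 1$, and similarly for each $b_i$. The splits $1+4$ and $4+1$ add nothing new, since affine independence there reduces to general position within one of the two $6$-sets. The decisive cases are $2+3$ and $3+2$: writing such a $5$-set as $j$ left points and $5-j$ right points, affine independence of the whole set is equivalent to the transversality condition that the direction space of the affine flat spanned by the chosen left points meets the direction space of the affine flat spanned by the chosen right points only in $\mathbf{0}$. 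Because a linear subspace is invariant under scalar multiplication, all of these conditions are preserved under replacing $(b_1,\dots,b_5)$ by $(\lambda b_1,\dots,\lambda b_5)$ and under permuting the $b_i$; this is exactly the symmetry group appearing in the claim, so it does suffice to prove uniqueness up to it.

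Concretely, the $2+3$ conditions state that every difference $b_i-b_{i'}$ (including $b_i-\mathbf{0}=b_i$) avoids every $2$-dimensional direction space of a $3$-subset of $\{\mathbf{0},e_1,\dots,e_4,a\}$, and the $3+2$ conditions state, dually, that every difference among $\{\mathbf{0},e_1,\dots,e_4,a\}$ avoids every $2$-dimensional direction space of a $3$-subset of $\{\mathbf{0},b_1,\dots,b_5\}$. I would use the first family, together with the general-position constraint on $\{\mathbf{0},b_1,\dots,b_5\}$, to force $\{b_1,\dots,b_4\}$ into a standard frame: up to the scaling and permutation above one is pushed into the situation $\{b_1,\dots,b_4\}=\{\mu e_{\sigma(1)},\dots,\mu e_{\sigma(4)}\}$, since the direction $2$-flats coming from the ``coordinate-type'' triples of the left configuration are very restrictive. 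The remaining conditions should then leave a single choice of $b_5$, after which the claim follows. The main obstacle is precisely this last verification — ruling out every competing configuration of $b_5$, i.e.\ a bounded search over the admissible points of $\mathbb{F}_4^4$, which can be done by hand or checked mechanically. Here the second, ``dual'' family of transversality conditions is essential: it is what prevents rescaling the $b_i$ independently of one another and of the $e_j$ and $a$, and hence what collapses all completions into the single orbit described in the claim.
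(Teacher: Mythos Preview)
Your approach is genuinely different from the paper's, and it contains a concrete error at the key step.

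The paper does not try to classify the admissible right blocks directly. Instead it invokes the known fact (Dodunekov--Landgev) that the quaternary $[12,6,6]$ code is unique up to monomial equivalence, so $S_2=LS_1R\Pi$ for some invertible $L$, diagonal $R$, and permutation $\Pi$. Because $S_1$ and $S_2$ share columns $1$--$5$, column $6$ (the vector $a$), and column $7$, one reads off the entries of $L$ from these seven columns; a small linear system (using that any five columns of $S_1$ are independent) forces $L$ to be diagonal, and the conclusion $b_i=\lambda c_{j_i}$ with a common $\lambda$ follows immediately. The whole argument is a page of linear algebra once the uniqueness of the code is taken as input.

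Your plan, by contrast, tries to re-derive that uniqueness in the given normal form via affine transversality constraints. The framework is sound up to the point where you list the $3{+}2$ and $2{+}3$ conditions, but the structural claim you build on it is false. Your own $3$-left/$2$-right conditions say that for every coordinate triple $\{P_0,P_j,P_k\}=\{0,e_j,e_k\}$ on the left and every pair $\{Q_0,Q_i\}=\{0,b_i\}$ on the right one must have $b_i\notin\langle e_j,e_k\rangle$. Running over all $j\neq k$ forces every $b_i$ to have Hamming weight at least~$3$ in $\mathbb{F}_4^4$. Hence no $b_i$ can be of the form $\mu e_{\sigma(i)}$, and the asserted reduction ``$\{b_1,\dots,b_4\}=\{\mu e_{\sigma(1)},\dots,\mu e_{\sigma(4)}\}$'' is impossible, not merely unproved. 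Since the rest of your argument (pinning down $b_5$ and invoking the dual family) is predicated on this normal form, the proof breaks here. If you want to salvage the geometric route you would need a completely different intermediate normalization, and in the end you would essentially be reproving the uniqueness of the $[12,6,6]$ code; it is far more economical to quote that uniqueness and argue as the paper does.
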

\begin{proof}[Proof of Claim \ref{[12,6,6] unique}]
In \cite{DSL1994}, the authors showed that the quaternary $[12,6,6]$ code is unique. Thus, the parity-check matrix of each quaternary $[12,6,6]$ can be obtained from $S$ by the following operations: row linear transformations, column permutations and multiplications of columns by an element of $\mathbb{F}_4^*$. Therefore, w.l.o.g., we can assume that each parity-check matrix $S'$ of the quaternary $[12,6,6]$ code has the form $S'=LSR$, for invertible matrices $L=(\ell_{i,j})_{1\leq i,j\leq 6}$ and $R=diag\{r_1,\ldots,r_{12}\}$, where $\ell_{i,j}\in\mathbb{F}_4$ and $r_t\in\mathbb{F}_4^\ast$ ($1\leq t\leq 12$).

If there exists another parity-check matrix $S'=(s_{i,j})_{1\leq i\leq 6,1\leq j\leq 12}$ of the form (\ref{[12,6,6]}) with $\{b_1,\ldots,b_5\}$ replaced by $\{c_1,\ldots,c_5\}$ and the same $a$, we can obtain the following equalities:
\begin{flalign*}
r_1\ell_{i,1}=s_{i,1},~1\leq i\leq 6;\\
r_2(\ell_{i,1}+\ell_{i,3})=s_{i,2},~1\leq i\leq 6;\\
r_3(\ell_{i,1}+\ell_{i,4})=s_{i,3},~1\leq i\leq 6;\\
r_4(\ell_{i,1}+\ell_{i,5})=s_{i,4},~1\leq i\leq 6;\\
r_5(\ell_{i,1}+\ell_{i,6})=s_{i,5},~1\leq i\leq 6;\\
r_7\ell_{i,2}=s_{i,7},~1\leq i\leq 6.
\end{flalign*}
According to the values of $s_{i,j}$, $L$ has the following form
\begin{flalign*}
L=\left(\begin{array}{cccccc}
r_1^{-1}&0&r_1^{-1}+r_2^{-1}&r_1^{-1}+r_3^{-1}&r_1^{-1}+r_4^{-1}&r_1^{-1}+r_5^{-1}\\
0&r_7^{-1}&0&0&0&0\\
0&0&r_2^{-1}&0&0&0\\
0&0&0&r_3^{-1}&0&0\\
0&0&0&0&r_4^{-1}&0\\
0&0&0&0&0&r_5^{-1}
\end{array}
\right).
\end{flalign*}
Denote $b_{i,j}$, $1\leq i\leq4$, $1\leq j \leq 5$ as the $i_{th}$ coordinate of vector $b_j$, then we can obtain:
\begin{flalign*}
(r_1^{-1}+r_2^{-1})b_{1,j}+(r_1^{-1}+r_3^{-1})b_{2,j}+(r_1^{-1}+r_4^{-1})b_{3,j}+(r_1^{-1}+r_5^{-1})b_{4,j}=0,\text{ for }1\leq j\leq 5.
\end{flalign*}
Take $(r_1^{-1}+r_t^{-1})$ ($2\leq t\leq 5$) as variables and $b_{i,j}$ as coefficients, the identities above can be considered as a system of linear equations. If these equations have non-zero solutions, then the rank of coefficient matrix is less than 4, which means that there exist $t_1,\ldots,t_4\in[5]$ such that $b_{t_1}$ is a linear combination of $b_{t_2},b_{t_3},b_{t_4}$. Then, consider the $5$ columns consisting of the $7_{th}$ and the $(7+t_i)_{th}$ ($1\leq i\leq 4$) column of $S$, they are linearly dependent, which contradicts the fact that $S$ is the parity-check matrix of a code with minimum distance 6. Therefore, we have $r_1=r_i$ for $2\leq i\leq 5$, and $L=diag(r_1^{-1},r_7^{-1},r_1^{-1},r_1^{-1},r_1^{-1},r_1^{-1})$. One can also show that $r_i=r_7$, for $8\leq i\leq 12$. Therefore, by $\sum_{j=1}^{6}l_{k,j}s_{j,i}=s_{1,i}$, we have $b_i=r_1 r_7^{-1}c_i$ for $1\leq i\leq 5$.
\end{proof}

Therefore, together with Lemma \ref{abount_H''}, $H_0$ in (\ref{[12,6,6],WLOG}) must satisfy additional conditions $b_i=\lambda c_{j_i}$ ($1\leq i\leq 5$), for some $\lambda\in\mathbb{F}_4^{\ast}$ and $\{j_1,\ldots,j_5\}=\{1,2,\ldots,5\}$. This contradicts the fact that $H_0$ is the parity-check matrix of a code with minimum distance 6. Therefore, $[18,11,6]$ codes with locality $r=5$ do not exist.

\begin{itemize}
  \item For the case $l=s+2$
\end{itemize}

We have $d=2(r+1)\leq 8$, i.e. $r\le 3$. By Lemma \ref{abount_H''}, after deleting any $\lc k/r \rc- 2$ locality-rows and all the columns they covered, we obtain $[3(r+1),r+1,2r+2]$ almost MDS code. Applying Lemma $\ref{amds}$ with $q=4$, we have $2r+2<7$. Therefore, $r=2$.

When $r=2$, we have $k=2s+1$, $n=3s+6$ and $d=6$. Applying Lemma \ref{keylemma}.2 with $u=n-k-(s+2)=3$, we have $(s+2)\cdot{3\choose 2}+2\cdot{3\choose 3}\leq \frac{4^3-1}{4-1}=21$. Therefore, $2\leq s\leq 4$. When $s=4$, following the idea of the proof of Lemma \ref{keylemma}, we convert the columns of $H$ into points of the $PG(2,4)$. Since $d=6$, any at most 5 columns of $H$ are linearly independent. Therefore, the points generated by 2 columns covered by different locality rows are different, and the point generated by 2 columns covered by one locality-row and the point generated by 3 columns covered by one locality-row are different. Meanwhile, by $l=6$, there are only $21-6\cdot{3\choose 2}=3$ points left in $PG(2,4)$ which can be generated by $3$ columns covered by one locality-row. By pigeonhole principle, there exist two points generated by two different sets of 3 columns, then there exist $5$ columns linearly dependent, which leads a contradiction. The following parity-check matrix gives optimal constructions for $s = 2$ and $s = 3$ respectively.
\begin{flalign}\label{3s+6_1}
H=\left(\begin{array}{llllllllllll}
1~~&1&1~~&0~~&0&0~~&0~~&0&0~~&0~~&0~~&0~\\
0&0&0&1&1&1&0&0&0&0&0&0\\
0&0&0&0&0&0&1&1&1&0&0&0\\
0&0&0&0&0&0&0&0&0&1&1&1\\
\hline
0&\omega^2&1&0&\omega^2&1&0&\omega^2&1&0&0&0\\
0&0&\omega&0&\omega&\omega&0&\omega&0&0&1&0\\
0&1&1&0&1&0&0&\omega^2&0&0&0&1
 \end{array}\right),
\end{flalign}
\begin{flalign}\label{3s+6_2}
H=\left(\begin{array}{lllllllllllllll}
1~~&1&1~~&0~~&0&0~~&0~~&0&0~~&0~~&0~~&0~~&0~~&0~~&0\\
0&0&0&1&1&1&0&0&0&0&0&0&0&0&0\\
0&0&0&0&0&0&1&1&1&0&0&0&0&0&0\\
0&0&0&0&0&0&0&0&0&1&1&1&0&0&0\\
0&0&0&0&0&0&0&0&0&0&0&0&1&1&1\\
\hline
0&\omega^2&1&0&\omega^2&1&0&\omega^2&1&0&0&0&0&1&\omega\\
0&0&\omega&0&\omega&\omega&0&\omega&0&0&1&0&0&0&\omega^2\\
0&1&1&0&1&0&0&\omega^2&0&0&0&1&0&1&\omega^2
\end{array}\right).
\end{flalign}

\section{Concluding Remarks}\label{section_5}
In this paper, we determine all possible parameters of the optimal quaternary LRCs and give corresponding explicit constructions. During the discussion, we use tools from finite geometry to give some necessary conditions for the existence of optimal quaternary LRCs. We believe that it is helpful to establish a connection between the structures of parity-check matrices for LRCs and combinatorial structures, which will be reported in our future work.\\

\textbf{Comment: }The Chinese version of this paper will appear in SCIENTIA SINICA Mathematica (DOI: 10.1360/SSM-2022-0041).

\bibliographystyle{abbrv}
\bibliography{F4_LRC_2022}

\begin{thebibliography}{10}

\bibitem{BT2017}
A.~Barg, I.~Tamo, and S.~Vl\u{a}da\c{t}.
\newblock Locally recoverable codes on algebraic curves.
\newblock {\em IEEE Trans. Inform. Theory}, 63(8):4928--4939, 2017.

\bibitem{DF2009}
A.~A. Davydov, G.~Faina, S.~Marcugini, and F.~Pambianco.
\newblock On sizes of complete caps in projective spaces {${\rm PG}(n,q)$} and
  arcs in planes {${\rm PG}(2,q)$}.
\newblock {\em J. Geom.}, 94(1-2):31--58, 2009.

\bibitem{DM1996}
M.~A. de~Boer.
\newblock Almost {MDS} codes.
\newblock {\em Des. Codes Cryptogr.}, 9(2):143--155, 1996.

\bibitem{DSL1995}
S.~Dodunekov and I.~Landgev.
\newblock On near {MDS} codes.
\newblock {\em J. Geom.}, 54(1-2):30--43, 1995.

\bibitem{DSL1994}
S.~Dodunekov and I.~Landgev.
\newblock On the quaternary {$[11,6,5]$} and {$[12,6,6]$} codes.
\newblock In {\em D. Gollmann (Ed.), Applications of Finite Fields}, volume~59
  of {\em IMA Conference Series}, pages 75--84. Clarendon Press, Oxford, 1996.

\bibitem{EW2016}
T.~Ernvall, T.~Westerb\"{a}ck, R.~Freij-Hollanti, and C.~Hollanti.
\newblock Constructions and properties of linear locally repairable codes.
\newblock {\em IEEE Trans. Inform. Theory}, 62(3):1129--1143, 2016.

\bibitem{FMY2014}
M.~Forbes and S.~Yekhanin.
\newblock On the locality of codeword symbols in non-linear codes.
\newblock {\em Discrete Math.}, 324:78--84, 2014.

\bibitem{GHS2012}
P.~Gopalan, C.~Huang, H.~Simitci, and S.~Yekhanin.
\newblock On the locality of codeword symbols.
\newblock {\em IEEE Trans. Inform. Theory}, 58(11):6925--6934, 2012.

\bibitem{HSX2019}
J.~Hao, K.~W. Shum, S.~Xia, and Y.~Yang.
\newblock Classification of optimal ternary (r, {\(\delta\)})-locally
  repairable codes attaining the singleton-like bound.
\newblock In {\em {IEEE} International Symposium on Information Theory}, pages
  2828--2832, 2019.

\bibitem{HX2016}
J.~Hao, S.~Xia, and B.~Chen.
\newblock Some results on optimal locally repairable codes.
\newblock In {\em IEEE International Symposium on Information Theory}, pages
  440--444, 2016.

\bibitem{HXC2017}
J.~Hao, S.~Xia, and B.~Chen.
\newblock On optimal ternary locally repairable codes.
\newblock In {\em {IEEE} International Symposium on Information Theory}, pages
  171--175, 2017.

\bibitem{HXP2016}
J.~Hao, S.~Xia, K.~W. Shum, B.~Chen, and F.~Fu.
\newblock Bounds and constructions of locally repairable codes: Parity-check
  matrix approach.
\newblock {\em IEEE Trans. Inform. Theory}, 66(12):7465--7474, 2020.

\bibitem{HT2016}
J.~W.~P. Hirschfeld and J.~A. Thas.
\newblock {\em General {G}alois geometries}.
\newblock Springer Monographs in Mathematics. Springer, London, 2016.

\bibitem{HSX2012}
C.~Huang, H.~Simitci, Y.~Xu, A.~Ogus, B.~Calder, P.~Gopalan, J.~Li, and
  S.~Yekhanin.
\newblock Erasure coding in windows azure storage.
\newblock In {\em $USENIX$ Annual Technical Conference}, pages 15--26, 2012.

\bibitem{H2010}
W.~C. Huffman and V.~Pless.
\newblock {\em Fundamentals of error-correcting codes}.
\newblock Cambridge University Press, Cambridge, 2003.

\bibitem{M2007}
G.~E. Moorhouse.
\newblock {\em Incidence geometry}.
\newblock University of Wyoming, 2007.

\bibitem{OFD2011}
F.~E. Oggier and A.~Datta.
\newblock Self-repairing homomorphic codes for distributed storage systems.
\newblock In {\em INFOCOM, 2011 Proceedings IEEE}, pages 1215--1223, 2011.

\bibitem{PD2014}
D.~S. Papailiopoulos and A.~G. Dimakis.
\newblock Locally repairable codes.
\newblock {\em IEEE Trans. Inform. Theory}, 60(10):5843--5855, 2014.

\bibitem{PLD2012}
D.~S. Papailiopoulos, J.~Luo, A.~G. Dimakis, C.~Huang, and J.~Li.
\newblock Simple regenerating codes: Network coding for cloud storage.
\newblock In {\em INFOCOM, 2012 Proceedings IEEE}, pages 2801--2805, 2012.

\bibitem{TP2003}
T.~Penttila.
\newblock Configurations of ovals.
\newblock {\em J. Geom.}, 76(1--2):233--255, 2003.

\bibitem{SAP2013}
M.~Sathiamoorthy, M.~Asteris, D.~S. Papailiopoulos, A.~G. Dimakis, R.~Vadali,
  S.~Chen, and D.~Borthakur.
\newblock Xoring elephants: Novel erasure codes for big data.
\newblock {\em in Proceedings of the VLDB Endowment}, 6(5):325--336, 2013.

\bibitem{SB1959}
B.~Segre.
\newblock On complete caps and ovaloids in three-dimensional {G}alois spaces of
  characteristic two.
\newblock {\em Acta Arith.}, 5:315--332 (1959), 1959.

\bibitem{SR2013}
N.~Silberstein, A.~S. Rawat, O.~O. Koyluoglu, and S.~Vishwanath.
\newblock Optimal locally repairable codes via rank-metric codes.
\newblock In {\em IEEE International Symposium on Information Theory}, pages
  1819--1823, 2013.

\bibitem{SWS2014}
W.~Song, S.~H. Dau, C.~Yuen, and T.~J. Li.
\newblock Optimal locally repairable linear codes.
\newblock {\em {IEEE} Journal on Selected Areas in Communications},
  32(5):1019--1036, 2014.

\bibitem{TB2014}
I.~Tamo and A.~Barg.
\newblock A family of optimal locally recoverable codes.
\newblock {\em IEEE Trans. Inform. Theory}, 60(8):4661--4676, 2014.

\bibitem{TB2016}
I.~Tamo, A.~Barg, S.~Goparaju, and R.~Calderbank.
\newblock Cyclic {LRC} codes, binary {LRC} codes, and upper bounds on the
  distance of cyclic codes.
\newblock {\em Int. J. Inf. Coding Theory}, 3(4):345--364, 2016.

\bibitem{TP2016}
I.~Tamo, D.~S. Papailiopoulos, and A.~G. Dimakis.
\newblock Optimal locally repairable codes and connections to matroid theory.
\newblock {\em IEEE Trans. Inform. Theory}, 62(12):6661--6671, 2016.

\bibitem{AZ2015}
A.~Wang and Z.~Zhang.
\newblock An integer programming-based bound for locally repairable codes.
\newblock {\em IEEE Trans. Inform. Theory}, 61(10):5280--5294, 2015.

\end{thebibliography}
%

\end{document}